\newtheorem{theorem}{Theorem}[section]
\newtheorem{lemma}[theorem]{Lemma}
\newtheorem{corollary}[theorem]{Corollary}
\newtheorem{observation}[theorem]{Observation}
\theoremstyle{definition}
\newtheorem{definition}[theorem]{Definition}
\theoremstyle{remark}
\newtheorem{remark}[theorem]{Remark}
\newenvironment{reminder}[1]{\medskip
	\noindent {\bf Reminder of #1.}\em}{
	\smallskip}
\algnewcommand{\IfThen}[2]
{\State \algorithmicif\ #1\ \algorithmicthen\ #2}
\def\moverlay{\mathpalette\mov@rlay}
\def\mov@rlay#1#2{\leavevmode\vtop{%
	\baselineskip\z@skip \lineskiplimit-\maxdimen
	\ialign{\hfil$\m@th#1##$\hfil\cr#2\crcr}}}
\newcommand{\charfusion}[3][\mathord]{
	#1{\ifx#1\mathop\vphantom{#2}\fi
		\mathpalette\mov@rlay{#2\cr#3}
	}
	\ifx#1\mathop\expandafter\displaylimits\fi}
\def\poly{\mathrm{poly}}
\def\Trunk{\mathrm{Trunk}}
\def\Leaf{\mathrm{Leaf}}
\def\pdeg{\mathrm{pdeg}}
\def\Hi{\mathrm{Hi}}
\def\ET{\mathrm{ET}}
\def\dep{\mathrm{dep}}
\def\avoid{\mathrm{avoid}}
\def\DecTree{\mathrm{DecTree}}
\def\seg{\mathrm{seg}}
\def\parent{\mathrm{parent}}
\def\children{\mathrm{children}}
\def\root{\mathrm{root}}
\def\Pans{{P_{\rm ans}}}
\def\alphanext{\alpha_{\rm next}}
\def\afinal{\alpha_{\rm last}}
\def\istar{{i_\star}}
\def\caI{\mathcal{I}}
\def\caO{\mathcal{O}}
\def\caP{\mathcal{P}}
\def\caS{\mathcal{S}}
\def\caT{\mathcal{T}}
\def\caU{\mathcal{U}}
\def\ie{i.e.\@\xspace}
\def\eg{e.g.\@\xspace}
\def\etal{et al.\@\xspace}
\renewcommand{\varnothing}{\emptyset}
\newcommand{\SpaceQueryStretch}[3]{
	\begin{array}{rl}
		\text{space complexity} & {#1,}\\
		\text{query complexity} & {#2,}\\
		\text{stretch} & {#3}
	\end{array}
}
\begin{document}
	
	\title{Approximate Distance Oracles Subject to Multiple Vertex Failures\thanks{This work has been supported in part by the Zhongguancun Haihua Institute for Frontier Information Technology.}}
	
	\author[1]{Ran Duan \thanks{duanran@mail.tsinghua.edu.cn.}}
	\author[1]{Yong Gu \thanks{guyong12@mails.tsinghua.edu.cn.}}
	\author[1]{Hanlin Ren \thanks{rhl16@mails.tsinghua.edu.cn.}}
	
	\affil[1]{Institute for Interdisciplinary Information Sciences, Tsinghua University}
	
	\maketitle
	
	\begin{abstract}
		Given an undirected graph $G=(V,E)$ of $n$ vertices and $m$ edges with weights in $[1,W]$, we construct vertex sensitive distance oracles (VSDO), which are data structures that preprocess the graph, and answer the following kind of queries: Given a source vertex $u$, a target vertex $v$, and a batch of $d$ failed vertices $D$, output (an approximation of) the distance between $u$ and $v$ in $G-D$ (that is, the graph $G$ with vertices in $D$ removed). An oracle has stretch $\alpha$ if it always holds that $\delta_{G-D}(u,v)\le\tilde{\delta}(u,v,D)\le\alpha\cdot\delta_{G-D}(u,v)$, where $\delta_{G-D}(u,v)$ is the actual distance between $u$ and $v$ in $G-D$, and $\tilde{\delta}(u,v,D)$ is the distance reported by the oracle.
		
		In this paper we construct efficient VSDOs for any number $d$ of failures. For any constant $c\geq 1$, we propose two oracles:\begin{itemize}
			\item The first oracle has size $n^{2+1/c}(\log n/\epsilon)^{O(d)}\cdot \log W$, answers a query in $\poly(\log n,d^c, \\ \log\log W,\epsilon^{-1})$ time, and has stretch $1+\epsilon$, for any constant $\epsilon>0$.
			\item The second oracle has size $n^{2+1/c}\poly(\log (nW),d)$, answers a query in $\poly(\log n,d^c,\\ \log\log W)$ time, and has stretch $\poly(\log n,d)$.
		\end{itemize}
		Both of these oracles can be preprocessed in time polynomial in their space complexity. These results are the first approximate distance oracles of poly-logarithmic query time for any constant number of vertex failures in general undirected graphs. Previously there are $(1+\epsilon)$-approximate $d$-\emph{edge} sensitive distance oracles [Chechik \etal 2017] answering distance queries when $d$ edges fail, which have size $O(n^2(\log n/\epsilon)^d\cdot d\log W)$ and query time $\poly(\log n, d, \log\log W)$.
	\end{abstract}

    \thispagestyle{empty}
	\clearpage
	\pagestyle{plain}
	\pagenumbering{arabic}
	
	\section{Introduction}\label{sec:intro}
	Real-life networks are prone to failures. Usually, there can be several failed nodes or links, but the graph topology will not deviate too much from the underlying failure-free graph. A typical problem is to find the shortest path between two nodes in a network that avoids a specific set of failed nodes or links. This motivates the \emph{$d$-failure} model, in which we should preprocess a graph, such that upon a small number ($d$) of failures, we can ``recover'' from these failures quickly.
	
	In their pioneering work, Demetrescu and Thorup \cite{DT02} designed a data structure that can maintain all-pairs shortest paths under one edge failure. In other words, for each triple $(u,v,f)$ where $u,v$ are vertices and $f$ is a failed edge, the data structure can output the length of the shortest path from $u$ to $v$ that does not go through $f$, in $O(\log n)$ query time. A subsequent work \cite{DTCR08} extends the structure to also handle one vertex failure, and improves the query time to $O(1)$. The one-failure case is studied extensively in literature \cite{CR02, BK08, BK09, DZ17, WY13, GW12, BK13, BCGLPP18, GS18, ChechikC20, Ren20}.
	
	People also tried to find structures handling multiple failures. For undirected graphs, we can answer connectivity queries under $d$ edge failures\footnote{We can also use dynamic connectivity structures with poly-logarithmic worst case update time~\cite{KKM13,GKKT15,Wang15} to handle $d$ edge failures.} \cite{PT07, DP10, DP17} and $d$ vertex failures \cite{DP10, DP17} in $\poly(d,\log n)$ time. Chechik \etal~\cite{CLPR12} designed a data structure that maintains $O(d)$-approximate shortest paths under $d$ edge failures in an undirected graph, and Bil\`o \etal~\cite{BGLP16} improved the approximation ratio to $2d+1$. For any $\epsilon>0$, Chechik \etal~\cite{CCFK17} designed a data structure that $(1+\epsilon)$-approximates shortest paths under $d$ edge failures in an undirected graph, with space complexity $O(n^2(\log n/\epsilon)^d\cdot d\log W)$ and query time $\poly(\log n, d, \log\log W)$, where $W$ is the ratio of the largest edge weight to the smallest edge weight. More related work can be found in \cref{sec:related-work}.
	
	However, despite much effort, it was not known if one can maintain (approximate) shortest paths under multiple \emph{vertex} failures. This problem was addressed as an open problem in \cite{BK13, CLPR12, CCFK17}, and also in Chechik's PhD thesis \cite{Chechik_PhD}.
	
	In this paper we build efficient data structures that answer approximate distance queries under multiple vertex failures for general undirected graphs, answering the above question in the affirmative. A \emph{vertex-sensitive distance oracle} (VSDO) for a weighted undirected graph $G=(V,E)$ is a data structure that given a set of failed vertices $D\subseteq V$ and $u,v\in V\setminus D$, outputs (an estimate of) the length of the shortest path from $u$ to $v$ that avoids all vertices in $D$. We assume a known upper bound $d$ on the number of failures, \ie for any query $(u,v,D)$, we always have $|D|\le d$. We will be concerned with the following parameters of a VSDO:\begin{itemize}
		\item Space complexity, \ie the amount of space that the data structure occupies.
		\item Query time, \ie the time needed to answer one query $(u,v,D)$.
		\item Approximation ratio, a.k.a.~stretch: A VSDO has stretch $\alpha$ if it always holds that  $\delta_{G-D}(u,v)\le\tilde{\delta}(u,v,D)\le\alpha\cdot\delta_{G-D}(u,v)$, where $\delta_{G-D}(u,v)$ is the actual distance between $u$ and $v$ in $G-D$ (\ie $G$ with $D$ disabled), and $\tilde{\delta}(u,v,D)$ is the output of the VSDO. 
	\end{itemize}

	We will not be particularly interested in the preprocessing time of VSDOs; nevertheless, all VSDOs in this paper can be preprocessed in time polynomial in their space complexity.

	In this paper, $n$ and $m$ denote the number of vertices and edges respectively. Let $W$ be the ratio of the largest edge weight to the smallest edge weight. W.l.o.g.~we can assume that edge weights are real numbers in $[1,W]$.

	\subsection{Our Results}\label{sec:our-results}
	We provide the first constructions of approximate VSDOs for general undirected graphs with poly-logarithmic query time. Our main results are as follows:\footnote{$\tilde{O}$ hides $\poly(\log n)$ factors.}
	
	\begin{theorem}[main]
		For any constants $c\ge 1$ and $\epsilon>0$, we can construct VSDOs for undirected graphs with:\label{thm:main}
		\begin{enumerate}[(a)]
			\item space complexity $n^{2+1/c}\log W\cdot (\epsilon^{-1}\log n)^{O(d)}$, query time $\tilde{O}(d^{2c+6}\epsilon^{-1}\log\log W)$ and stretch $1+\epsilon$; \label{item:epsilon_result}
			\item space complexity $\tilde{O}(n^{2+1/c}d^3\log(nW))$, query time $\tilde{O}(d^{2c+9}\log\log(nW))$ and stretch $O(d^{c+2}\log^6n)$. \label{item:polynomial_result}
		\end{enumerate}
		\vspace{-1em}
		Each oracle can be preprocessed in time polynomial in their space complexity.\footnote{See \cref{fig:our-results} in \cref{apd:tables} for precise time bounds.} Our constructions also allow an actual approximate shortest path to be retrieved in an additional time of $O(\ell)$, where $\ell$ is the number of edges in the reported path. 
	\end{theorem}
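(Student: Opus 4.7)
The plan is to combine a Thorup--Zwick style hierarchical landmark structure with a per-pair \emph{decision tree} data structure that absorbs the cost of the $d$ vertex failures by branching on obstructing vertices. First I would build $c+1$ nested landmark sets $V=V_0\supseteq V_1\supseteq\cdots\supseteq V_c=\varnothing$, sampling $V_i$ at density roughly $n^{-i/c}$, and let $p_i(u)$ denote the closest landmark in $V_i$ to $u$. The standard Thorup--Zwick analysis yields $O(n^{2+1/c})$ ``relevant'' pairs $(u,p)$ in total (those with $p\in V_i$ closer to $u$ than is any vertex of $V_{i-1}$), which produces the $n^{2+1/c}$ factor in the space bound. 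A failure-free approximate distance query between $u$ and $v$ can then be answered by returning the best of $\delta(u,p_i(u))+\delta(p_i(u),v)$ over all levels $i$, and the remaining task is to make each of the pieces robust to failures.

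For each relevant pair $(u,p)$, I would precompute a decision tree $\DecTree(u,p)$ of depth $O(d)$. Its root stores the shortest $u$--$p$ path $P_{\varnothing}$ in $G$; at an internal node labelled by a failure set $F$ with $|F|<d$, we store the shortest $u$--$p$ path $P_F$ in $G-F$, together with $O(\epsilon^{-1}\log n)$ carefully chosen ``branch vertices'' placed along $P_F$ at geometrically spaced prefixes, so that failing any single vertex of $P_F$ increases the detour cost by at most a $(1+\epsilon)$ factor over failing one of the branch vertices. The children of the node correspond to extending $F$ by one branch vertex, giving $(\epsilon^{-1}\log n)^{O(d)}$ nodes per pair and matching the claimed space. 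To answer a query $(u,v,D)$, the algorithm iterates over every level $i$ and the two candidate hubs $p_i(u), p_i(v)$, descending $\DecTree(u,p_i(\cdot))$ and $\DecTree(v,p_i(\cdot))$ in parallel by repeatedly recursing into the child whose branch vertex is the obstructing member of $D$ closest to the root endpoint of the currently stored path; identifying that vertex in $\tilde O(d)$ time uses Euler-tour plus segment-tree indices on the stored paths, consistent with the $\ET$ and $\seg$ macros. Over $O(c)$ levels and tree depth $O(d)$ this yields the stated query time, with the extra $\log\log W$ arising from a binary search over $O(\log W)$ distance scales.

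The main obstacle, and the essential difference from the edge-failure setting of \cite{CCFK17}, is that a single failed vertex can obstruct many alternative detours simultaneously, so one cannot afford to branch on every vertex of the current path. I would address this with a structural lemma, in the spirit of the $\avoid$ macro: for the right level $i$, there is a $u$--$p_i(u)$ path in $G-D$ of length at most $(1+\epsilon)\delta_{G-D}(u,p_i(u))$ obtained by the tree's descent after failing only a $|D|$-element subset of branch vertices, with the geometric spacing guaranteeing that each real failure on a stored path is faithfully represented by one branch vertex up to a $(1+\epsilon)$ factor in detour cost. Part~(a) then follows by running the construction independently on $O(\epsilon^{-1}\log W)$ weight buckets, producing the $\log W$ factor. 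For part~(b), the $\epsilon^{-1}\log n$ branching factor is replaced by $O(\log n)$ and we allow a $\poly(d,\log n)$ stretch to accumulate per level of the decision tree; this eliminates the $(\epsilon^{-1}\log n)^{O(d)}$ blowup but pays the weaker $O(d^{c+2}\log^6 n)$ stretch, whose $\log^6 n$ factor I expect to originate from a constant number of auxiliary logarithmic-depth tree decompositions compounded inside the hierarchy.
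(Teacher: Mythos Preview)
Your proposal has a genuine gap: it does not address the central obstacle that distinguishes vertex failures from edge failures, namely that a failed vertex can have arbitrarily high degree. Your decision-tree mechanism---branching on geometrically spaced ``branch vertices'' along the stored path---is essentially the edge-failure machinery of \cite{CCFK17}. That machinery works for edges because when a failed edge lies in a short segment $X$, the optimal detour must pass through one of at most $2d$ endpoints of failed edges, giving a small set of intermediate vertices to route through. For a vertex failure $f$, there is no analogous small set: $f$ may have $\Omega(n)$ neighbors, and removing a nearby branch vertex $b$ instead of $f$ says nothing about $\pi_{G-f}$, since $\pi_{G-\{b\}}$ and $\pi_{G-\{f\}}$ can differ arbitrarily even when $b$ and $f$ are adjacent. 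Your ``structural lemma'' simply asserts that geometric spacing makes each real failure faithfully represented by a branch vertex; this is exactly the step that fails for vertex failures, as the paper's overview (\cref{sec:brief-overview}) explicitly discusses.

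The paper's actual route is quite different. The $n^{1/c}$ factor does not come from Thorup--Zwick landmark sampling but from the \emph{high-degree hierarchy} of \cite{DP10}: a tree of $O(n^{1/c})$ nodes, each a vertex set $U_i$, equipped with source-restricted tree covers, so that for any failure set $D$ one can select a root-to-node path $U_1,\dots,U_p$ along which every $f\in D$ has low \emph{pseudo-degree} in every relevant tree cover. This yields a small intermediate set $V(H)$ (parents and trunk-children of failures in these trees), and the key \cref{lemma:ball_VH} shows that if a non-failure vertex $x$ is close to a failure $f$ with $l(x)\ge l(f)$, then some $w\in V(H)$ is close to $x$ in $G-D$; the query algorithm arranges the level condition by always branching on the failure of \emph{highest level} in the current segment. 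Reducing the space from $n^{3+1/c+o(1)}$ to $n^{2+1/c+o(1)}$ requires a further structural theorem (\cref{thm:decomposable}) that shortest paths far from $V(H)$ are ``$\epsilon$-segment expaths'' admitting succinct representation---and the paper explicitly notes that the CCFK17 $k$-decomposability theorem is \emph{false} for vertex failures. Finally, part~(b) is not a relaxed version of part~(a): it uses a separate construction based on affected subtrees, an auxiliary bounded-depth DAG, and a color-coding argument in the style of \cite{DK11}.
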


	Using existing structures, we need either $n^{\Omega(d)}$ space or $\Omega(n)$ query time.\footnote{We can use the $d$-fault tolerant spanner \cite{CLPR09, DK11, BDPW18, BP19} with the brute-force query algorithm, build $n^{d-2}$ two-failure distance oracles~\cite{DP09}, use the dynamic shortest path algorithms~\cite{BrandN19}, or use the oracle \cite{BS19} which also works for directed graphs. But none of these solutions provide both $n^{o(d)}$ space and $o(n)$ query time.} Thus our results are the first of its kind.

	\subsection{A Brief Overview}\label{sec:brief-overview}

	In this section, we briefly introduce the ideas needed to construct the desired VSDOs.

	\paragraph{The edge-sensitive distance oracle of \cite{CCFK17}.} Our first VSDO depends on \cite{CCFK17} which handles $d$ \emph{edge} failures. Therefore we briefly describe their oracle first. It may be helpful to think of their query algorithm as a recursive one. 
	
	Given $u,v\in V$ and a set $D$ of $d$ edge failures, let $\Pans$ be the shortest $u$-$v$ path in $G-D$, which we are searching for. The oracle first partitions the shortest path $P$ from $u$ to $v$ in $G$ (which may go through failures) into $\tilde{O}(\epsilon^{-1}\log W)$ short segments. Consider a segment $X$ that contains some failed edges. If $\Pans$ does not go through $X$, then we can ``preprocess'' the graph $G-X$ and search for $\Pans$ in $G-X$. Otherwise, if $\Pans$ goes through some vertex $x\in X$, then we can pick an \emph{arbitrary} vertex $w\in X$ such that there are no failed edges between $x$ and $w$, and \emph{pretend} that $\Pans$ passes through $w$. That is, we recursively find the shortest paths in $G-D$ from $u$ to $w$ and from $w$ to $v$ and concatenate them. It is easy to see that this brings an additive error of at most $2|X|$ to our solution, where $|X|$ is the length of $X$.
	
	Thus, we want to find a small set of intermediate vertices, which we denote as $H$, with the following property: For every vertex $x$ and failure $f$, if $x$ has distance at most $|X|$ to $f$, then there is some $w\in H$ such that $x$ also has distance at most $|X|$ to $w$ in $G-D$. As it turns out that the query time is polynomial in $|H|$, the size of $H$ should be small.
	
	There is a natural choice of $H$: we simply let it be the set of vertices incident to some failed edges. It is easy to see that $|H|\le 2d$, thus the query algorithm runs in time $\poly(d)$. The above property is also true: given any vertex $x$ and a nearby failure $f$, we can walk along the path from $x$ to $f$ until we meet a failed edge, then the vertex $w$ we stop at is both in $H$ and close to $x$. We can control the total additive error (\ie~the sum of $2|X|$'s over the ``recursion'') to be at most $\epsilon\cdot |\Pans|$, by partitioning each path into \emph{sufficiently short} segments.
	
	Note that, for the sake of intuition, we have omitted some important details, such as how to ``preprocess'' $G-X$ (by a decision tree structure) and how to implement the query algorithm (non-recursively).
	
	\paragraph{The ``high-degree'' obstacle.} The obvious difficulty of handling vertex failures is the presence of failed vertices with very high degrees. If every failed vertex has degree $\le \Delta$, we can simply simulate an edge-failure distance oracle \cite{CLPR12, CCFK17} and delete at most $d\cdot\Delta$ edges from it. Equivalently, we can define the set of intermediate vertices $H$ as those non-failure vertices adjacent to some failure, then $|H|\le d\cdot\Delta$ and we run the above query algorithm. However the techniques of \cite{CLPR12, CCFK17} do not seem to work for high-degree vertex failures. For example, techniques in \cite{CLPR12} only guarantee a stretch of $\ge\Delta$, and techniques in \cite{CCFK17} require $\poly(\Delta)$ query time, therefore both are unsatisfactory when $\Delta=\Omega(n)$. 
	
	By the construction of $(2k-1)$-stretch spanners with $O(n^{1+1/k})$ edges~\cite{Althofer+93}, we can construct a $(2\log n-1)$-stretch spanner with $O(n)$ edges. We note that the query algorithm works even if every failed vertex has a small degree in the spanner (rather than in the whole graph): We can define $H$ to be the set of vertices adjacent to some failed vertex \emph{in the spanner}. If $\Pans$ goes through some vertex $x$ that has distance $|X|$ to a failed vertex $f$, the distance between $x$ and $f$ in the spanner is $O(|X|\log n)$, and there must be some $w\in H$ that has distance $O(|X|\log n)$ to $x$ in $G-D$. By partitioning the paths into shorter segments, we can still control the additive error, \ie~the sum of $O(|X|\log n)$ over the ``recursion'', to be less than $\epsilon\cdot |\Pans|$.
	
	\paragraph{High-degree hierarchy: A first attempt.} Given the ``high-degree'' obstacle, it is natural to see whether the ``high-degree hierarchy'' of \cite{DP10} may help us. Plugging the spanners\footnote{The reason that we need to plug in a spanner, rather than the original graph, is that we can only plug in a \emph{sparse} graph into the high-degree hierarchy.} into the hierarchy of \cite{DP10}, we obtain a structure as follows. The vertices are partitioned into $p=O(\log n)$ levels; let $U_i$ be the set of vertices with level $\ge i$. So we have a sequence of vertex sets $V=U_1\supseteq U_2\supseteq\dots\supseteq U_p\supseteq U_{p+1}=\varnothing$, and the $i$-th level is the set $U_i\setminus U_{i+1}$.\footnote{In the hierarchy structure of \cref{sec:high-degree-hierarchy}, each $U_{i+1}$ is not necessarily a subset of $U_i$; this issue is not essential, so for simplicity, in the brief overview we will assume each $U_{i+1}$ is indeed a subset of $U_i$.} For every $i$, let $G_i$ be the induced subgraph of $V\setminus U_{i+1}$. We do not have a complete spanner for $G_i$; we can only afford to build a ``subset-spanner'' that preserves the distances in $G_i$, among vertices in $U_i\setminus U_{i+1}$ (instead of $V\setminus U_{i+1}$). The structure guarantees that every failed vertex in the subset-spanner of any level has low degrees.
	
	It is natural to define $H$ as the set of neighbors of failures in the subset-spanners, and $|H|$ will be small. If $\Pans$ goes through some vertex $x$ that has distance $|X|$ to a failed vertex $f$, and \emph{$x$ and $f$ are in the same level}, then we can find an intermediate vertex $w\in H$ that has distance $O(|X|\log n)$ to $x$ in $G-D$, and we are fine. But what if $x$ and $f$ are in different levels? In this case, the $x$-$f$ path may not be preserved by the ``\emph{subset}-spanner'', thus not captured by $H$. In \cite[Section 4]{DP10}, the authors used ad hoc structures to preserve connectivity between different levels; it appears difficult to extend these structures to also handle ($(1+\epsilon)$-approximate) distances.
	
	\paragraph{Our ideas.} It is inconvenient that the spanner at level $i$ only preserves distances inside $U_i\setminus U_{i+1}$. Therefore, our first idea is to ``extend'' the spanners to also preserve distances at lower levels: the spanner at level $i$ should preserve distances between any pair of vertices $(x, y)$, where $x\in U_i\setminus U_{i+1}$ and $y\in V\setminus U_{i+1}$. Note that we still only guarantee that every vertex failure has small degrees in the \emph{original} spanners; they may have large degrees in the extended spanners.
	
	We implement the spanners by \emph{tree covers}, and there is a natural way to ``extend'' them. The extended tree cover consists of a collection of trees whose union is a spanner that preserves distances between $U_i\setminus U_{i+1}$ and $V\setminus U_{i+1}$. Moreover, each tree is a shortest path tree rooted in $U_i\setminus U_{i+1}$ (the highest level of $G_i$). See \cref{sec:sr-tree-cover} for more details.
	
	Recall that in the query algorithm, we have a non-failure vertex $x$ that is close to a failure $f$, and we want to find an intermediate vertex $w\in H$ that is close to $x$ in $G-D$. Suppose that $x$ is at a higher level than $f$. If we walk from $f$ (at a lower level) to $x$ (at a higher level), it seems that our first step should go to the parent of $f$ in some tree. Actually, this intuition can be rigorously proved! See the proof of \cref{lemma:ball_VH}. Therefore, if $H$ consists of the neighbors of every failure (in the original spanners) and the parents of every failure in each tree (in the extended tree covers), then we can deal with every $(x, f)$ such that the level of $x$ is at least that of $f$. Every failure is only in $\tilde{O}(1)$ trees, thus $|H|$ is indeed small.
	
	We need to adapt the query algorithm to ensure that $f$ never has a higher level than $x$. Let $P$ be the shortest $u$-$v$ path in the original graph, and we partition $P$ into short segments. Consider a segment $X$ that contains failures, and let $i$ be the highest level of any failure in $X$. If $\Pans$ does not contain any vertex in $X$ \emph{with level at least $i$}, then we can ``preprocess'' the graph $G-(X\cap U_i)$ and search for $\Pans$ in this subgraph. Otherwise $\Pans$ goes through some $x\in (X\cap U_i)$, and by definition, the level of $x$ cannot be smaller than the level of any failure in $X$. Therefore, we can find some intermediate vertex $w\in H$ close to $x$, ``pretend'' that $\Pans$ goes through $w$, and continue.
	
	The above discussion implies a data structure with space complexity roughly $n^3$. To reduce the space complexity by a factor of $n^{1-o(1)}$, we prove a structural theorem (\cref{thm:decomposable}) for shortest paths under vertex failures, which allows us to compress such paths. (The corresponding theorem \cite[Theorem 3.1]{CCFK17} does not hold for vertex failures.) Curiously, the proof of this theorem also relies on \cref{lemma:ball_VH}.

	\paragraph{On oracle (\ref{item:polynomial_result}).} Although oracle (\ref{item:polynomial_result}) has a larger stretch compared to oracle (\ref{item:epsilon_result}), we think it is also of interest, since it is the first oracle that handles $\omega(\log n)$ failures in polynomial space and $\poly(\log n)$ query time, within a reasonable stretch.\footnote{It seems that even $O(\sqrt{n})$ stretch was open before this result.} Note that setting $\epsilon=\omega(1)$ (\eg $\epsilon=\log n$) in oracle (\ref{item:epsilon_result}) does \emph{not} improve its space complexity to $n^2\log^{o(d)} n$, so oracle (\ref{item:polynomial_result}) is \emph{not} a direct corollary of oracle (\ref{item:epsilon_result}).

	\subsection{More Related Work}\label{sec:related-work}
	\paragraph{Sensitivity oracles.} For the case of two vertex failures, Duan and Pettie \cite{DP09} showed that exact distances in a directed weighted graph can be queried in $O(\log n)$ time, with an oracle of size $O(n^2\log^3 n)$, and Choudhary \cite{Cho16} designed an oracle of $O(n)$ size that handles single source reachability queries in directed graphs in $O(1)$ time.
	
	The general problem of $d$ failures has also received attention on \emph{planar} graphs: Borradaile \etal \cite{BPW12} constructed a data structure that maintains connectivity under $d$ vertex failures, and Charalampopoulos \etal \cite{CMT19} designed a data structure that answers exact distance queries under $d$ vertex failures. 
	
	In a recent breakthrough, van den Brand and Saranurak \cite{BS19} gave an oracle that handles an arbitrary number $d$ of edge failures in \emph{directed} graphs. Their oracle can answer reachability queries in $O(d^{\omega})$ time, and exact distance queries in $n^{2-\Omega(1)}$ time (for small integer weights), where $\omega<2.3728639$ is the matrix-multiplication exponent \cite{CW90, Sto10, Wil12, LeGall}. 
	
	We summarize the sensitivity connectivity/distance oracles in \cref{fig:other_results} of \cref{apd:tables}.
	
	\paragraph{Fault-tolerant structures.} A related concept is \emph{fault-tolerant (FT) spanners}: a subgraph $G'$ of $G$ is a $d$-FT spanner if, after removing any $d$ vertices, the remaining parts of $G'$ is a spanner of the remaining parts of $G$. It might be \textit{a priori} surprising that sparse FT spanners exist, but Chechik \etal \cite{CLPR09} gave the first construction of $d$-FT $(2k-1)$-spanners with $O(d^2k^{d+1}\cdot n^{1+1/k}\log^{1-1/k}n)$ edges. Subsequent papers \cite{DK11, BDPW18, BP19} improved the number of edges to $O(n^{1+1/k}d^{1-1/k})$, which is optimal assuming the girth conjecture of Erd\H{o}s \cite{Erd64}.
	
	Besides FT spanners, there are many other kinds of fault-tolerant structures, \eg \cite{BGLP14, Par15, PP15, PP16, BGLP16, BGPW17, Par17, PP18}. We refer the reader to the excellent survey of \cite{Par16}.
	
	\paragraph{Dynamic shortest path.} There are dynamic all-pairs shortest path structures handling vertex updates. Thorup~\cite{Thorup05} gave a fully dynamic all-pairs shortest paths structure with worst-case update time $\tilde{O}(n^{2.75})$, and Abraham \etal~\cite{ACK17} gave a randomized worst-case update time bound $\tilde{O}(n^{2+2/3})$. Recently, Brand and Nanongkai~\cite{BrandN19} gave a $(1+\epsilon)$-approximate algorithm for maintaining APSP under edge insertions and deletions with worst-case update time $\tilde{O}(n^{1.863}/\epsilon^2)$ for directed graphs. 
	Other fully or partial dynamic shortest path structures include~\cite{ACT14,Bernstein2009,DI2006,HKN2008,HKN2014,Henzinger2014,King1999,RZ2011,RZ2004,Thorup_SWAT,Sankowski2005}.

	\subsection{Notation}\label{subsec:pre}
	In this paper, $\log x=\log_2 x$, and $\ln x=\log_e x$. For a set $S$ and an integer $k$, $|S|$ is the cardinality of $S$, and we denote $\binom{S}{k}=\{S'\subseteq S:|S'|=k\}$, and $\binom{S}{\le k},\binom{S}{\ge k}$ are defined analogously. For two sets $X$ and $Y$, define their Cartesian product as $X\times Y=\{(x,y):x\in X,y\in Y\}$. We use $\circ$ as the concatenation operator for paths or sequences. For paths $P_1,P_2$, if $u$ is the last vertex in $P_1$ and $v$ is the first vertex in $P_2$, then $P_1\circ P_2$ is well-defined if $u=v$ or $(u,v)$ is an edge in $G$.
	
	For a graph $H$ and $u,v\in V(H)$, $w_H(u,v)$ denotes the length of the edge between $u$ and $v$ ($w_H(u,v)=+\infty$ if such an edge does not exist), $\delta_H(u,v)$ denotes the length of the shortest path in $H$ from $u$ to $v$ and $\pi_H(u,v)$ denotes the corresponding shortest path. If $S\subseteq V(H)$ is a subset of vertices, then $\delta_H(u,S)=\min\{\delta_H(u,v):v\in S\}$. ($\delta_H(u,\emptyset)=+\infty$.) We omit the subscript $H$ if $H=G$ is the input graph. We define $H[S]$ as the subgraph induced by $S$, and $H-S=H[V(H)\setminus S]$. We use $nW$ as an upper bound of the diameter of any (connected) subgraph of $G$. We assume that the shortest path between every pair of vertices in any subgraph is unique (see Section 3.4 of~\cite{DI04}).
	
	For a path $P$ and $u,v\in P$, define $P[u,v]$ as the portion from $u$ to $v$ in $P$, and sometimes this notation emphasizes the \emph{direction} from $u$ to $v$. Let $(u=x_0,x_1,\dots,x_{\ell-1},x_{\ell}=v)$ denote the path $P[u,v]$, then we define $P(u,v]=P[x_1,v],P[u,v)=P[u,x_{\ell-1}]$ and $P(u,v)=P[x_1,x_{\ell-1}]$. Define $|P|$ as the length of path $P$. For a tree $T$ rooted at $r$ and a vertex $x\in V$, define the depth of $x$, denoted by $\dep_T(x)$, as the (weighted) distance from $x$ to $r$ in $T$.
	
	In this paper, $D$ denotes the set of $\le d$ failed vertices. For convenience, we always assume $n\ge 3$ and $d\ge 2$.
	
	Note that we also define some more notations at the end of \cref{sec:high-degree-hierarchy}, which is relevant to the ``high-degree hierarchy''. \cref{fig:def} in \cref{apd:tables} summarizes some nonstandard notation in this paper.
	
	\section{Source-Restricted Tree Covers in High-Degree Hierarchy}\label{sec:treecover}
	Our VSDO is based on a variant of the \emph{high-degree hierarchy} of \cite{DP10}, which we equip with the \emph{source-restricted tree covers} of \cite{TZ05, RTZ05} to approximately preserve distances.

	\subsection{Source-Restricted Tree Covers}\label{sec:sr-tree-cover}
	Let $G=(V,E)$ be an undirected graph. A \emph{tree cover} of $G$ is, informally, a set of trees such that every vertex $v\in V$ is in a small number of trees, and for every two vertices $u,v\in V$, there is a tree that approximately preserves their distance $\delta(u,v)$. In this paper, we relax the second condition, requiring it to hold only for every $u\in S,v\in V$, where $S$ is some subset of $V$. Following terminologies of \cite{RTZ05}, we call such tree covers \emph{source-restricted}.
	
	Throughout this paper, $k=\ln n$.\footnote{Our construction works for any parameter $k$, but the complexity is proportional to $kn^{1/k}$, so we minimize it by setting $k=\ln n$.} We define \emph{source-restricted tree cover} as follows.
	
	\begin{definition}\label{def:sr-tree-cover}
		Given $S\subseteq V$, an \emph{$S$-restricted tree cover} is a set of rooted trees $\{T(w):w\in S\}$, such that the following hold.
		\begin{enumerate}[a)]
			\item For every $w\in S$, there is exactly one tree $T(w)$ rooted at $w$, spanning a subset of $V$ (which we denote as $V(T(w))$).
			\item For every $u\in S,v\in V$, there is some $w\in S$ such that $u,v\in V(T(w))$, and $\dep_{T(w)}(u)+\dep_{T(w)}(v)\le (2k-1)\delta(u,v)$.\footnote{We only require that the \emph{distance} between $u$ and $v$ in $T(w)$ approximates $\delta(u,v)$ well in \cref{sec:orac2}. However, we will require that the \emph{sum of depths} of $u$ and $v$ approximates $\delta(u,v)$ well in \cref{sec:orac1}.}\label{item:treecover-b}
			\item Every vertex $v\in V$ is in at most $kn^{1/k}(\ln n+1)\le 2e\ln^2 n$ trees.\label{item:treecover-c}
	\end{enumerate}
	\end{definition}
	
	In \cite{TZ05}, Thorup and Zwick constructed approximate distance oracles, and they noticed that their constructions are also good tree covers. A simple modification of their construction (see \cite{RTZ05}) yields source-restricted tree covers.
	\begin{theorem}\label{thm:sr-tree-cover}
		Given a graph $G=(V,E)$ and $S\subseteq V$, we can compute in deterministic polynomial time an $S$-restricted tree cover $\caT(S)=\{T(w):w\in S\}$ such that for any $u\in S,v\in V$, the vertex $w$ in \cref{def:sr-tree-cover} \ref{item:treecover-b}) can be found in $O(k)$ time.
	\end{theorem}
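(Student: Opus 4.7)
The plan is to adapt the Thorup--Zwick distance oracle construction \cite{TZ05}, modifying the sampling so that tree roots lie only in $S$, as outlined in \cite{RTZ05}. The key observation is that the TZ oracle is already a tree cover in disguise: its ``clusters'' are the vertex sets of rooted shortest-path trees, and the TZ query procedure is exactly the $O(k)$-time routine that locates the right tree.

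The construction proceeds as follows. First I would build a hierarchy $S = A_0 \supseteq A_1 \supseteq \dots \supseteq A_k = \varnothing$ deterministically, via the greedy hitting-set argument of \cite{RTZ05}, so that $|A_i| = O(|S|\,n^{-i/k}\ln n)$ and every ``ball'' used in the TZ bunch analysis is hit. For every $v \in V$ and every $0 \le i \le k$, I would precompute the closest vertex $p_i(v) \in A_i$ to $v$ and the distance $\delta_i(v) = \delta(v, A_i)$ via a multi-source Dijkstra from $A_i$. Then define the bunch
\[
B(v) \;=\; \bigcup_{i=0}^{k-1} \bigl\{\, w \in A_i \setminus A_{i+1} \,:\, \delta(v,w) < \delta_{i+1}(v) \,\bigr\},
\]
and, for each $w \in S$, the cluster $C(w) = \{ v \in V : w \in B(v) \}$. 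Let $T(w)$ be the shortest-path tree of $G$ rooted at $w$ restricted to $C(w)$; a standard property of TZ clusters (each shortest $v$-to-$w$ path stays inside $C(w)$) ensures $T(w)$ is well-defined as a tree and that $\dep_{T(w)}(v) = \delta(v,w)$ for every $v \in C(w)$.

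Property (a) of \cref{def:sr-tree-cover} is immediate. Property (c) reduces to the bound $|B(v)| \le kn^{1/k}(\ln n+1)$, which follows from the greedy size bounds on the $A_i$. For property (b), I would invoke the standard TZ query loop: initialize $w := u$ (valid because $u \in S = A_0$) and $i := 0$, and then repeatedly, while $w \notin B$ of the current opposite endpoint, swap the two endpoints, increment $i$, and set $w := p_i(\text{current source})$. An induction identical to the TZ analysis shows that the loop halts within $k$ iterations at some $w \in S$ with $u,v \in C(w)$ and $\delta(u,w) + \delta(v,w) \le (2k-1)\delta(u,v)$; since $T(w)$ is a shortest-path tree, this sum equals $\dep_{T(w)}(u) + \dep_{T(w)}(v)$, yielding property (b). The same loop, using the precomputed tables and storing each $B(v)$ as a dictionary indexed by $w$, locates the required $w$ in $O(k)$ time.

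The only real obstacle is the deterministic construction of the $A_i$ with the sharp bound $kn^{1/k}(\ln n+1)$ on $|B(v)|$; rather than re-derive it I would invoke the greedy derandomization of \cite{RTZ05} as a black box. Everything else is a careful specialization of Thorup--Zwick to trees rooted in $S$, and the fact that $u \in S = A_0$ lets the query loop start at the trivial pivot $w = u$ without otherwise affecting the $k$-step bound.
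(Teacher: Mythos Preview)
Your proposal is correct and is essentially the same argument the paper gives: set $A_0=S$, build the Thorup--Zwick clusters and bunches, take $T(w)$ as the shortest-path tree on $C(w)$, derandomize the hierarchy via the greedy hitting-set lemma to bound $|B(v)|$, and run the standard TZ query loop starting from $w=u\in S=A_0$ to obtain the $(2k-1)$-stretch tree in $O(k)$ steps. The only cosmetic difference is that the paper cites \cite[Lemma~3.6]{TZ05} for the hitting-set step rather than \cite{RTZ05}, but the content is identical.
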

	
	For completeness, we provide a sketch of the construction in \cref{sec:proof-tree-cover}; we also refer the interested reader to \cite{TZ05, RTZ05} for details of this construction. 

	For $S\subseteq V$, we denote $\caT(S)$ as the $S$-restricted tree cover constructed in \cref{thm:sr-tree-cover}. For $S,R\subseteq V$, we denote $\caT_R(S)$ as the $(S\setminus R)$-restricted tree cover $\caT(S\setminus R)$ in $G-R$.

	For technical reasons (namely, we want the hierarchy structure in \cref{sec:high-degree-hierarchy} to have a reasonable size), we need that the number of ``high-degree'' vertices in $\caT(S)$ is only $o(|S|/d)$, where $d$ is the number of failures. However, here we defined the tree cover $\caT(S)$ to span not only $S$, but maybe some other vertices in $V$.\footnote{This corresponds to the informal description of ``extending'' tree covers in \cref{sec:brief-overview}.} So we can only prove degree bounds of the following form: the number of vertices with high degree w.r.t.~the ``trunk'' parts of the tree cover is $o(|S|/d)$. The precise definitions are as follows.
	
	\begin{definition}
		Consider $S\subseteq V$, $T\in\caT(S)$, $v\in V(T)$. We say $v$ is a \emph{trunk vertex} of $T$ if there are $u,w\in S$ such that $v$ lies on the path from $u$ to $w$ in $T$. The subtree (subgraph) of $T$ induced by trunk vertices of $T$ is denoted as $\Trunk(T)$. The \emph{pseudo-degree} of a vertex $v\in V(T)$, denoted as $\pdeg_T(v)$, is the degree of $v$ in $\Trunk(T)$. If $v$ is not a trunk vertex of $T$, then $\pdeg_T(v)=0$.
		\label{def:trunk}
	\end{definition}

	Note that vertices in $\Trunk(T)$ are not necessarily in $S$. See \cref{fig:pseudodegree} as an example. 
	
	The following property will be useful in \cref{sec:orac2}: for a vertex $v$ that is not in $\Trunk(T)$, its path in $T$ to any vertex in $S$ must go through its parent. (This is because the root of $T$ is always in $S$.)
	
	\begin{figure}
		\begin{minipage}[c]{0.49\linewidth}
		\centering
		\includegraphics[scale=1]{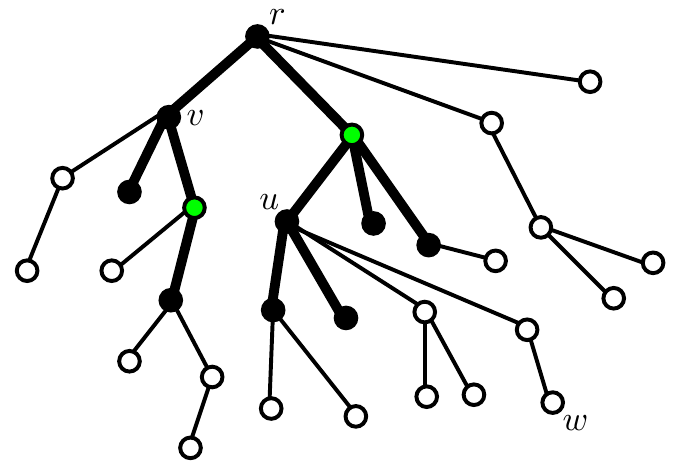}
		\end{minipage}
		\begin{minipage}[c]{0.49\linewidth}
		\begin{tabular}[b]{|c|c|c|}
		\hline
		&degree&pseudo-degree\\
		\hline
		$r$&$4$&$2$\\
		\hline
		$u$&$5$&$3$\\
		\hline
		$v$&$4$&$3$\\
		\hline
		$w$&$1$&$0$\\
		\hline
		\end{tabular}
		\end{minipage}
		\caption{A sample tree in $\caT(S)$. Black vertices are in $S$, green vertices are trunk vertices not in $S$, and bold edges denote the subtree induced by trunk vertices. We also include a table of degrees and pseudo-degrees of some sample vertices.}\label{fig:pseudodegree}
	\end{figure}
	
	Let $s=4e\cdot d^{c+1}\ln^2 n+1$ be a degree threshold, where $c\ge 1$ is any constant. Define $\Hi(\caT(S))$ as the set of vertices in $V$ that has pseudo-degree $>s$ in some tree in $\caT(S)$. We prove our desired upper bound on $|\Hi(\caT(S))|$.
	
	\begin{lemma}
		For any $S\subseteq V$, $|\Hi(\caT(S))|\le \frac{|S|}{2d^{c+1}}$.\label{lemma:high-vertices}
	\end{lemma}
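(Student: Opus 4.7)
The plan is a double-counting argument using two facts about $\Trunk(T)$ for each $T \in \caT(S)$.

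First, I would establish the structural claim that \emph{every leaf of $\Trunk(T)$ lies in $S$}. Indeed, if $v$ is a leaf of $\Trunk(T)$, then by \cref{def:trunk} there exist $u,w \in S$ with $v$ on the $u$-$w$ path in $T$. If $v \notin S$, then $v$ cannot equal $u$ or $w$, so $v$ lies strictly between them on the path; but then both neighbors of $v$ on this path are trunk vertices, contradicting $\pdeg_T(v) = 1$.

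Second, I would use the standard tree-counting inequality: if $T'$ is any tree with $\ell$ leaves, and $h$ of its vertices have degree $> s$, then using $\sum_v \deg(v) = 2(|V(T')|-1)$ and bounding the degree sum below by $\ell + (s+1)h + 2(|V(T')|-\ell-h)$, we obtain $(s-1)h \le \ell - 2$, hence $h \le \ell/(s-1)$. Applying this to $\Trunk(T)$ and combining with the first step, the number of vertices $v \in V(T)$ with $\pdeg_T(v) > s$ is at most $|S \cap V(T)|/(s-1)$.

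Third, I sum over $T \in \caT(S)$ and swap the order of summation, exploiting property~\ref{item:treecover-c} of \cref{def:sr-tree-cover}: each $u \in S$ belongs to at most $2e\ln^2 n$ trees, so
\[
\sum_{T \in \caT(S)} |S \cap V(T)| \;=\; \sum_{u \in S} \#\{T \in \caT(S) : u \in V(T)\} \;\le\; 2e\ln^2 n \cdot |S|.
\]
Plugging in $s-1 = 4e\cdot d^{c+1}\ln^2 n$ yields
\[
|\Hi(\caT(S))| \;\le\; \sum_{T \in \caT(S)} \frac{|S \cap V(T)|}{s-1} \;\le\; \frac{2e\ln^2 n \cdot |S|}{4e\cdot d^{c+1}\ln^2 n} \;=\; \frac{|S|}{2d^{c+1}},
\]
completing the proof. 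The only subtle point (and the one I would double-check carefully) is the leaves-in-$S$ claim, since the tree cover $\caT(S)$ can span vertices outside $S$; but because trunk leaves are forced to be endpoints of witnessing $u$-$w$ paths with $u,w \in S$, the argument goes through.
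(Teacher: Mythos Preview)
Your proposal is correct and follows essentially the same approach as the paper: bound the high-degree vertices in each $\Trunk(T)$ by its leaf count over $s-1$, observe that $\Leaf(\Trunk(T))\subseteq S$, then sum over trees and invoke property~\ref{item:treecover-c}). The only cosmetic differences are that you spell out the leaves-in-$S$ claim and the tree degree inequality (which the paper delegates to \cite[Lemma~3.1]{DP10}), and you bound via $|S\cap V(T)|$ rather than $|\Leaf(\Trunk(T))|$, which is harmless since both feed into the same double-counting step.
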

	\begin{proof}
		For a tree $T$, let $\Leaf(T)$ be the set of leaves of $T$. Then there are at most $\left\lfloor\frac{|\Leaf(T)|-2}{s-1}\right\rfloor$ vertices in $T$ that has degree $>s$ \cite[Lemma 3.1]{DP10}. For any $T\in\caT(S)$, $\Leaf(\Trunk(T))\subseteq S$ by definition, thus\begin{equation*}
		\sum_{T\in\caT(S)}\left|\Leaf(\Trunk(T))\right|\le\sum_{v\in S}\left|\{T\in\caT(S):v\in T\}\right|.
		\end{equation*}
		Since every $v\in S$ appears in $\le 2e\ln^2 n$ trees in $\caT(S)$, we have \begin{equation*}
		\sum_{T\in\caT(S)}\left|\Leaf(\Trunk(T))\right|\le|S|\cdot 2e\ln^2 n,
		\end{equation*}
		thus 
		\[
			|\Hi(\caT(S))|\le\sum_{T\in\caT(S)}\left\lfloor\frac{|\Leaf(\Trunk(T))|}{s-1}\right\rfloor
			\le\frac{|S|\cdot 2e\ln^2 n}{s-1}
			=\frac{|S|}{2d^{c+1}}.\qedhere
		\]
	\end{proof}

	\subsection{The High-Degree Hierarchy}\label{sec:high-degree-hierarchy}
	We use a simplified version of the high-degree hierarchy in \cite{DP10}. Fix a parameter $c\ge 1$, the hierarchy structure is a set of $O(n^{1/c})$ representations of the graph, such that for every set of $d$ failures, we can find some representation in which all failed vertices have low pseudo-degrees in their relevant tree covers.
	
	\begin{definition}
		The \emph{hierarchy tree} is a rooted tree in which every node\footnote{We use ``vertex'' for nodes in the input graph, and ``node'' for nodes in the hierarchy tree.} corresponds to a subset of $V$. The root corresponds to $V$. Each node $U$ ($U\subseteq V$) stores a tree cover $\caT(U)$, and each edge $(U,W)$, where $U$ is the parent of $W$, stores a tree cover $\caT_W(U)$, which is $\caT(U\setminus W)$ in $G-W$. 
		The hierarchy tree is constructed as follows. Let $U$ be any node. If $\Hi(\caT(U))=\varnothing$, then $U$ is a leaf; otherwise let $W_1,W_2,\dots,W_d$ be its children, where
		\begin{align}
		W_1 = &~\Hi(\caT(U)), \label{eq:W1}\\
		W_i = &~ W_{i-1}\cup\Hi(\caT_{W_{i-1}}(U)).  &\text{(for $2\leq i\leq d$)}\nonumber
		\end{align}
		Then we recursively deal with all $W_1,W_2,\dots,W_d$.
		\label{def:hierarchy-tree}
	\end{definition}
	
	There are two main differences compared with the hierarchy structure in \cite{DP10}.\begin{itemize}
		\item We simplified the definition of $W_1$ as in \eqref{eq:W1}. This change is not essential, but we feel that it could make the hierarchy tree easier to understand. As a consequence, a node is not necessarily a subset of its parent, which is different from \cite{DP10} (and \cref{sec:brief-overview}).
		\item More importantly, we store in each node the source-restricted tree covers introduced in \cref{sec:sr-tree-cover}. By contrast, \cite{DP10} only concerns about connectivity, so they used a (somewhat arbitrary) spanning forest instead.
	\end{itemize}
	
	The following lemmas assert that the hierarchy tree ``is small, shallow and effectively represents the graph'' \cite{DP10}, which are crucial for our data structures.

	\begin{lemma}[Hierarchy Size and Depth]
		Consider the hierarchy tree constructed with high-degree threshold $s=4e\cdot d^{c+1}\ln^2 n+1$, then the following hold.
		\begin{enumerate}
			\item The depth $h$ of the hierarchy tree is at most $\lfloor\frac{1}{c}\log_d n\rfloor$,\footnote{In subsequent sections we will write $h$ as a shorthand of $O(\log n/\log d)$ in time/space bounds.} assuming the root has depth $0$.
			\item The number of nodes in the hierarchy tree is at most $O(n^{1/c})$.
		\end{enumerate}\label{lemma:size-and-depth}
	\end{lemma}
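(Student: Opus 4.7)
The plan is to establish a geometric size-shrinkage from each node to its children, use this to bound the depth, and then count nodes via the branching factor $d$.

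First I would prove the key size claim: for every non-root node $W$ with parent $U$, $|W| \le |U|/(2d^c)$. By construction $W = W_i$ for some $1 \le i \le d$, where $W_0 := \varnothing$ and $W_j = W_{j-1} \cup \Hi(\caT_{W_{j-1}}(U))$ for $1 \le j \le d$ (so that $W_1 = \Hi(\caT(U))$ corresponds to $W_{j-1} = \varnothing$). Applying \cref{lemma:high-vertices} inside the graph $G - W_{j-1}$ to the set $U \setminus W_{j-1}$ bounds $|\Hi(\caT_{W_{j-1}}(U))| \le |U \setminus W_{j-1}|/(2d^{c+1}) \le |U|/(2d^{c+1})$. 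Telescoping the increments over $j = 1,\dots,i$ yields $|W_i| \le i\cdot |U|/(2d^{c+1}) \le |U|/(2d^c)$. Iterating down the hierarchy tree, a node at depth $k$ has cardinality at most $n/(2d^c)^k$.

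For the depth bound, consider an internal node $U$ at depth $h-1$ (so that the hierarchy has depth at least $h$). Since $U$ is not a leaf, $\Hi(\caT(U)) \ne \varnothing$, hence $1 \le |\Hi(\caT(U))| \le |U|/(2d^{c+1})$ by \cref{lemma:high-vertices}, giving $|U| \ge 2d^{c+1}$. Combined with $|U| \le n/(2d^c)^{h-1}$, this yields $2d^{c+1}\cdot (2d^c)^{h-1} \le n$, which simplifies to $2^h d^{ch+1} \le n$. Since $2^h \ge 1$, this forces $d^{ch+1} \le n$, i.e., $h \le (\log_d n - 1)/c < \tfrac{1}{c}\log_d n$, and because $h$ is a non-negative integer we obtain $h \le \lfloor \tfrac{1}{c}\log_d n\rfloor$. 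Finally, each internal node has at most $d$ children, so the total number of nodes is bounded by $\sum_{i=0}^{h} d^i \le 2d^h \le 2n^{1/c} = O(n^{1/c})$.

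The only subtle bookkeeping issue I want to keep in mind, though I do not expect it to be a real obstacle, is that a child $W$ of $U$ need not be a subset of $U$ (as flagged in the excerpt just before the statement). This prevents iterating any vertex-containment inclusion, but \cref{lemma:high-vertices} gives a cardinality bound on $\Hi(\caT(\cdot))$ purely in terms of its input set's cardinality, so the shrinkage estimate depends only on $|U|$ and the argument proceeds cleanly without invoking containment.
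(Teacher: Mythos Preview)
Your proof is correct and follows essentially the same route as the paper: both establish the shrinkage $|W_i|\le i|U|/(2d^{c+1})\le |U|/(2d^c)$ via \cref{lemma:high-vertices}, iterate it to bound nodes at depth $k$ by $n/(2d^c)^k$, and then count nodes by $\sum_{i=0}^h d^i\le 2d^h$. Your depth argument is a minor variant (you use $|U|\ge 2d^{c+1}$ for an internal node at depth $h-1$, whereas the paper just uses that the deepest node is nonempty and bounds $h\le\lfloor\log_{2d^c}n\rfloor\le\lfloor\tfrac{1}{c}\log_d n\rfloor$), but the difference is inessential.
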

	\begin{proof}
		Let $U$ be a node in the hierarchy tree, $W_1,W_2,\dots,W_d$ be its children (if exist). By \cref{lemma:high-vertices}, we have $|W_1|=|\Hi(\caT(U))|\le\frac{|U|}{2d^{c+1}}$ and $|\Hi(\caT_{W_i}(U))|\le\frac{|U|}{2d^{c+1}}$ for any $0 < i \le d$. Since $W_{i+1}=W_i\cup \Hi(\caT_{W_i}(U))$, it follows that $|W_i|\le i|U|/2d^{c+1}$ for each $i$. Therefore $|W_i|\le |U|/2d^c$ for all $0 < i \le d$. Any node at the $k$-th level corresponds to a subset of $V$ with size at most $n/(2d^c)^k$, therefore the depth of the hierarchy tree is at most $h\le \lfloor\log_{2d^c}n\rfloor\le\lfloor\frac{1}{c}\log_d n\rfloor$. There are at most $\sum_{i=0}^hd^i\le 2d^h=O(n^{1/c})$ nodes in the hierarchy tree.
	\end{proof}
	
	\begin{algorithm}[t]
		\caption{Path finding algorithm}\label{alg:path-finding}
		\begin{algorithmic}[1]
			\State {$U_1\gets V$}
			\For {$i\gets 1$ to $h$}
			\State {If $U_i$ is a leaf then set $p\gets i$ and halt}\label{line:alg1-halt1}
			\State {Let $W_1,W_2,\dots,W_d$ be the children of $U_{i}$ and artificially define $W_0=\varnothing$, $W_{d+1}=W_d$.}
			\State {Let $j\in[0,d]$ be minimal such that $D\cap (W_{j+1}\setminus W_j)=\varnothing$}\label{line:j}
			\State {If $j=0$ then set $p\gets i$ and halt}\label{line:alg1-halt2}
			\State {Otherwise $U_{i+1}\gets W_j$}
			\EndFor
		\end{algorithmic}
	\end{algorithm}
	
	\begin{lemma}
		For any set $D$ of at most $d$ failures, \cref{alg:path-finding} finds in $O(hd)$ time a path $U_1(=V), U_2, \dots, U_p$ in the hierarchy tree from root to some node $U_p$, such that for every tree $T\in\bigcup_{1\le i\le p}\caT_{U_{i+1}}(U_i)$ and every $f\in D$, $f$ has pseudo-degree $\le s$ in $T$. (Assume $U_{p+1}=\varnothing$.)\label{lemma:good-path}
	\end{lemma}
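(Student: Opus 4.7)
My plan is to verify three things: (a) the algorithm is well-defined, in that line \ref{line:j} always finds some valid $j$; (b) the resulting path $U_1,\dots,U_p$ satisfies the claimed pseudo-degree bound; and (c) the total running time is $O(hd)$. The first item is immediate: by the artificial definition $W_{d+1}=W_d$, we have $W_{d+1}\setminus W_d=\varnothing$, so $j=d$ always satisfies $D\cap(W_{j+1}\setminus W_j)=\varnothing$, and the minimum in $[0,d]$ exists.

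The core of the proof is (b). Fix $1\le i\le p$ and $f\in D$; I want to show that $f$ has pseudo-degree at most $s$ in every tree of $\caT_{U_{i+1}}(U_i)$, which unfolds to $\caT(U_i\setminus U_{i+1})$ in $G-U_{i+1}$. If $f\in U_{i+1}$, then $f$ is absent from $G-U_{i+1}$ and its pseudo-degree there is trivially $0$. Otherwise $f\notin U_{i+1}$, and I argue case-by-case on how iteration $i$ terminated: (i) the loop proceeds with $U_{i+1}=W_j$ for some $1\le j\le d$, in which case $D\cap(W_{j+1}\setminus W_j)=\varnothing$ together with $W_{j+1}=W_j\cup\Hi(\caT_{W_j}(U_i))$ forces every failure outside $W_j$ to lie outside $\Hi(\caT_{W_j}(U_i))$, so $f\notin\Hi(\caT_{U_{i+1}}(U_i))$; (ii) $U_i$ is a leaf, so $U_{i+1}=\varnothing$ and $\Hi(\caT(U_i))=\varnothing$ by \cref{def:hierarchy-tree}; (iii) $j=0$, so $U_{i+1}=\varnothing$ and $D\cap\Hi(\caT(U_i))=D\cap W_1=\varnothing$. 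In each case $f\notin\Hi(\caT_{U_{i+1}}(U_i))$, hence $f$ has pseudo-degree at most $s$ in every tree there by definition of $\Hi$.

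For (c), the outer loop iterates at most $h$ times by \cref{lemma:size-and-depth}. I assume the preprocessing stores, for each non-leaf hierarchy node $U$ and each vertex $v$, the label $\ell_U(v)=\min\{k:v\in W_k\}$ (with $\ell_U(v)=d+1$ if $v$ lies in no child); the total storage is $O(n\cdot n^{1/c})$, well within the stated space budget. Then for each $f\in D$ I read off $\ell_{U_i}(f)$ in constant time and bucket the $\le d$ values in $[1,d+1]$ to locate the smallest missing label, which yields the minimal valid $j$ in $O(d)$ time per iteration. The only genuinely nontrivial step is case (i), which requires unpacking the recursive definition $W_{j+1}=W_j\cup\Hi(\caT_{W_j}(U_i))$ together with the minimality of $j$; everything else is definition-chasing or routine bookkeeping to handle the two halting branches uniformly with the in-loop case.
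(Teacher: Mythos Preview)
Your proof is correct and follows essentially the same approach as the paper's: both argue well-definedness via the artificial $W_{d+1}=W_d$, both handle the halting cases ($U_p$ a leaf, or $j=0$) by observing $D\cap\Hi(\caT(U_p))=\varnothing$, and both handle the in-loop case by unpacking $W_{j+1}=W_j\cup\Hi(\caT_{W_j}(U_i))$; your per-failure organization (branching first on $f\in U_{i+1}$ versus $f\notin U_{i+1}$) is a mild repackaging of the paper's per-level organization, and your running-time argument via the stored labels $\ell_U(v)$ matches the paper's exactly.

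One small wrinkle: in your case (i) you invoke $W_{j+1}=W_j\cup\Hi(\caT_{W_j}(U_i))$ for all $1\le j\le d$, but this identity is only the hierarchy definition for $1\le j\le d-1$; for $j=d$ the convention is $W_{d+1}=W_d$. This does not break your proof, because when $j=d$ is minimal the sets $W_1,W_2\setminus W_1,\dots,W_d\setminus W_{d-1}$ each contain a failure, so by pigeonhole $D\subseteq W_d=U_{i+1}$ and your earlier branch $f\in U_{i+1}$ already disposes of it --- the paper makes this pigeonhole step explicit, and you should too.
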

	\begin{proof}
		\cref{alg:path-finding} executes at most $O(h)$ iterations since the hierarchy tree has depth at most $h$. For every node $U$ and vertex $v\in V$, we store the first child $W_j$ of $U$ (or none) that $v$ appears in. It is then easy to implement each iteration in $O(d)$ time. When \cref{alg:path-finding} halts at \cref{line:alg1-halt1} or \ref{line:alg1-halt2}, either $U_p$ is a leaf or $D\cap W_1=\emptyset$, where $W_1 = \Hi(\caT(U_p))$ is the first child of $U_p$. Clearly, in both case we have $D\cap\Hi(\caT(U_p))=\emptyset$.
		
		For any $1\le i <p$, since $W_{d+1}=W_d$, \cref{line:j} can always find such $j$. Let $U_{i+1} = W_j$ be the $j$-th child of $U_i$. If $j=d$, then $D\cap (W_{j'+1}\setminus W_{j'})\ne\varnothing$ for $j'=0,\dots,d-1$. Since $|D|\le d$, we have $D\subseteq W_d$, thus $D\cap\Hi(\caT_{W_d}(U_i))=\varnothing$. If $1\le j <d$, then we have $D\cap(W_{j+1}\setminus W_j)=\emptyset$. Since $W_{j+1}=W_j\cup \Hi(\caT_{W_j}(U_i))$ and $\Hi(\caT_{W_j}(U_i))\cap W_j=\varnothing$, we have $D\cap \Hi(\caT_{W_j}(U_i))=\varnothing$. The lemma follows.
	\end{proof}
	
	In \cref{sec:orac2}, \cref{sec:reduce-space} and \cref{sec:orac1}, we always deal with a path $V=U_1,U_2,\dots,U_p$ in the hierarchy tree from the root $V$ to a node $U_p$ (not necessarily a leaf node). Artificially define $U_{p+1}=\varnothing$. In other words:
	\begin{itemize}
		\item We run the preprocessing algorithm for every possible such paths $V=U_1,U_2,\dots,U_p$, and we build a separate data structure for each path. The space complexity is then multiplied by a factor of $O(n^{1/c})$.
		\item In the query algorithm, given $D$, we always begin by identifying a path $V=U_1,U_2,\dots,U_p$ using \cref{lemma:good-path}, then every failed vertex $f\in D$ has pseudo-degree $\le s$ in every $\caT_{U_{i+1}}(U_i)$ ($1\le i\le p$).
	\end{itemize}

	Fix a path $V=U_1,U_2,\dots,U_p$ in the hierarchy tree, and we assume $U_{p+1}=\varnothing$. The following corollary of \cref{thm:sr-tree-cover} will be important for us.
	\begin{corollary}
		Let $x\in U_{\ell}\setminus U_{\ell+1}$ and $y\in V\setminus U_{\ell+1}$. There is a tree $T\in\caT_{U_{\ell+1}}(U_\ell)$ such that 
		\[\dep_T(x)+\dep_T(y)\le (2k-1)\delta_{G-U_{\ell+1}}(x,y).\]
		Moreover, the root of such a tree can be found in $O(k)$ time.\label{cor:T-ell(xy)}
	\end{corollary}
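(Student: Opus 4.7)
The plan is to deduce this corollary as a direct instance of the source-restricted tree cover guarantee (\cref{def:sr-tree-cover}, part \ref{item:treecover-b}) together with the algorithmic statement of \cref{thm:sr-tree-cover}, once we check that the hypotheses match.

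First I would unpack the definitions. By construction, $\caT_{U_{\ell+1}}(U_\ell)$ is the $(U_\ell\setminus U_{\ell+1})$-restricted tree cover $\caT(U_\ell\setminus U_{\ell+1})$ computed inside the graph $G-U_{\ell+1}$. So set $H := G-U_{\ell+1}$ and $S := U_\ell\setminus U_{\ell+1}$; then $\caT_{U_{\ell+1}}(U_\ell)$ is an $S$-restricted tree cover of $H$ in the sense of \cref{def:sr-tree-cover} applied to $H$.

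Next I would verify that the query pair $(x,y)$ lies in the correct roles. By hypothesis $x\in U_\ell\setminus U_{\ell+1}=S$, so $x$ plays the ``source'' role required by \cref{def:sr-tree-cover}\ref{item:treecover-b}. Likewise $y\in V\setminus U_{\ell+1}=V(H)$, so $y$ is a vertex of $H$ and is a valid ``target''. Applying \cref{def:sr-tree-cover}\ref{item:treecover-b} to $u=x$, $v=y$ in $H$ then yields a tree $T\in\caT_{U_{\ell+1}}(U_\ell)$ with $x,y\in V(T)$ and
\[\dep_T(x)+\dep_T(y)\le (2k-1)\,\delta_H(x,y)=(2k-1)\,\delta_{G-U_{\ell+1}}(x,y),\]
which is exactly the claimed inequality.

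For the algorithmic ``moreover'' clause, I would simply invoke \cref{thm:sr-tree-cover}, which states that the root $w\in S$ realising the distance bound for the pair $(x,y)$ can be located in $O(k)$ time; specialising that guarantee to $\caT(U_\ell\setminus U_{\ell+1})$ inside $G-U_{\ell+1}$ gives the $O(k)$-time lookup. There is no genuine obstacle here: the only thing to be careful about is the bookkeeping that our $S$ and host graph $H$ coincide with those used when $\caT_{U_{\ell+1}}(U_\ell)$ was preprocessed, so that the $O(k)$-time query data structure of \cref{thm:sr-tree-cover} is actually available; this is guaranteed by \cref{def:hierarchy-tree}, which stores $\caT_{U_{\ell+1}}(U_\ell)$ precisely at the edge $(U_\ell,U_{\ell+1})$ of the hierarchy tree.
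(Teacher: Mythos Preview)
Your proposal is correct and is exactly the intended argument: the paper states this as an immediate corollary of \cref{thm:sr-tree-cover} without giving an explicit proof, and you have simply (and correctly) unpacked the definitions $\caT_{U_{\ell+1}}(U_\ell)=\caT(U_\ell\setminus U_{\ell+1})$ in $G-U_{\ell+1}$ and verified that $(x,y)$ satisfies the hypotheses of \cref{def:sr-tree-cover}\ref{item:treecover-b}) and \cref{thm:sr-tree-cover}.
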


	We will denote this tree $T$ as $T_\ell(x,y)$, and denote the path from $x$ to $y$ in $T$ as $\caP_\ell(x,y)$.
	
	The set of trees is denoted as $\caT=\bigcup_{i=1}^p\caT_{U_{i+1}}(U_i)$. The proof of \cref{lemma:size-and-depth} shows that $|U_i|\le |U_{i-1}|/2d^c$, therefore $|\caT|\le\sum_{i=1}^p|U_i|=O(n)$. In a path $V=U_1,U_2,\dots,U_p$ in the hierarchy tree, since $U_{i+1}$ is not necessarily a subset of $U_i$, we define the \emph{level} of a vertex $v$ as:
	\begin{definition}
	    Fix a path $V=U_1,U_2,\dots,U_p$ in the hierarchy tree, define the \emph{level} of $v$ to be the largest integer $l$ such that $v\in U_l$, denoted as $l(v)$. Define $G_{\ell}$ to be the subgraph of $G$ induced by all vertices with level at most $\ell$.
	\end{definition}
	
	\section{An $(1+\epsilon)$-Stretch Oracle with $n^{3+1/c+o(1)}$ Space}\label{sec:orac2}
	In this section we present an oracle with
	\[\SpaceQueryStretch{n^{3+1/c}\cdot (\epsilon^{-1}\log(nW))^{O(d)}}{\poly(\log(nW),\epsilon^{-1},d)}{1+\epsilon,}\]
	for any $\epsilon>0$. In this paper (except \cref{sec:orac1} and \cref{sec:arbitrary-to-bounded}) we may assume $d=o\mleft(\frac{\log n}{\log\log n}\mright)$, $W=\poly(n)$, so we can simplify the notation for space complexity to $n^{3+1/c+o(1)}$. We will show how to reduce the space complexity to $n^{2+1/c+o(1)}$ in \cref{sec:reduce-space}.
	
	\subsection{Data Structure}
	Let $\epsilon_1=\epsilon/(2+\epsilon),\epsilon_2=\epsilon_1/(2k-1)$, so $\epsilon_1<1$. (Recall $k=\ln n$.) We first define a decomposition of a path $P$ into $O(\log|P|/\epsilon_2)$ segments. This definition has the same spirit as \cite[Definition 2.1]{CCFK17}, but it partitions the \emph{vertices} (excluding $u,v$), rather than \emph{edges}, into segments.
	\begin{definition}[($\epsilon_2$-)segments]
		Consider a path $P=(u=v_0,v_1,v_2,\dots,v_{\ell}=v)$. For every $1\le i,j<\ell$, we say $v_i$ and $v_j$ are in the same \emph{segment} if  one of the two conditions hold:
		\begin{itemize}
		    \item $|P[u,v_i]|,|P[u,v_j]|\leq |P|/2$ and $\lfloor\log_{1+\epsilon_2}|P[u,v_i]|\rfloor=\lfloor\log_{1+\epsilon_2}|P[u,v_j]|\rfloor$.
		    \item $|P[v_i,v]|,|P[v_j,v]|< |P|/2$ and $\lfloor\log_{1+\epsilon_2}|P[v_i,v]|\rfloor=\lfloor\log_{1+\epsilon_2}|P[v_j,v]|\rfloor$.
		\end{itemize}\label{def:segments}
	\end{definition}
	\begin{figure}
		\centering
		\includegraphics[width=0.8\linewidth]{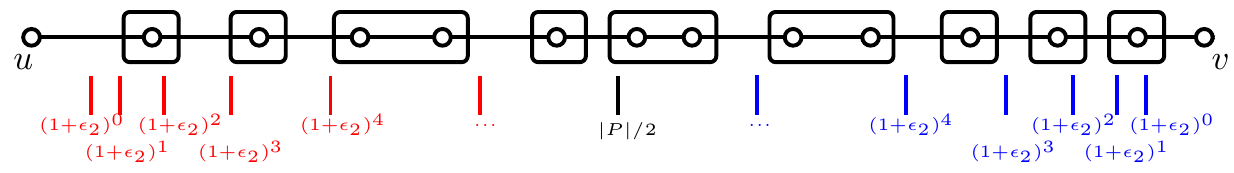}
		\caption{An illustration of path decomposition, where each rounded rectangle denotes a segment.}
		\label{fig:segments}
	\end{figure}
	
	It is easy to verify that being in the same segment is indeed an equivalence relation, and each equivalence class is indeed a contiguous segment of the path. (See \cref{fig:segments}.) There are $O(\log|P|/\epsilon_2)$ different segments. For a vertex $v_i$ on $P$, define $\seg(v_i,P)$ as the segment it belongs to. More precisely, $\seg(v_i,P)=P[v_l,v_r]$ where $v_l$ is the leftmost (closest-to-$u$) vertex in the segment, and $v_r$ is the rightmost (closest-to-$v$) vertex in the segment. Define $\seg(P)$ as the set of segments on $P$. Note that $u$ and $v$ do not belong to any segment.
	
	\begin{lemma}
		Let $P$ be a path from $u$ to $v$, $x\in P\setminus\{u,v\}$, then $|\seg(x,P)|\le\epsilon_2\cdot\min\{|P[u,x]|,|P[x,v]|\}$.\label{lemma:epsilon2}
	\end{lemma}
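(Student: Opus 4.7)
The plan is a direct case analysis driven by the two clauses in \cref{def:segments}, exploiting the fact that each segment is a single $(1+\epsilon_2)$-bucket of prefix (resp.\ suffix) lengths. Write $\seg(x,P)=P[v_l,v_r]$, noting that $v_l$ lies weakly between $u$ and $x$ and $v_r$ lies weakly between $x$ and $v$, so in particular $|P[u,v_l]|\le|P[u,x]|$ and $|P[v_r,v]|\le|P[x,v]|$. Since $|P[u,x]|+|P[x,v]|=|P|$, at most one of $|P[u,x]|$ and $|P[x,v]|$ can exceed $|P|/2$, and the lemma is symmetric in $u,v$, so it suffices to treat the case in which $x$ is classified by the first clause of the definition (the other case is handled identically after reversing $P$).

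First I would assume $|P[u,x]|\le|P|/2$, so the equivalence class of $x$ is governed by the first clause. Then every vertex $v_i\in\seg(x,P)$, including $v_l$ and $v_r$, satisfies $|P[u,v_i]|\le|P|/2$ and shares the same value of $\lfloor\log_{1+\epsilon_2}|P[u,v_i]|\rfloor=:j$. This places $|P[u,v_l]|$ and $|P[u,v_r]|$ both in the interval $[(1+\epsilon_2)^j,(1+\epsilon_2)^{j+1})$. Then a one-line computation gives
\begin{equation*}
|\seg(x,P)|=|P[v_l,v_r]|=|P[u,v_r]|-|P[u,v_l]|<(1+\epsilon_2)^{j+1}-(1+\epsilon_2)^j=\epsilon_2(1+\epsilon_2)^j\le\epsilon_2|P[u,v_l]|\le\epsilon_2|P[u,x]|.
\end{equation*}
Finally, in this case $|P[x,v]|\ge|P|/2\ge|P[u,x]|$, so $\min\{|P[u,x]|,|P[x,v]|\}=|P[u,x]|$ and the claim follows. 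The symmetric case $|P[x,v]|<|P|/2$ is handled by the second clause of the definition with $(u,v_l,v_r,x)$ replaced by $(v,v_r,v_l,x)$ and suffix lengths in place of prefix lengths.

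There is no real obstacle here; the only subtlety is verifying that the two clauses in \cref{def:segments} partition the segment containing $x$ cleanly, i.e.\ that one cannot mix prefix- and suffix-classified vertices inside a single class. This follows because a vertex classified by both clauses would satisfy $|P[u,v_i]|\le|P|/2$ and $|P[v_i,v]|<|P|/2$ simultaneously, contradicting $|P[u,v_i]|+|P[v_i,v]|=|P|$; hence the chosen clause for $x$ applies uniformly on $\seg(x,P)$, and the bucket bound above is legitimate.
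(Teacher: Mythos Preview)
Your proof is correct and follows essentially the same approach as the paper: a case split on which clause of \cref{def:segments} governs $x$, followed by the observation that all prefix (resp.\ suffix) lengths in the segment lie in a single $(1+\epsilon_2)$-bucket $[(1+\epsilon_2)^j,(1+\epsilon_2)^{j+1})$, giving the bound $\epsilon_2(1+\epsilon_2)^j\le\epsilon_2|P[u,x]|$. Your extra paragraph checking that a single segment cannot mix prefix- and suffix-classified vertices is a nice bit of care that the paper leaves implicit.
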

	\begin{proof}
		If $|P[u,x]|\leq |P|/2$, $|P[u,x]|\leq |P[x,v]|$ and $\seg(x,P)$ is defined in the first way. Let $i=\lfloor\log_{1+\epsilon_2}|P[u,x]|\rfloor$. For any $y\in\seg(x,P)$, $(1+\epsilon_2)^i\le|P[u,y]|<(1+\epsilon_2)^{i+1}$. Thus $|\seg(x,P)|\le(1+\epsilon_2)^{i+1}-(1+\epsilon_2)^i=\epsilon_2(1+\epsilon_2)^i\le\epsilon_2|P[u,x]|$. The case that $|P[x,v]|<|P|/2$ is symmetric.
	\end{proof}
	
	Recall that we build a data structure for every node $U_p$ in the hierarchy tree. Let the path from the root $V$ to $U_p$ be $U_1(=V),U_2\dots,U_p$ and $U_{p+1}=\varnothing$. The data structure for $U_p$ consists of decision trees $FT(u,v)$ for all pairs of vertices $u,v\in V$, which are constructed as follows:
	\begin{itemize}
		\item Each node\footnote{Recall that we use ``vertex'' for nodes in the input graph, and ``node'' for nodes in the decision trees and the hierarchy tree.} $\alpha\in FT(u,v)$ is associated with a set $\avoid(\alpha)\subseteq V$ of vertices that we avoid.
		\item Denote the root of $FT(u,v)$ as $\root(u,v)$, and let $\avoid(\root(u,v))=\varnothing$.
		\item For each node $\alpha\in FT(u,v)$, we store the path $P_\alpha=\pi_{G-\avoid(\alpha)}(u,v)$, \ie the shortest $u$-$v$ path in $G$ not passing through $\avoid(\alpha)$. If $u,v$ are not connected in $G-\avoid(\alpha)$, we assume that $P_\alpha$ is a path with length $+\infty$.
		\item For each node $\alpha$ of depth $<d$ (the root has depth $0$), each segment $X\in\seg(P_\alpha)$, and each $1\le i\le p$, we create a child $ch(\alpha,X,i)$ of $\alpha$, in which
			\begin{equation}
				\avoid(ch(\alpha,X,i))=\avoid(\alpha)\cup (X\cap U_i). \label{eq:avoid-children}
			\end{equation}
		That is, the path stored in a child of $\alpha$ needs to avoid $\avoid(\alpha)$ and the vertices of $U_i$ in a segment $X\in\seg(P_\alpha)$.
	\end{itemize}

	The decision tree has depth $d$, and each non-leaf node has $O(p\cdot \log|P|/\epsilon_2)=O(h\epsilon^{-1}\log n\log(nW))$ children. (Recall $h=O(\log n/\log d)$ and $\epsilon_2=\epsilon/((2+\epsilon)(2k-1))=\epsilon/\Theta(\log n)$.) We store in each node $\alpha$ the path $P_\alpha$ as well as a table of $\seg(v,P_\alpha)$ for each $v\in P_\alpha$, so that we can quickly locate any vertex in $P_{\alpha}$. Therefore one decision tree occupies $n\cdot O(h\epsilon^{-1}\log n\log(nW))^d$ space. As there are $O(n^{1/c})$ nodes in the hierarchy tree, and for each node we need to store $O(n^2)$ decision trees, the total space complexity is $n^{3+1/c}\cdot O(h\epsilon^{-1}\log n\log(nW))^d$.
	
	\subsection{Query Algorithm}\label{sec:query}
	Given $u,v$ and a set of failed vertices $D$, by \cref{lemma:good-path} we first find a path $U_1(=V),U_2\dots,U_p$ in the hierarchy tree, and set $U_{p+1}=\varnothing$. Let $\caT=\bigcup_{i=1}^p\caT_{U_{i+1}}(U_i)$, then by \cref{lemma:good-path}, the pseudo-degrees of all $f\in D$ in every tree in $\caT$ is at most $s$.
	
	As in~\cite{CCFK17}, the query algorithm builds an auxiliary graph $H$, but the definition of $H$ is different from~\cite{CCFK17}. The query algorithm builds $H$ as \cref{def:VH}, and outputs $|\pi_H(u,v)|$ as an $(1+\epsilon)$-approximation of $|\pi_{G-D}(u,v)|$.\footnote{The vertex set of $H$ corresponds to the set of ``intermediate vertices'' (also called $H$) in \cref{sec:brief-overview}.}
	
	\begin{definition} (Graph $H$)
		\begin{itemize}
			\item For a failure $f\in D$ and a tree $T\in\caT$, if $f\in V(T)$, then we define the neighbors of $f$ in $T$ as
			\[N_T(f)=\{\parent_T(f)\}\cup(\children_T(f)\cap\Trunk(T)).\]
			In other words, $N_T(f)$ consists of the parent of $f$ in $T$, and the set of children of $f$ in $T$ which are trunk vertices. (Note that if $f\in V(T)\setminus\Trunk(T)$, then it is possible that $\parent_T(f)\not\in\Trunk(T)$.)
			\item Define \[N(f)=\bigcup_{f\in T\in\caT}N_T(f).\]
			That is, $N(f)$ is the union of $N_T(f)$'s over all trees $T\in\caT$ such that $f\in T$.
			\item The vertex set of the auxiliary graph $H$ is
			\[V(H)=\left(\{u,v\}\cup\bigcup_{f\in D}N(f)\right)\setminus D.\]
			For each $x,y\in V(H)$, the weight of the edge $(x,y)$ in $H$ is equal to $\DecTree(x,y,D)$, as defined in \cref{alg:DecTree}.
		\end{itemize}
		\label{def:VH}
	\end{definition}

	Note that in $V(H)$, vertices except $u$ and $v$ are defined independently from $u$ and $v$. By \cref{lemma:good-path}, for every $f\in D$ and $T\in\caT$ containing $f$, we have $|N_T(f)|\le s+1$. Therefore $|V(H)|\le dp\cdot 2e\ln^2 n\cdot (s+1)+2=O(d^{c+2}h\log^4 n)$. 

	\begin{algorithm}[H]
	\caption{Algorithm $\DecTree$\label{alg:DecTree}}
		\begin{algorithmic}[1]
			\Function{$\DecTree$}{$u,v,D$}
			\State $\alpha\gets \root(u,v)$
			\While {$D\cap P_\alpha\ne\varnothing$}
				\State {$f\gets$ the vertex in $D\cap P_\alpha$ with the highest level, breaking ties arbitrarily}\label{line:choose-f}
				\State {$\alpha\gets ch(\alpha,\seg(f,P_\alpha),l(f))$} \label{line:del-DecTree} \Comment{Recall $l(f)$ is the level of $f$, \ie the largest $l$ s.t.~$f\in U_l$.}
			\EndWhile
			\State\Return $|P_\alpha|$
			\EndFunction
		\end{algorithmic}
	\end{algorithm}
	
	Consider \cref{alg:DecTree}. After each iteration, the set $D\cap\avoid(\alpha)$ will contain at least one new vertex (namely $f$). When $|D\cap\avoid(\alpha)|=d$, we will have that $D\cap P_\alpha=\varnothing$, and the algorithm terminates. Thus the algorithm executes at most $d$ iterations. It is easy to see that each iteration only requires $O(d)$ time, thus the time complexity of \cref{alg:DecTree} is $O(d^2)$.
	
	It takes $O(|V(H)|^2d^2)$ time to build the graph $H$, and $O(|V(H)|^2)$ time to compute $|\pi_H(u,v)|$. Therefore, the query algorithm runs in $O(|V(H)|^2d^2)=O(d^{2c+6}h^2\log^8n)$ time.
	
	Since finally $D\cap P_{\alpha}=\emptyset$, we have the following observation:
	\begin{observation}\label{lemma:longer}
		For all $u,v\in V(H)$, $\DecTree(u,v,D)\geq |\pi_{G-D}(u,v)|$.
	\end{observation}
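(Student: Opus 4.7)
The claim asserts that $\DecTree(u,v,D)$ returns an upper bound on the true distance $|\pi_{G-D}(u,v)|$, and by inspection of \cref{alg:DecTree} it returns $|P_\alpha|$ for the node $\alpha$ at which the while-loop exits. The entire plan reduces to showing that the path $P_\alpha$ stored at the terminating node is itself a valid $u$-$v$ path in $G-D$; once we have this, the inequality follows immediately from the fact that $\pi_{G-D}(u,v)$ is by definition the \emph{shortest} such path.

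The first step is to record the invariant, built directly into the definition of the decision tree, that $P_\alpha = \pi_{G-\avoid(\alpha)}(u,v)$. Hence $P_\alpha$ is a simple $u$-$v$ path in $G$ whose internal vertices all lie in $V\setminus\avoid(\alpha)$. Because $u,v\in V(H)$ satisfy $u,v\notin D$, the endpoints are automatically free of failures, so the only thing to verify is that no interior vertex of $P_\alpha$ lies in $D$.

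Second, I would observe that the loop-exit condition is precisely $D\cap P_\alpha=\varnothing$, so at the terminating node $P_\alpha$ uses no vertex of $D$ at all. Combined with the invariant, this makes $P_\alpha$ a legal $u$-$v$ walk in $G-D$, so $|P_\alpha|\ge \delta_{G-D}(u,v) = |\pi_{G-D}(u,v)|$, which is exactly the inequality we want. To be fully rigorous about well-definedness, I would also note (as the paragraph after \cref{alg:DecTree} does) that $|D\cap\avoid(\alpha)|$ strictly increases in each iteration: the chosen $f$ lies in $X=\seg(f,P_\alpha)$ and in $U_{l(f)}$ by definition of level, hence in $X\cap U_{l(f)}$, which by \eqref{eq:avoid-children} is added to $\avoid$ in the child. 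After at most $d$ iterations the loop must exit, so the terminating $\alpha$ actually exists.

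There is no real obstacle here: the observation is essentially a monotonicity statement ($\avoid$ only grows, and any path in a graph with more restrictions is still a path in a graph with fewer restrictions). The only point that requires a sentence of care is checking that $f$ really ends up in $\avoid$ of the chosen child, which uses $f\in U_{l(f)}$ from the definition of level. Once that bookkeeping is done, the proof is a one-line consequence of the loop-exit condition together with the definition of $\pi_{G-D}$ as an infimum over $u$-$v$ paths in $G-D$.
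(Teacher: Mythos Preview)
Your proposal is correct and takes exactly the same approach as the paper: the paper simply remarks ``Since finally $D\cap P_{\alpha}=\emptyset$'' and states the observation without further proof, and you have spelled out in full the one-line argument that the terminating $P_\alpha$ is a valid $u$-$v$ path in $G-D$ (together with the termination check already given in the paragraph after \cref{alg:DecTree}).
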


	For every $\alpha$ and $f$ considered in \cref{alg:DecTree}, let $\alphanext=ch(\alpha,\seg(f,P_\alpha),l(f))$ be the next decision tree node the algorithm considers. If the optimal path $\pi_{G-D}(u,v)$ never intersects the set $\avoid(\alphanext)\setminus\avoid(\alpha)$ in any iteration, then $\DecTree(u,v,D)$ would indeed return the optimal path $\pi_{G-D}(u,v)$. However, if $\pi_{G-D}(u,v)$ goes through some non-failure vertex $x\in\avoid(\alphanext)\setminus\avoid(\alpha)$, then $x$ is close to $f$, and we will show that there is some $w\in V(H)$ close to $x$, so the optimal path can be approximated by $\pi_{G-D}(u,w)\circ \pi_{G-D}(w,v)$. This is illustrated in the following important lemma. Notice that $\avoid(\alphanext)\setminus\avoid(\alpha)=\seg(f,P_\alpha)\cap U_{l(f)}$, so in this case, the level of $x$ is no less than the level of $f$, \ie $l(x)\ge l(f)$.
	
	\begin{lemma}\label{lemma:ball_VH}
	    Given a failed vertex $f$ and a non-failure vertex $x$ such that $l(x)\geq l(f)$, if there is a path $P$ in $G$ between $x$ and $f$ which contains no other failed vertices, then there is a vertex $w\in V(H)$ such that $|\pi_{G-D}(x,w)|\leq (2k-1)|P|$.
	\end{lemma}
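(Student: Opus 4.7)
The plan is to use the source-restricted tree cover at the level of $f$ to locate a tree $T$ containing both $x$ and $f$, pick $w$ as the tree-neighbor of $f$ on the path toward $x$, and then bound $|\pi_{G-D}(x,w)|$ via the tree depths.

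First I would set $\ell=l(f)$, so that $f\in U_\ell\setminus U_{\ell+1}$, and verify from the hypothesis $l(x)\ge\ell$ together with the level definition (``the largest index such that $v\in U_l$'') that $x\in V\setminus U_{\ell+1}$. Then \cref{cor:T-ell(xy)} applied with $f$ playing the role of ``source'' (since $f\in U_\ell\setminus U_{\ell+1}$) and $x$ playing the role of ``target'' yields a tree $T\in\caT_{U_{\ell+1}}(U_\ell)$ rooted at some $r\in U_\ell\setminus U_{\ell+1}$ satisfying
\[
  \dep_T(x)+\dep_T(f)\ \le\ (2k-1)\,\delta_{G-U_{\ell+1}}(f,x)\ \le\ (2k-1)|P|,
\]
where the last inequality uses that $P$ has only $f$ as a failure and in fact lies in $G-U_{\ell+1}$ (an interior vertex inside $U_{\ell+1}$ would have level strictly greater than $\ell\ge l(f)$, contradicting the structural setup we use).

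Next I would set $w$ to be the vertex immediately preceding $f$ on the tree path from $x$ to $f$ in $T$. If $f$ is not an ancestor of $x$ in $T$, then $w=\parent_T(f)\in N_T(f)$ by definition. If instead $f$ is an ancestor of $x$ in $T$, then $w$ is the unique child of $f$ lying on the tree path down to $x$, and I would show $w\in\Trunk(T)$ by exhibiting two vertices of $S=U_\ell\setminus U_{\ell+1}$ (namely the root $r$ and $x$ itself in the relevant subcase $l(x)=\ell$) whose tree path passes through $w$. Either way $w\in N_T(f)\subseteq N(f)\subseteq V(H)$. The tree path from $x$ to $w$ in $T$ has length at most $\dep_T(x)+\dep_T(w)\le\dep_T(x)+\dep_T(f)\le(2k-1)|P|$ and avoids $f$ because it is the $x$-to-$f$ tree path with the last edge removed.

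The main obstacle I anticipate is converting the tree-path bound into a bound on $|\pi_{G-D}(x,w)|$ rather than merely on the distance in $G-U_{\ell+1}-\{f\}$: since $T$ only excludes vertices of $U_{\ell+1}$, it could \emph{a priori} still contain other failures of $D$ that lie on the $x$-to-$w$ tree path. I would handle this either by invoking \cref{lemma:good-path} (each $f'\in D$ has pseudo-degree at most $s$ in $T$, which sharply limits how failures can sit along a single tree path) or by an induction that, whenever the tree path hits another failure $f'$, recursively applies the lemma to $f'$ and the subpath from $x$ to $f'$, chaining the resulting witnesses while preserving the $(2k-1)$ stretch.
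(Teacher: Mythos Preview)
Your approach has a genuine gap at the very first step: setting $\ell=l(f)$ and working in the tree cover $\caT_{U_{\ell+1}}(U_\ell)$ does not work in general. The claim that $l(x)\ge\ell$ implies $x\in V\setminus U_{\ell+1}$ is false; for instance if $l(x)=\ell+1$ then by definition $x\in U_{\ell+1}$, so $x$ is not even a vertex of the graph $G-U_{\ell+1}$ underlying that tree cover, and \cref{cor:T-ell(xy)} cannot be invoked. Second, even in the case $l(x)=\ell$, the path $P$ may contain interior vertices of level strictly greater than $\ell$ (nothing in the hypotheses forbids this), so $P$ need not lie in $G-U_{\ell+1}$ and you cannot bound $\delta_{G-U_{\ell+1}}(f,x)$ by $|P|$. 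The vague appeal to ``contradicting the structural setup we use'' does not correspond to any actual hypothesis of the lemma.

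The paper repairs both issues with a single move: let $y$ be the vertex of \emph{highest} level on $P[x,f)$ and set $j=l(y)$. Then $j\ge l(x)\ge l(f)$ forces every vertex of $P$ to have level $\le j$, so all of $P$ lies in $G-U_{j+1}$, and $y\in U_j\setminus U_{j+1}$ serves as the source in \cref{cor:T-ell(xy)} (with $f$ as target). This also cleanly resolves the obstacle you flag at the end: in the tree $T_j(y,f)$ one walks from $y$ toward $f$ along the tree path and stops at the \emph{first} failure $f'$ encountered (possibly $f'\ne f$), taking $w$ to be its predecessor. The $y$-to-$w$ tree path is then failure-free by construction and $P[x,y]$ is failure-free by hypothesis, giving $|\pi_{G-D}(x,w)|\le |P[x,y]|+|P'[y,w]|\le (2k-1)|P|$ directly. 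The trunk argument also goes through because $y$ (not $x$) is guaranteed to lie in $U_j\setminus U_{j+1}$, so whenever $w$ is a child of $f'$, it lies on the root-to-$y$ path and is automatically a trunk vertex. Your suggested fixes via \cref{lemma:good-path} or an inductive chaining would not obviously preserve the tight $(2k-1)$ factor.
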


	\begin{figure}[H]
		\centering
		\includegraphics[scale=1]{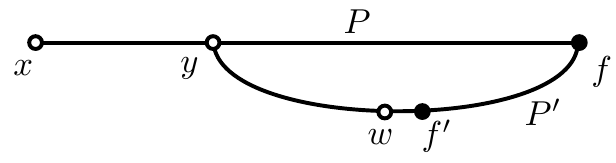}
	\end{figure}

	\begin{proof}
	    Let $y$ be the vertex with the highest level on $P[x,f)$, and suppose $j=l(y)$. Then $l(y)=j\geq l(x)\geq l(f)$. Since there are no vertices on $P$ with level $>j$, $P$ is in $G-U_{j+1}$. Let $T_j(y,f)$ be the tree in the tree cover $\caT_{U_{j+1}}(U_j)$, such that the distance between $y$ and $f$ in $T_j(y,f)$ is at most $(2k-1)|\pi_{G-U_{j+1}}(y,f)|$. Let $P'=\caP_j(y,f)$ be the path between $y$ and $f$ in $T_j(y,f)$. Let $f'$ be the first failed vertex on $P'[y,f]$, and consider the predecessor $w$ of $f'$ on the path $P'[y,f]$. (That is, $P'[y,w]$ is intact from failures.) Since $P'$ is a path on the tree $T_j(y,f)$, $w$ is either the parent of $f'$ or a child of $f'$ in this tree. \begin{itemize}
	    	\item If $w$ is the parent of $f'$, then $w\in V(H)$ by the definition of $V(H)$.
	    	\item If $w$ is a child of $f'$, then $y$ is a descendant of $w$ and $f'$. Since $l(y)=j$, $y$ is a trunk vertex in $T_j(y,f)$. It follows that $w$, as an ancestor of $y$, is also a trunk vertex in $T_j(y,f)$, therefore $w\in V(H)$.
	    \end{itemize}
		Therefore, in either case, we have $w\in V(H)$. Since
	    \begin{align}
	    |\pi_{G-D}(x,w)|\le&~ |P[x,y]|+|P'[y,w]| \nonumber\\
		\le&~|P[x,y]|+ |P'| \nonumber\\
		\le&~|P[x,y]|+ (2k-1)|\pi_{G-U_{j+1}}(y,f)|\nonumber\\
		\le&~|P[x,y]|+ (2k-1)|P[y,f]| \nonumber\\
		\le&~(2k-1)|P|, \nonumber
		\end{align}
		the lemma is true.
	\end{proof}

	\subsection{Proof of Correctness}
	In this section, we show that $|\pi_H(u,v)|\le (1+\epsilon)|\pi_{G-D}(u,v)|$, proving the correctness of the query algorithm.
	
	From the algorithm $\DecTree(u,v,D)$, the path we get is the shortest path between $u$ and $v$ in the graph $G-\avoid(\afinal)$, where $\afinal$ is the last visited decision tree node of the algorithm. As we discussed before, if the real shortest path $\pi_{G-D}(u,v)$ does not go through any vertex in $\avoid(\afinal)$, then $\DecTree(u,v,D)$ will return the correct answer. Otherwise, as $\avoid(\afinal)$ is the union of $\leq d$ sets of the form $\seg(f,P_\alpha)\cap U_i$, $\pi_{G-D}(u,v)$ must go through some vertex $x$ in a set $\seg(f,P_\alpha)\cap U_i$. We can show that such $x$ will be ``close'' to a vertex $w$ in $V(H)$ (by \cref{lemma:ball_VH}), so we can use the vertices in $V(H)$ as intermediate vertices to obtain an approximate shortest path.
	
	\begin{lemma}\label{lemma:VH}
	    In the query algorithm $\DecTree(u,v,D)$, let $\alpha$ be a decision tree node it encounters, $f\in D$ be the failed vertex which is selected in \cref{line:choose-f} of \cref{alg:DecTree} and $i=l(f)$. (That is, $f$ is the vertex in $D\cap P_\alpha$ with the highest level.) For any non-failure vertex $x$ in $\seg(f,P_\alpha)\cap U_i$, there is a vertex $w\in V(H)$ such that $|\pi_{G-D}(x,w)|\leq \epsilon_1 \min\{|P_\alpha[u,x]|,|P_\alpha[x,v]|\}$.
	\end{lemma}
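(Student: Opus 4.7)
My plan is to reduce this lemma directly to \cref{lemma:ball_VH} by exhibiting a failure-free sub-path of $P_\alpha$ from $x$ to some suitable failed vertex, and then use \cref{lemma:epsilon2} to turn the segment-length bound into the desired bound in terms of $\min\{|P_\alpha[u,x]|,|P_\alpha[x,v]|\}$.

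First I would set up the hypotheses of \cref{lemma:ball_VH}. Without loss of generality assume $x$ appears before $f$ along $P_\alpha$ (the opposite case is symmetric). Both $x$ and $f$ lie in the same segment $\seg(f,P_\alpha)$, so the sub-path $P_\alpha[x,f]$ is entirely contained in that segment, and in particular $|P_\alpha[x,f]|\le|\seg(f,P_\alpha)|$. Walk along $P_\alpha$ from $x$ towards $f$, and let $f^\star$ denote the first vertex of $D$ encountered; this is well-defined since $f\in D$ itself eventually appears. By construction, $P_\alpha[x,f^\star]$ is a path in $G$ connecting the non-failed vertex $x$ to the failed vertex $f^\star$ with no other failures on it. Moreover $f^\star\in D\cap P_\alpha$, so by the choice of $f$ as the highest-level failure on $P_\alpha$, we have $l(f^\star)\le l(f)=i\le l(x)$ (the last inequality because $x\in U_i$ means $l(x)\ge i$).

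Next I would apply \cref{lemma:ball_VH} with the path $P_\alpha[x,f^\star]$ playing the role of $P$. That lemma yields a vertex $w\in V(H)$ with
\[
|\pi_{G-D}(x,w)|\le (2k-1)\,|P_\alpha[x,f^\star]|\le (2k-1)\,|\seg(f,P_\alpha)|.
\]
Since $x\in\seg(f,P_\alpha)$, we have $\seg(x,P_\alpha)=\seg(f,P_\alpha)$. Then \cref{lemma:epsilon2} applied to $x$ on $P_\alpha$ gives
\[
|\seg(f,P_\alpha)|=|\seg(x,P_\alpha)|\le\epsilon_2\cdot\min\{|P_\alpha[u,x]|,|P_\alpha[x,v]|\}.
\]
Combining these two estimates and recalling that $\epsilon_2=\epsilon_1/(2k-1)$, we obtain
\[
|\pi_{G-D}(x,w)|\le (2k-1)\epsilon_2\cdot\min\{|P_\alpha[u,x]|,|P_\alpha[x,v]|\}=\epsilon_1\cdot\min\{|P_\alpha[u,x]|,|P_\alpha[x,v]|\},
\]
which is exactly the bound claimed.

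The only subtle point in this plan is finding a failure-free sub-path of $P_\alpha$ starting at $x$; this is where the choice ``$f$ is the highest-level failure on $P_\alpha$'' becomes essential, because it guarantees that the nearer failure $f^\star$ we land on also satisfies $l(f^\star)\le l(x)$, which is precisely what \cref{lemma:ball_VH} requires. Everything else is a mechanical chaining of the two previously established lemmas.
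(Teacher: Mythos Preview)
Your proof is correct and follows essentially the same approach as the paper: pick the nearest failure $f^\star$ (the paper calls it $f'$) to $x$ along $P_\alpha$, observe that $l(f^\star)\le l(f)\le l(x)$ so \cref{lemma:ball_VH} applies, and then invoke \cref{lemma:epsilon2} together with $\epsilon_2=\epsilon_1/(2k-1)$ to finish. The only cosmetic difference is that you walk specifically towards $f$ to locate $f^\star$, whereas the paper takes the closest failure on the segment in either direction; both choices work for the same reason.
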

	
	\begin{figure}[H]
		\centering
		\includegraphics{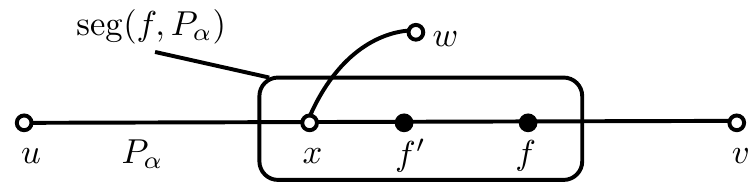}
	\end{figure}

	\begin{proof}
	    Let $f'\in D$ be the failed vertex closest to $x$ on the segment $\seg(f,P_\alpha)$, then there are no failed vertices in $P_\alpha[x,f')$ or $P_\alpha(f',x]$. (W.l.o.g.~we assume it is $P_\alpha[x,f')$.) We have $l(f)\ge l(f')$ by \cref{alg:DecTree}. As $x\in U_i=U_{l(f)}$, we have $l(x)\ge l(f)$, hence $l(x)\ge l(f')$. By \cref{lemma:ball_VH}, there is a vertex $w\in V(H)$ such that $|\pi_{G-D}(x,w)|\leq (2k-1)|P_\alpha[x,f']|$. Since $|P_\alpha[x,f']|\leq \epsilon_2\min\{|P_\alpha[u,x]|,|P_\alpha[x,v]|\}$ and $\epsilon_2=\epsilon_1/(2k-1)$, the lemma holds.
	\end{proof}

	We show that for $u,v\in V(H)$, if the optimal path $\pi_{G-D}(u,v)$ is \emph{not} found by $\DecTree(u,v,D)$, then we can indeed find some $w\in V(H)$ such that $\pi_{G-D}(u,w)\circ\pi_{G-D}(w,v)$ is a good approximation of $\pi_{G-D}(u,v)$. Moreover, one of $\pi_{G-D}(u,w)$ or $\pi_{G-D}(w,v)$ can be dealt with by \cref{alg:DecTree}, therefore we only need to ``recurse'' on the other one.
	
	\begin{lemma}\label{lemma:pair}
	    Let $u,v\in V(H)$ and $P=\pi_{G-D}(u,v)$. If $\DecTree(u,v,D)>|P|$, there exist $x\in P\setminus\{u,v\},y\in\{u,v\},w\in V(H)$ such that
		\begin{enumerate}[(a)]
			\item $|\pi_{G-D}(x,y)|\le \frac{1}{2}|\pi_{G-D}(u,v)|$,\label{item:pair-1}
			\item $|\pi_{G-D}(x,w)|\leq \epsilon_1|\pi_{G-D}(x,y)|$, and\label{item:pair-2}
			\item $\DecTree(y,w,D)\leq |\pi_{G-D}(y,x)|+|\pi_{G-D}(x,w)|$, which is smaller than $|\pi_{G-D}(u,v)|$.\label{item:pair-3}
		\end{enumerate}
	\end{lemma}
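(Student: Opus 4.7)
The plan is to trace the execution of $\DecTree(u,v,D)$, locate a vertex $x\in P$ that was mistakenly excluded, invoke \cref{lemma:VH} to extract $w\in V(H)$, and pick $y\in\{u,v\}$ to be the endpoint of $P$ closer to $x$. Concretely, let $\alpha_0=\root(u,v),\alpha_1,\dots,\alpha_K=\afinal$ be the decision-tree nodes visited by $\DecTree(u,v,D)$ and $f_i$ the failure selected at step $\alpha_{i-1}$. Since $\DecTree(u,v,D)=|P_\afinal|>|P|$ while $P$ avoids $D$, the path $P$ must meet $\avoid(\afinal)\setminus D$; among all such vertices I would pick $x$ to be one added to the avoid-set at the \emph{smallest} step index $j^*$, and write $\alpha:=\alpha_{j^*-1}$ and $f:=f_{j^*}$. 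By the minimality of $j^*$, no vertex of $P$ lies in $\avoid(\alpha)$, so $P$ itself is a valid $u$-$v$ path in $G-\avoid(\alpha)$; in particular both subpaths $P[u,x]$ and $P[x,v]$ lie in $G-\avoid(\alpha)$, and since $x\in\seg(f,P_\alpha)\cap U_{l(f)}\subseteq P_\alpha$, the shortest-path optimality of $P_\alpha$ yields $|P_\alpha[u,x]|\le|P[u,x]|$ and $|P_\alpha[x,v]|\le|P[x,v]|$.

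\cref{lemma:VH} applied at $(\alpha,f,x)$ now produces $w\in V(H)$ with $|\pi_{G-D}(x,w)|\le\epsilon_1\min\{|P_\alpha[u,x]|,|P_\alpha[x,v]|\}$. Choose $y\in\{u,v\}$ so that $|P[y,x]|=\min\{|P[u,x]|,|P[x,v]|\}\le|P|/2$; since $P$ is a shortest $u$-$v$ path in $G-D$ and $y\in\{u,v\}$, $\pi_{G-D}(y,x)=P[y,x]$, which gives (a). Chaining the inequalities,
\[|\pi_{G-D}(x,w)|\le\epsilon_1\min\{|P_\alpha[u,x]|,|P_\alpha[x,v]|\}\le\epsilon_1|P[y,x]|=\epsilon_1|\pi_{G-D}(y,x)|,\]
which is (b). The ``$<|\pi_{G-D}(u,v)|$'' clause of (c) then follows because $|\pi_{G-D}(y,x)|+|\pi_{G-D}(x,w)|\le(1+\epsilon_1)|P|/2<|P|$ by $\epsilon_1<1$.

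The hard part is the remaining inequality $\DecTree(y,w,D)\le|\pi_{G-D}(y,x)|+|\pi_{G-D}(x,w)|$, since \cref{lemma:longer} only gives the reverse direction. The plan is to set $W:=P[y,x]\circ\pi_{G-D}(x,w)$, a $y$-$w$ walk in $G-D$ whose length is exactly the desired right-hand side, and to show that $W$ remains a valid walk in $G-\avoid(\beta)$ throughout the execution of $\DecTree(y,w,D)$. Once this ``$W$ survives'' invariant is established, the terminal path $P_{\afinal'}=\pi_{G-\avoid(\afinal')}(y,w)$ satisfies $|P_{\afinal'}|\le|W|$ and (c) follows. Proving the invariant requires stepping through $\DecTree(y,w,D)$ and arguing, via the highest-level-failure rule of \cref{alg:DecTree} combined with \cref{def:segments}, that at each iteration the freshly added set $\seg(f'',P_\beta)\cap U_{l(f'')}$ cannot meet $W$. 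This is the $(y,w)$-analogue of the ``first-added $x$'' argument used above for $(u,v)$, and is where the careful choice of $w$ via \cref{lemma:ball_VH} (in particular the fact that $w$ sits close to a failure inside a level-respecting tree) is expected to pay off.
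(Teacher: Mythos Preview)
Your argument for (a) and (b) is correct and matches the paper: locate the last node $\alpha$ with $\avoid(\alpha)\cap P=\varnothing$, pick $x\in P\cap(\avoid(\alphanext)\setminus\avoid(\alpha))$, apply \cref{lemma:VH}, and take $y$ to be the closer endpoint.

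The gap is in (c). Your plan is to show that the walk $W=P[y,x]\circ\pi_{G-D}(x,w)$ survives every iteration of $\DecTree(y,w,D)$, but you give no mechanism for this, and the specific $(x,w)$ you have chosen carries no property that would prevent $W$ from being hit. The vertex $w$ is merely \emph{some} element of $V(H)$ produced by \cref{lemma:VH}; nothing about its construction rules out that $\DecTree(y,w,D)$ deletes a vertex of $W$ at some step. The hope that ``the careful choice of $w$ via \cref{lemma:ball_VH}\dots is expected to pay off'' is not substantiated: \cref{lemma:ball_VH} only yields a distance bound, not an avoidance guarantee along a second decision-tree run.

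The paper closes this gap by an extremal argument rather than a survival argument. After establishing (as you do) that \emph{some} triple satisfying (b) exists, it discards that particular triple and instead, among all $(x,y,w)$ with $x\in P\setminus\{u,v\}$, $y\in\{u,v\}$, $w\in V(H)$ satisfying (b), selects one minimizing $|\pi_{G-D}(x,y)|$, breaking ties by minimizing $|\pi_{G-D}(x,w)|$. Condition (a) then holds automatically. For (c), suppose $\DecTree(y,w,D)>|W|$; run the same ``first excluded vertex'' analysis on $\DecTree(y,w,D)$ to obtain $x'\in W$ and $w'\in V(H)$ with $|\pi_{G-D}(x',w')|\le\epsilon_1\min\{|W[y,x']|,|W[x',w]|\}$. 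If $x'\in P[y,x)$ then $(x',y,w')$ satisfies (b) with strictly smaller $|\pi_{G-D}(x',y)|$; if $x'\in \pi_{G-D}(x,w)$ then (using $\epsilon_1<1$) $(x,y,w')$ satisfies (b) with the same first coordinate but strictly smaller $|\pi_{G-D}(x,w')|$. Either case contradicts the minimal choice. This re-selection step is the missing idea in your proposal.
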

	
	\begin{figure}[H]
		\centering
		\includegraphics{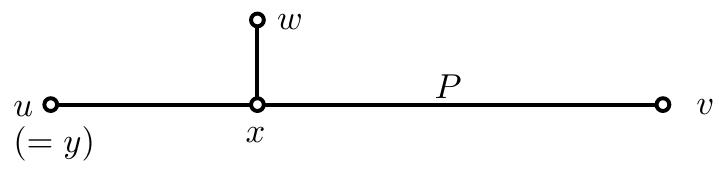}
	\end{figure}
	
	\begin{proof}
		First we prove that there exists a triple $(x,y,w)$ that satisfies (\ref{item:pair-1}) and (\ref{item:pair-2}).
		
		Let $\alpha$ be the last decision tree node visited by $\DecTree(u,v,D)$ such that $\avoid(\alpha)\cap P=\varnothing$. Since $\DecTree(u,v,D)>|P|$, the procedure $\DecTree(u,v,D)$ did not terminate at $\alpha$, \ie it visited a child $\alphanext$ of $\alpha$ such that $P\cap\avoid(\alphanext)\ne\varnothing$. Recall that $\avoid(\alphanext)\setminus\avoid(\alpha)=\seg(f,P_\alpha)\cap U_{l(f)}$ where $f$ is the failure selected by \cref{line:choose-f} of \cref{alg:DecTree}. Therefore $P$ reaches some vertex $x\in \seg(f,P_{\alpha})\cap U_{l(f)}$. Since $P_{\alpha}$ is the shortest $u$-$v$ path in $G-\avoid(\alpha)$ and $P$ is \emph{some} $u$-$v$ path in $G-\avoid(\alpha)$, we know that $|P_{\alpha}[u,x]|\le |P[u,x]|$ and $|P_{\alpha}[x,v]|\le |P[x,v]|$.
		
		By \cref{lemma:VH}, there is a vertex $w\in V(H)$ such that \[
		|\pi_{G-D}(x,w)|\le \epsilon_1\min\{|P_{\alpha}[u,x]|,|P_{\alpha}[x,v]|\}\le\epsilon_1\min\{|P[u,x]|,|P[x,v]|\}.
		\]
		Let $y$ be the endpoint in $\{u,v\}$ that is closer to $x$, then $(x,y,w)$ satisfies (\ref{item:pair-1}) and (\ref{item:pair-2}).
		
		Among all triples $x\in P\setminus\{u,v\},y\in\{u,v\},w\in V(H)$ satisfying (\ref{item:pair-2}), we pick a triple minimizing $|\pi_{G-D}(x,y)|$, and in case of a tie choose a triple minimizing $|\pi_{G-D}(x,w)|$. It is easy to see that (\ref{item:pair-1}) is also satisfied. In the following we prove that (\ref{item:pair-3}) is satisfied.
	    
		We compare the path $P'=\pi_{G-D}(y,x)\circ\pi_{G-D}(x,w)$ between $y$ and $w$, with the path returned by $\DecTree(y,w,D)$. For the sake of contradiction, suppose $|P'|<\DecTree(y,w,D)$. Let $\alpha'$ be the last decision tree node visited in $\DecTree(y,w,D)$ such that $\avoid(\alpha')\cap P'=\varnothing$. We can also see that $P'$ reaches some vertex $x'\in\seg(f',P_{\alpha'})\cap U_{l(f')}$, where $f'$ is the failure selected by \cref{line:choose-f} of \cref{alg:DecTree}. We use \cref{lemma:VH} again and conclude that there is a vertex $w'\in V(H)$ such that
		\[|\pi_{G-D}(x',w')|\le\epsilon_1\min\{|P_{\alpha'}[y,x']|,|P_{\alpha'}[x',w]|\}\le\epsilon_1\min\{|P'[y,x']|,|P'[x',w]|\}.\]

		Since $x'\in P'=P'[y,x]\circ P'[x,w]$, there are two cases. (See \cref{figure:pair}.)
		
		\begin{figure}[H]
			\centering
			\includegraphics[scale=0.7]{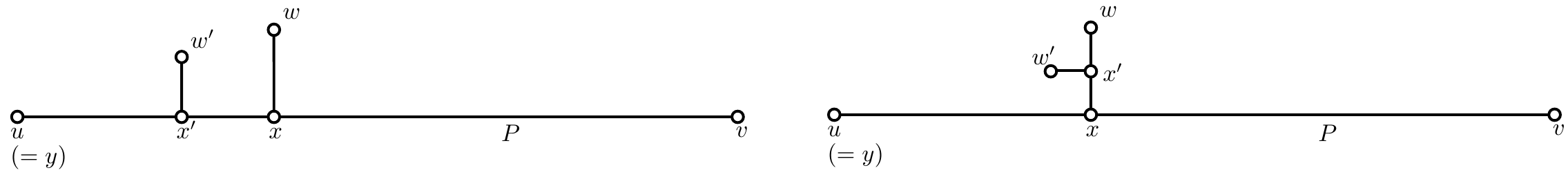}
			\caption{Two cases in the proof of \cref{lemma:pair}.}\label{figure:pair}
		\end{figure}
		
		\begin{itemize}
			\item If $x'\in P'[y,x)$, then $|\pi_{G-D}(x',w')|\le\epsilon_1|\pi_{G-D}(x',y)|$, \ie the triple $(x',y,w')$ also satisfies (\ref{item:pair-2}). Since $|\pi_{G-D}(x',y)|<|\pi_{G-D}(x,y)|$, this contradicts our choice of $(x,y,w)$.
			\item If $x'\in P'[x,w]$, then $|\pi_{G-D}(x,w')|\leq |P'[x,x']|+|\pi_{G-D}(x',w')|\leq |P'[x,x']|+\epsilon_1|P'[x',w]|$. As $\epsilon_1<1$, we have $|\pi_{G-D}(x,w')|<|\pi_{G-D}(x,w)|$, and $(x,y,w')$ also satisfies (\ref{item:pair-2}), contradicting our choice of $(x,y,w)$.
		\end{itemize}
	
		Hence it must be true that $|P'|\ge \DecTree(y,w,D)$.
	\end{proof}

	By these lemmas, we can now prove our desired approximation ratio.

	\begin{theorem}\label{thm:main-approx}
		For every pair $u,v\in V(H)$, the query algorithm in \cref{sec:query} returns an $(1+\epsilon)$-approximation of $|\pi_{G-D}(u,v)|$.
	\end{theorem}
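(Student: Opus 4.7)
\textbf{Proof plan for \cref{thm:main-approx}.}

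The lower bound $|\pi_H(u,v)|\ge|\pi_{G-D}(u,v)|$ is immediate: by \cref{lemma:longer} every edge $(x,y)$ of $H$ has weight $\DecTree(x,y,D)\ge|\pi_{G-D}(x,y)|$, so concatenating the edges of any $u$-$v$ path in $H$ yields, via the triangle inequality in $G-D$, a quantity at least $|\pi_{G-D}(u,v)|$. The real content is the upper bound $|\pi_H(u,v)|\le(1+\epsilon)|\pi_{G-D}(u,v)|$, and my plan is to prove it by strong induction on the pairs $(u,v)\in V(H)\times V(H)$ ordered by increasing $|\pi_{G-D}(u,v)|$ (infinite distances are trivial; ties broken arbitrarily).

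The easy case is when $\DecTree(u,v,D)=|\pi_{G-D}(u,v)|$: then the single edge $(u,v)$ in $H$ already certifies $|\pi_H(u,v)|\le|\pi_{G-D}(u,v)|$. In the hard case $\DecTree(u,v,D)>|\pi_{G-D}(u,v)|$, I invoke \cref{lemma:pair} to obtain $x\in P\setminus\{u,v\}$, $y\in\{u,v\}$, and $w\in V(H)$ satisfying (\ref{item:pair-1})--(\ref{item:pair-3}); write $\bar y$ for the other endpoint. Part (c) gives a direct bound on the $H$-edge weight at the ``close'' side: $H(y,w)=\DecTree(y,w,D)\le|P[y,x]|+|\pi_{G-D}(x,w)|\le(1+\epsilon_1)|P[y,x]|$. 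On the ``far'' side I want to apply the inductive hypothesis to $(w,\bar y)$, so I first argue that $|\pi_{G-D}(w,\bar y)|<|\pi_{G-D}(u,v)|$: by the triangle inequality and (b), $|\pi_{G-D}(w,\bar y)|\le|\pi_{G-D}(w,x)|+|P[x,\bar y]|\le\epsilon_1|P[y,x]|+|P[x,\bar y]|$, which is strictly smaller than $|P|=|P[y,x]|+|P[x,\bar y]|$ since $\epsilon_1<1$ and $|P[y,x]|>0$ (as $x\ne y$).

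Combining these two estimates through the intermediate vertex $w$ gives
\[
|\pi_H(u,v)|\;\le\;H(y,w)+|\pi_H(w,\bar y)|\;\le\;(1+\epsilon_1)|P[y,x]|+(1+\epsilon)\bigl(\epsilon_1|P[y,x]|+|P[x,\bar y]|\bigr),
\]
where the second summand uses the inductive hypothesis. Expanding the right-hand side and comparing with $(1+\epsilon)|P|=(1+\epsilon)(|P[y,x]|+|P[x,\bar y]|)$ reduces the desired inequality to $\epsilon_1(2+\epsilon)\le\epsilon$, which is exactly our choice $\epsilon_1=\epsilon/(2+\epsilon)$. The base case of the induction (the pair in $V(H)^2$ minimizing $|\pi_{G-D}|$) is handled automatically: in that case the hard case is impossible, since it would produce a pair $(w,\bar y)$ with strictly smaller $G-D$-distance.

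The one step I expect to require care is ensuring that the induction is well-founded and that the decrease $|\pi_{G-D}(w,\bar y)|<|\pi_{G-D}(u,v)|$ is genuinely strict; everything else is bookkeeping around \cref{lemma:pair} and the tuned choice of $\epsilon_1$. The neat feature is that the algebra closes with equality at $\epsilon_1(2+\epsilon)=\epsilon$, confirming that the parameter choice in the data structure is tight for this argument.
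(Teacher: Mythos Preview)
Your proposal is correct and follows essentially the same approach as the paper: strong induction on pairs $(u,v)\in V(H)^2$ ordered by $|\pi_{G-D}(u,v)|$, invoking \cref{lemma:pair} in the hard case to route through $w$, applying the inductive hypothesis to the ``far'' side $(w,\bar y)$, and closing with the identity $\epsilon_1(2+\epsilon)=\epsilon$. The only cosmetic difference is that the paper fixes $y=u$ w.l.o.g.\ and uses the pair $(y,w)$ (via \cref{lemma:pair}(\ref{item:pair-3})) for the base-case contradiction, whereas you use $(w,\bar y)$; both yield the needed strict decrease.
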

	\begin{proof}
		It is easy to see that $|\pi_H(u,v)|\ge |\pi_{G-D}(u,v)|$ for every $u,v\in V(H)$. We prove $|\pi_H(u,v)|\le (1+\epsilon)|\pi_{G-D}(u,v)|$ below.
		
		We sort all pairs of vertices $u,v\in V(H)$ ($u\neq v$) by increasing order of $|\pi_{G-D}(u,v)|$, and prove by induction that $|\pi_H(u,v)|\le(1+\epsilon)|\pi_{G-D}(u,v)|$ on this order. For the $u,v$ having the smallest $|\pi_{G-D}(u,v)|$, if $\DecTree(u,v,D)>|\pi_{G-D}(u,v)|$, from \cref{lemma:pair}, there exist $y,w\in V(H)$ so that $|\pi_{G-D}(y,w)|<|\pi_{G-D}(u,v)|$, which is a contradiction. Therefore $|\pi_H(u,v)|=\DecTree(u,v,D)=|\pi_{G-D}(u,v)|$. 
	    
	    Fix some $u,v\in V(H)$, assume that for all pairs $u',v'\in V(H)$ such that $|\pi_{G-D}(u',v')|<|\pi_{G-D}(u,v)|$, it is true that $|\pi_H(u',v')|\le(1+\epsilon)|\pi_{G-D}(u',v')|$. If $\DecTree(u,v,D)=|\pi_{G-D}(u,v)|$ then $|\pi_H(u,v)|\le(1+\epsilon)|\pi_{G-D}(u,v)|$ follows trivially. Otherwise we use \cref{lemma:pair} to obtain a triple $(x,y,w)$, where $x\in \pi_{G-D}(u,v)\setminus\{u,v\}$, $y\in\{u,v\}$, and $w\in V(H)$. We assume w.l.o.g.~$y=u$, then $|\pi_{G-D}(w,x)|\le\epsilon_1|\pi_{G-D}(u,x)|$. Since $\epsilon_1<1$, $|\pi_{G-D}(w,v)|\leq |\pi_{G-D}(w,x)|+|\pi_{G-D}(x,v)|\leq \epsilon_1|\pi_{G-D}(u,x)|+|\pi_{G-D}(x,v)|<|\pi_{G-D}(u,v)|$, thus $|\pi_{H}(w,v)|\le (1+\epsilon)|\pi_{G-D}(w,v)|$ by induction hypothesis. We have:
	    \begin{align*}
	        |\pi_H(u,v)| \le &~ \DecTree(u,w,D)+|\pi_H(w,v)| \\
	        \le &~ |\pi_{G-D}(u,x)|+|\pi_{G-D}(x,w)|+(1+\epsilon)|\pi_{G-D}(w,v)| \\
	        \le &~ |\pi_{G-D}(u,x)|+\epsilon_1|\pi_{G-D}(u,x)|+(1+\epsilon)(\epsilon_1|\pi_{G-D}(u,x)|+|\pi_{G-D}(x,v)|) \\
	        \le &~ (1+\epsilon_1+(1+\epsilon)\epsilon_1)|\pi_{G-D}(u,x)|+(1+\epsilon)|\pi_{G-D}(x,v)|\\
	        \le &~ (1+\epsilon)|\pi_{G-D}(u,v)|.\qedhere
	    \end{align*}
	\end{proof}
	
	We conclude that there is a VSDO with
	\[\SpaceQueryStretch{n^{3+1/c}\cdot O(\epsilon^{-1}\log^2 n\log(nW)/\log d)^d}{O(d^{2c+6}\log^{10}n/\log^2 d)}{1+\epsilon.}\]
	
	In \cref{sec:reduce-space}, we will improve the space complexity to $n^{2+1/c+o(1)}$ when $d=o\mleft(\frac{\log n}{\log\log n}\mright)$ and $W=\poly(n)$, while increasing the query time slightly.

\section{An $n^{2+1/c+o(1)}$-Space $(1+\epsilon)$-Stretch Oracle}\label{sec:reduce-space}
	In this section, we discuss the modifications needed to reduce the space complexity to $n^{2+1/c+o(1)}$. Here we set $\epsilon_3=\frac{\epsilon}{2|V(H)|}$, and $\epsilon_4=\epsilon_3/(4k-2)$, where $V(H)$ is the same as in \cref{sec:orac2} (and \cref{lemma:ball_VH} still holds), while $E(H)$ are recomputed in this section. We assume that $\epsilon$ is small enough, in particular that $\epsilon<1$ and $\epsilon_4<\sqrt{2}-1$.
	
	\subsection{A Structural Theorem}
	Similar to \cite{CCFK17}, the main idea is, instead of storing the paths $P_\alpha$ as-is in every node $\alpha$ of $FT(u,v)$, we store an implicit representation of these paths. If the representation has size $\poly(\log (nW),\epsilon^{-1})$ instead of $\Omega(n)$, then our data structure has space complexity $n^{2+1/c+o(1)}$.
	
	In \cite{CCFK17}, the authors defined \emph{$k$-decomposable paths}, which are paths that can be represented as the concatenation of at most $k+1$ shortest paths in $G$, interleaved with at most $k$ edges. They relied on the fact (Theorem 2 of \cite{ABKCM02}) that any $k$-edge-failure shortest path is a $k$-decomposable path in $G$, therefore has a succinct representation. Unfortunately, the analogue of this statement in~\cite{CCFK17} in case of vertex failures does not hold. Even if we only remove \emph{one} vertex (\ie $|D|=1$), a shortest path in $G-D$ might not be a $k$-decomposable path for $k=o(n)$.\footnote{Consider an unweighted graph $G=(V,E_1\cup E_2)$ where $V=\{v_i:0\le i\le n\}$, $E_1=\{(v_0,v_i):1\le i\le n\}$ and $E_2=\{(v_i,v_{i+1}):1\le i<n\}$. Then $\pi_{G-\{v_0\}}(v_1,v_n)$ is not a $0.1n$-decomposable path.}
	
	In this section, we prove a structural theorem similar to the above fact used in \cite{CCFK17}. Before we proceed, we need some definitions.
	
	From \cref{lemma:pair} we can see that for any $u,v\in V(H)$, if the path $\pi_{G-D}(u,v)$ is \emph{$\epsilon_1$-far away from} $V(H)$ in the following sense, then $\DecTree(u,v,D)$ indeed finds the distance between $u$ and $v$ in $G-D$:
	\begin{definition}
		We say that a path $P$ from $u$ to $v$ is \emph{$\epsilon$-far away from $V(H)$} if there are no vertices $x\in P\setminus \{u,v\}, w\in V(H)$ such that $|\pi_{G-D}(x,w)|\leq \epsilon\cdot\min\{|P[u,x]|,|P[x,v]|\}$. (See \cref{fig:far-away}.)
	\end{definition}
	
	\begin{figure}
		\centering
		\includegraphics{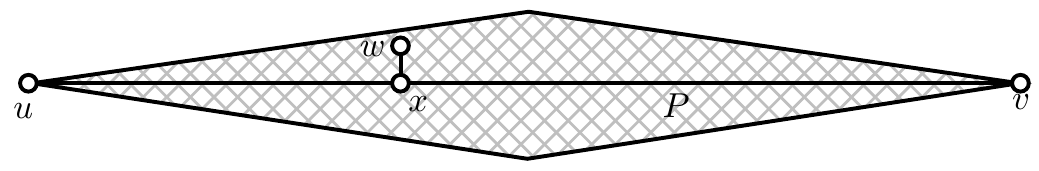}
		\caption{If $P$ is far away from $V(H)$, it means that a certain ``diamond''-shaped area does not contain vertices $w\in V(H)$.}
		\label{fig:far-away}
	\end{figure}
	
	Instead of considering all $d$-failure shortest paths, we only study the ones which are $\epsilon_3$-far away from $V(H)$. We will use the concept of $k$-expath as in~\cite{CCFK17} and re-define it as \emph{$\epsilon_4$-segment expath}. Also, instead of considering the concatenation of at most $k+1$ shortest paths in the original graph $G$, every segment here is a shortest path in some $G_i$. (Recall that $G_i$ is the induced subgraph of $G$ on all vertices of level $\le i$.)
	
	\begin{definition}\label{def:segment_expath}
		A path $P$ in $G$ is an \emph{$\epsilon$-segment expath} if the following holds. If we partition $P$ into $\epsilon$-segments as in \cref{def:segments}, then for every segment $P[x,y]$, there is some $1\le i\le p$ such that $P[x,y]$ is a shortest path in $G_i$.
	\end{definition}
	
	The following structural theorem for shortest paths $\epsilon$-far away from $V(H)$ will be crucial to us. Interestingly, it is a consequence of \cref{lemma:ball_VH}.
	
	\begin{theorem}\label{thm:decomposable}
		For $u,v\in V(H)$, if $\pi_{G-D}(u,v)$ is $\epsilon_3$-far away from $V(H)$, then it is an $\epsilon_4$-segment expath.
	\end{theorem}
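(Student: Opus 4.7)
The plan is to fix an arbitrary $\epsilon_4$-segment $Y=P[x,y]$ of $P=\pi_{G-D}(u,v)$ and exhibit a level index $i$ for which $Y$ is the (unique) shortest $x$-$y$ path in $G_i$. The natural choice is $i=\max_{z\in Y}l(z)$, so that $Y$ itself lies in $G_i$. The quantitative tool I will use is the $\epsilon_4$-analogue of \cref{lemma:epsilon2}, whose proof goes through verbatim: for every $z\in Y$,
\[
  |Y|\le \epsilon_4\cdot\min\{|P[u,z]|,\,|P[z,v]|\}.
\]

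I argue by contradiction. Suppose there is an $x$-$y$ path $Q$ in $G_i$ with $|Q|<|Y|$. Then $Q$ must meet $D$: otherwise $P[u,x]\circ Q\circ P[y,v]$ would be a $u$-$v$ walk in $G-D$ strictly shorter than $P$, contradicting optimality of $P$. Let $f$ be the first failed vertex encountered while traversing $Q$ from $x$; so $Q[x,f)$ is failure-free, and $l(f)\le i$ since $f\in V(G_i)$. Now pick $z\in Y$ with $l(z)=i$, so $l(z)\ge l(f)$. The walk $R=Y[z,x]\circ Q[x,f]$ goes from the non-failed vertex $z$ to the failure $f$, it contains no failure other than $f$ (because $Y$ is failure-free and $Q[x,f)$ is failure-free), and its length satisfies
\[
  |R|\le |Y[z,x]|+|Q[x,f]|\le |Y|+|Q|<2|Y|.
\]
After shortcutting if necessary, $R$ is a genuine path with the same endpoints, no extra failures, and length $<2|Y|$.

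Now I apply \cref{lemma:ball_VH} to the pair $(z,f)$: there exists $w\in V(H)$ with $|\pi_{G-D}(z,w)|\le(2k-1)|R|<(4k-2)|Y|$. Combining with the bound on $|Y|$ and the choice $\epsilon_4=\epsilon_3/(4k-2)$, this gives
\[
  |\pi_{G-D}(z,w)|<(4k-2)\epsilon_4\cdot\min\{|P[u,z]|,|P[z,v]|\}=\epsilon_3\cdot\min\{|P[u,z]|,|P[z,v]|\}.
\]
Since $z\in Y\subseteq P\setminus\{u,v\}$ (the endpoints $u,v$ belong to no segment by \cref{def:segments}), this contradicts the hypothesis that $P$ is $\epsilon_3$-far from $V(H)$. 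Hence no shorter $Q$ exists, $Y$ is the shortest $x$-$y$ path in $G_i$, and $P$ is an $\epsilon_4$-segment expath by \cref{def:segment_expath}.

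The one subtle point—what I expect to be the main obstacle—is producing a path from some vertex of $Y$ to an actual failure that (i) avoids other failures and (ii) connects vertices with the correct level relationship demanded by \cref{lemma:ball_VH}. Both requirements are met simultaneously by the single clever choice $l(z)=\max_{z'\in Y}l(z')$: this is exactly what forces $l(z)\ge l(f)$ for every failure $f$ appearing on any competitor path $Q\subseteq G_i$, and the ``route backwards along $Y$, then forward along $Q$ until the first failure'' construction keeps the length below $2|Y|$, which is precisely the slack the ratio $\epsilon_3/\epsilon_4=4k-2$ was designed to absorb.
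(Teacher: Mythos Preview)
Your proof is correct and follows essentially the same route as the paper: pick $z$ to be the highest-level vertex on the segment, take the first failure $f$ on a shorter competitor path in $G_i$, apply \cref{lemma:ball_VH} to the walk $Y[z,x]\circ Q[x,f]$ of length at most $2|Y|$, and use $\epsilon_4=\epsilon_3/(4k-2)$ to derive a contradiction with the $\epsilon_3$-far-away hypothesis. The only cosmetic differences are that the paper fixes $Q=\pi_{G_i}(x,y)$ and uses a WLOG on which half of $P$ the segment lies in, whereas you keep the $\min$ throughout and mention the shortcutting explicitly; neither affects the argument.
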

	
	\begin{figure}[H]
		\centering
		\includegraphics{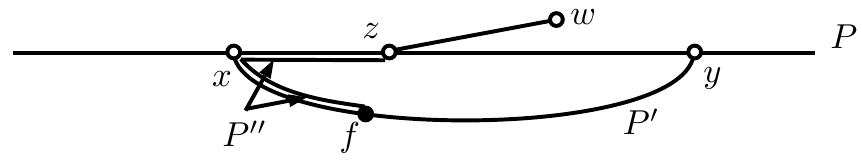}
	\end{figure}
	
	\begin{proof}
		Let $P=\pi_{G-D}(u,v)$ and $P[x,y]$ be an $\epsilon_4$-segment of $P$. W.l.o.g.~assume that $x$ and $y$ are in the first half of $P$, and $x$ is closer to $u$ than $y$. By \cref{lemma:epsilon2}, $|P[x,y]|\le\epsilon_4 |P[u,x]|$. (Recall that $P[x,y]$ is an $\epsilon_4$-segment.) Consider the vertex $z$ with the highest level on $P[x,y]$, and let its level be $i=l(z)$. Then $P[x,y]$ is a path in $G_i$. If it is not the shortest path $\pi_{G_i}(x,y)$, then $\pi_{G_i}(x,y)$ must go through some failed vertex in $D$. (Since otherwise we can find a path in $G-D$ shorter than $\pi_{G-D}(u,v)$.) Let $P'=\pi_{G_i}(x,y)$ and $f$ be the failed vertex on $P'$ closest to $x$.
		
		Since $f$ is in the graph $G_i$, we have $l(f)\le i=l(z)$. There is a path $P''=P[z,x]\circ P'[x,f]$ connecting $z$ and $f$ that does not go through other failed vertices. By \cref{lemma:ball_VH}, there is a vertex $w\in V(H)$ such that $|\pi_{G-D}(z,w)|\le (2k-1)|P''|$. We have
		\begin{align*}
			|\pi_{G-D}(z,w)| \le&~ (2k-1)(|P[z,x]|+|P'[x,f]|)\\
			\le &~ 2(2k-1)|P[x,y]|\\
			\le &~ 2(2k-1)\epsilon_4|P[u,x]|\\
			\le &~\epsilon_3|P[u,z]|,
		\end{align*}
		which contradicts that $\pi_{G-D}(u,v)$ is $\epsilon_3$-far away from $V(H)$. Therefore, $P[x,y]$ is a shortest path in $G_i$.
	\end{proof}
	
	\subsection{New Data Structure}\label{sec:new_query}
	
	We generalize the concept of $\epsilon_4$-segment expath to $\epsilon_4$-expath by adding more flexibility.
	
	\begin{definition}\label{def:expath}
		Let $B=\lceil\log_{1+\epsilon_4}(nW)\rceil$. An \emph{$\epsilon_4$-expath} $P$ from $u$ to $v$ in $G$ is a path which is a concatenation of subpaths $P_0,\dots, P_{2B+1}$ interleaved with at most $2B+3$ edges\footnote{That is, the concatenation of $e_0,P_0,e_1,P_1,\dots,P_{2B+1},e_{2B+2}$ where each $e_i$ is either empty or an edge.}, such that the following hold. \begin{itemize}
			\item For every $P_k=P[u_k,v_k]$ ($0\leq k\le 2B+1$), $P_k$ is either empty, or a shortest path in $G_i$ for some level $1\le i\le p$.
			\item If $k<B+1$, then $|P[u,v_k]|\leq (1+\epsilon_4)^{k}$; if $k\ge B+1$, then $|P[u_k,v]|\leq (1+\epsilon_4)^{2B+1-k}$.
		\end{itemize}
	\end{definition}
	
	\begin{lemma}
		An $\epsilon_4$-segment expath $P$ from $u$ to $v$ is an $\epsilon_4$-expath.\label{lemma:expath}
	\end{lemma}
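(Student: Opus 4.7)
The plan is to show that the natural $\epsilon_4$-segment decomposition of $P$, padded with empty subpaths and empty edges, already fits the template of \cref{def:expath}.

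First, I would analyze the structure of $P$ under the $\epsilon_4$-segment partition (\cref{def:segments}). Let $S_0,\dots,S_{B-1}$ be the ``$u$-side'' segments (so $w\in S_j$ iff $|P[u,w]|\le|P|/2$ and $\lfloor\log_{1+\epsilon_4}|P[u,w]|\rfloor=j$) and $S'_0,\dots,S'_{B-1}$ the analogous ``$v$-side'' segments; each index lies in $\{0,\dots,B-1\}$ because $|P|\le nW$ and $B=\lceil\log_{1+\epsilon_4}(nW)\rceil$. Some segments may be empty. Because $|P[u,\cdot]|$ strictly increases along $P$, the nonempty segments appear on $P$ in the order $S_{j_1},\dots,S_{j_a},S'_{j'_b},\dots,S'_{j'_1}$ with $j_1<\cdots<j_a$ and $j'_1<\cdots<j'_b$. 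The key observation is that between any two consecutive nonempty segments along $P$ (and between $u$ or $v$ and the nearest nonempty segment), $P$ contains exactly one edge: any interior vertex in between would itself fall into some segment, contradicting the contiguity of segments as equivalence classes. Hence $P$ is literally the nonempty segments joined by at most $a+b+1\le 2B+1$ inter-segment edges.

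Next I would define the expath decomposition. Put $P_0$ and $P_{2B+1}$ equal to the empty paths at $u$ and $v$. For $1\le k\le B$ set $P_k := S_{k-1}$, and for $B+1\le k\le 2B$ set $P_k := S'_{2B-k}$; if the corresponding segment is empty, $P_k$ is empty and is positioned at the single path vertex shared with its surrounding nonempty neighbours. The $\le 2B+1$ inter-segment edges of $P$ are then assigned, in order along $P$, to distinct $e_k$'s (namely the one immediately following the preceding nonempty $P_k$), and the remaining $e_k$'s are set to empty. By construction, $e_0\circ P_0\circ e_1\circ P_1\circ\cdots\circ P_{2B+1}\circ e_{2B+2}$ reproduces $P$ literally, well within the budget of $2B+3$ edges.

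Finally, the two conditions of \cref{def:expath} follow directly. Each nonempty $P_k$ coincides with an $\epsilon_4$-segment of $P$, which by \cref{def:segment_expath} is a shortest path in some $G_i$. For the length bound, consider $1\le k\le B$: if $P_k=S_{k-1}$ is nonempty, its last vertex $v_k$ lies in bucket $k-1$, giving $|P[u,v_k]|<(1+\epsilon_4)^k$; if $P_k$ is empty, $v_k$ coincides with the last vertex of the nearest preceding nonempty $P_{k'}$ (or $u$ if none), so $|P[u,v_k]|<(1+\epsilon_4)^{k'}\le(1+\epsilon_4)^k$. The case $B+1\le k\le 2B$ is symmetric, measuring from $v$. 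The only genuine work is bookkeeping---positioning empty subpaths and distributing the inter-segment edges consistently among the $e_k$'s so that the concatenation exactly reproduces $P$---so I do not expect a substantive combinatorial obstacle.
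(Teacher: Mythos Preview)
Your proposal is correct and follows essentially the same approach as the paper: you map the $\epsilon_4$-segment with bucket index $j$ on the $u$-side to $P_{j+1}$ (and symmetrically on the $v$-side), then read off the length bound $|P[u,v_k]|<(1+\epsilon_4)^k$ directly from the bucket definition. The paper's proof is terser---it just names the segments $P_1,\dots,P_j$ and observes the bound---while you spell out the bookkeeping for the interleaving edges and empty subpaths that the paper leaves implicit, but there is no substantive difference in the argument.
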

	\begin{proof}
		Let $j=\lfloor\log_{1+\epsilon_4}(|P|/2)\rfloor+1$, since $1+\epsilon_4<2$, we have $j<B+1$. Let $P_1,\dots,P_j$ be the $\epsilon_4$-segments (possibly empty) in the first half of $P$ such that for every $1\le k\le j$ and $x\in P_k=P[u_k,v_k]$, $\lfloor\log_{1+\epsilon_4}|P[u,x]|\rfloor=k-1$. Then $|P[u,v_k]|\le (1+\epsilon_4)^k$, which satisfies the definition of $\epsilon_4$-expath. The second half of $P$ is symmetric.
	\end{proof}
	
	Recall that our data structure in \cref{sec:orac2} consists of $O(n^2)$ decision trees, one for each pair $u,v\in V$. Each decision tree node $\alpha$ stores a path $P_\alpha$, a subset $\avoid(\alpha)$ of $V$, and the links to its children. The query algorithm builds an auxiliary graph $H$ on the vertex set $V(H)$ defined in \cref{def:VH}, and uses \cref{alg:DecTree} to determine the edge weights in $H$. At last we output $|\pi_H(u,v)|$ as the approximation of $|\pi_{G-D}(u,v)|$. Our improved data structure also fits into this high-level description, but there are some small changes:
	\begin{itemize}
		\item For every $1\le i\le p$, we also store the shortest path distance matrix of $G_i$.
		\item We use $\epsilon_4$ in the definition of segments.
		\item In every node $\alpha\in FT(u,v)$, we store the shortest $\epsilon_4$-expath (instead of the general shortest path) from $u$ to $v$ in $G-\avoid(\alpha)$, still denoted as $P_{\alpha}$. To save space, for every subpath $P_k=[u_k,v_k]$ which is a shortest path in some $G_i$, we only need to store a triple $(u_k,v_k,i)$.
		\item To check whether $f$ is in a path $P_{\alpha}$, for every subpath $P_k=[u_k,v_k]$ which is a shortest path in some $G_i$, we check whether $|\pi_{G_i}(u_k,f)|+|\pi_{G_i}(f,v_k)|=|\pi_{G_i}(u_k,v_k)|$. By the uniqueness assumption of shortest paths (see~\cite{DTCR08}), this method can locate a vertex in $P_{\alpha}$.
	\end{itemize}
	
	We now prove the correctness of this data structure, \ie $|\pi_H(u,v)|$ is always an $(1+\epsilon)$-approximation of $|\pi_{G-D}(u,v)|$.
	
	First, it is easy to check that \cref{lemma:VH} holds for parameter $(2k-1)\epsilon_4=\epsilon_3/2$, as follows. 
	
	\begin{reminder}{\cref{lemma:VH}}
		In the query algorithm $\DecTree(u,v,D)$, let $\alpha$ be a decision tree node it encounters, $f\in D$ be the failed vertex which is selected in \cref{line:choose-f} of \cref{alg:DecTree} and $i=l(f)$. (That is, $f$ is the vertex in $D\cap P_\alpha$ with the highest level.) For any non-failure vertex $x$ in $\seg(f,P_\alpha)\cap U_i$, there is a vertex $w\in V(H)$ such that $|\pi_{G-D}(x,w)|\leq (\epsilon_3/2) \min\{|P_\alpha[u,x]|,|P_\alpha[x,v]|\}$.
	\end{reminder}
	
	Recall that \cref{lemma:pair} shows that, in the data structure in \cref{sec:orac2}, any shortest path $\epsilon_1$-far away from $V(H)$ can be found by $\DecTree$. We show that this is also true in the new data structure, where ``$\epsilon_1$-far away'' is changed to ``$\epsilon_3$-far away''.
	
	\begin{lemma}\label{lemma:VH2}
		Let $u,v\in V(H)$, and $P=\pi_{G-D}(u,v)$. If $\DecTree(u,v,D)>|P|$, then $P$ is not $\epsilon_3$-far away from $V(H)$.
	\end{lemma}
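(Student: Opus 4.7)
The plan is to prove the contrapositive: if $P := \pi_{G-D}(u,v)$ is $\epsilon_3$-far from $V(H)$, then $\DecTree(u,v,D) = |P|$. By \cref{thm:decomposable} and \cref{lemma:expath}, such a $P$ is itself an $\epsilon_4$-expath in $G$. I would argue by induction on the depth of decision tree nodes visited by $\DecTree(u,v,D)$ that every visited $\alpha$ satisfies $\avoid(\alpha)\cap P=\varnothing$. At termination (some node $\alpha^*$), this yields $\avoid(\alpha^*)\cap P=\varnothing$, so $P$ remains a valid $\epsilon_4$-expath in $G-\avoid(\alpha^*)$ and minimality gives $|P_{\alpha^*}|\le|P|$; combined with $D\cap P_{\alpha^*}=\varnothing$ (hence $|P_{\alpha^*}|\ge|P|$), this yields $\DecTree(u,v,D)=|P|$, matching \cref{lemma:longer}.

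The inductive step mirrors the argument used in the proof of \cref{lemma:pair}. Assuming $\avoid(\alpha)\cap P=\varnothing$ and that the algorithm visits $\alphanext=ch(\alpha,\seg(f,P_\alpha),l(f))$, one must show $\seg(f,P_\alpha)\cap U_{l(f)}\cap P=\varnothing$. Otherwise, pick $x$ in this intersection, and the reminded \cref{lemma:VH} yields $w\in V(H)$ with $|\pi_{G-D}(x,w)|\le(\epsilon_3/2)\min\{|P_\alpha[u,x]|,|P_\alpha[x,v]|\}$. It then suffices to establish the two inequalities $|P_\alpha[u,x]|\le|P[u,x]|$ and $|P_\alpha[x,v]|\le|P[x,v]|$, which together would give $|\pi_{G-D}(x,w)|<\epsilon_3\min\{|P[u,x]|,|P[x,v]|\}$, contradicting $P$'s $\epsilon_3$-farness from $V(H)$.

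The main obstacle is establishing these two inequalities, since $P_\alpha$ is only the shortest $\epsilon_4$-\emph{expath} rather than the shortest path in $G-\avoid(\alpha)$, so the subpath-of-shortest-path shortcut used in \cref{lemma:pair} no longer applies directly. I would attack this by a cut-and-paste argument: form $Q_1:=P[u,x]\circ P_\alpha[x,v]$ and $Q_2:=P_\alpha[u,x]\circ P[x,v]$, and verify that both are $\epsilon_4$-expaths in $G-\avoid(\alpha)$. Minimality of $P_\alpha$ would then give $|P_\alpha|\le|Q_1|$ and $|P_\alpha|\le|Q_2|$, which rearrange to the two desired bounds. The technical verification combines the $(2B+2)$-subpath decompositions of $P$ and $P_\alpha$ at $x$ (say $x\in P^P_j$ and $x\in P^\alpha_k$) and inserts empty subpaths as padding to reach exactly $2B+2$ subpaths. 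The enabling facts are that each subpath of $P$ and of $P_\alpha$ is already a shortest path in some $G_l$ avoiding $\avoid(\alpha)$, and that $(1+\epsilon_4)^B\ge nW$ upper-bounds every distance in $G$, so the length constraints $|Q[u,v^Q_i]|\le(1+\epsilon_4)^i$ (for $i<B+1$) and $|Q[u^Q_i,v]|\le(1+\epsilon_4)^{2B+1-i}$ (for $i\ge B+1$) can be met by reusing $P$'s prefix decomposition for the first $B+1$ slots and $P_\alpha$'s suffix decomposition for the last $B+1$ slots. Case analysis on $j$ versus $k$ determines which of $Q_1,Q_2$ cleanly fits within the $2B+2$ slots without subpath-count overflow; the borderline case $k=j$ is handled by exploiting that at most one of the two partial subpaths $P^P_j[u^P_j,x]$ and $P^\alpha_k[x,v^\alpha_k]$ can be nonempty in the worst configuration.
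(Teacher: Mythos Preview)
Your overall skeleton matches the paper: argue by contradiction, invoke \cref{thm:decomposable} so that $P$ is an $\epsilon_4$-expath, take the last visited node $\alpha$ with $\avoid(\alpha)\cap P=\varnothing$, pick $x\in\seg(f,P_\alpha)\cap U_{l(f)}\cap P$, apply the reminded \cref{lemma:VH} to get $w$, and then compare $|P_\alpha[u,x]|,|P_\alpha[x,v]|$ with $|P[u,x]|,|P[x,v]|$ via a cut-and-paste against the minimality of $P_\alpha$. The gap is in the last step.

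You aim for the sharp bounds $|P_\alpha[u,x]|\le|P[u,x]|$ and $|P_\alpha[x,v]|\le|P[x,v]|$, and you need \emph{both} (one alone does not control $\min\{|P[u,x]|,|P[x,v]|\}$). Your splice $Q_1=P[u,x]\circ P_\alpha[x,v]$ uses $P$'s subpaths $0,\dots,j'$ and $P_\alpha$'s subpaths $j,\dots,2B+1$; this is a valid $\epsilon_4$-expath only when $j'\le j-1$. But from $|P[u,x]|<|P_\alpha[u,x]|$ you only get $j'=\lfloor\log_{1+\epsilon_4}|P[u,x]|\rfloor+1\le \lfloor\log_{1+\epsilon_4}|P_\alpha[u,x]|\rfloor+1\le j+1$, so $j'=j$ (and even $j'=j+1$) can occur. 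In the $j'=j$ case, slot $j$ would have to hold the concatenation $P_{j'}[u_{j'},x]\circ P^\alpha_j[x,v^\alpha_j]$, which is generally \emph{not} a shortest path in any single $G_i$; your claim that ``at most one of the two partial subpaths can be nonempty'' is not true in general, and the sentence ``case analysis determines which of $Q_1,Q_2$ cleanly fits'' contradicts your earlier need for both $Q_1$ and $Q_2$ to be expaths.

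The paper's fix is precisely to relax the target to $|P_\alpha[u,x]|\le 2|P[u,x]|$ (and symmetrically). Assuming the contrary, $|P[u,x]|<|P_\alpha[u,x]|/2$ combined with $(1+\epsilon_4)^2<2$ forces $j'\le j-1$, so the splice $P[u,x]\circ P_\alpha[x,v]$ fits into $2B+2$ slots with no overlap and the length constraints carry over (using $|P[u,x]|<|P_\alpha[u,x]|$ for slots $\ge j$). The resulting factor $2$ is exactly absorbed by the $\epsilon_3/2$ you already quoted from \cref{lemma:VH}, yielding $|\pi_{G-D}(x,w)|\le\epsilon_3\min\{|P[u,x]|,|P[x,v]|\}$ and the desired contradiction.
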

	
	\begin{figure}[H]
		\centering
		\includegraphics[width=0.6\linewidth]{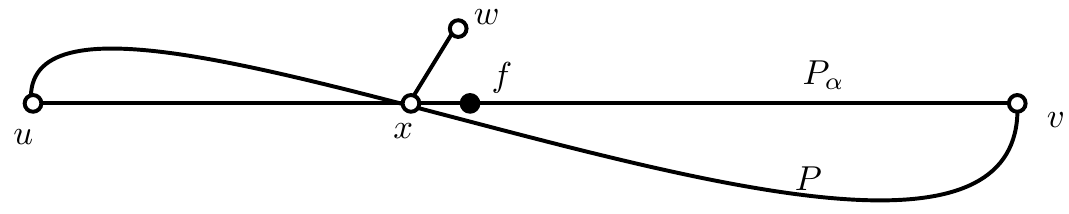}
	\end{figure}
	
	\begin{proof}
		For the sake of contradiction, assume $P$ is $\epsilon_3$-far away from $V(H)$. By \cref{thm:decomposable}, $P$ is an $\epsilon_4$-segment expath.
		
		Let $\alpha$ be the last decision tree node visited by $\DecTree(u,v,D)$ such that $\avoid(\alpha)\cap P=\varnothing$. Since $\DecTree(u,v,D)>|P|$, $P$ reaches some vertex $x\in\avoid(\alphanext)\setminus\avoid(\alpha)$, where $\alphanext$ is the next decision tree node visited by $\DecTree(u,v,D)$ after $\alpha$. Recall that $\avoid(\alphanext)\setminus\avoid(\alpha)=\seg(f,P_\alpha)\cap U_{l(f)}$, where $f$ is the failed vertex chosen in \cref{line:choose-f} of \cref{alg:DecTree}. By \cref{lemma:VH}, there is a vertex $w\in V(H)$ such that $|\pi_{G-D}(x,w)|\leq (\epsilon_3/2) \min\{|P_\alpha[u,x]|,|P_\alpha[x,v]|\}$.
		
		As $P_\alpha$ is the shortest $\epsilon_4$-expath from $u$ to $v$ in $G-\avoid(\alpha)$, and $P$ is \emph{some} such path, we have $|P|\ge |P_\alpha|$. We will prove $|P_\alpha[u,x]|\le 2|P[u,x]|$ and $|P_\alpha[x,v]|\le 2|P[x,v]|$, then it will follow that $|\pi_{G-D}(x,w)|\le \epsilon_3\min\{|P[u,x]|,|P[x,v]\}$, contradicting that $P$ is $\epsilon_3$-far away from $V(H)$. We only prove $|P_\alpha[u,x]|\le 2|P[u,x]|$, and the case that $|P_\alpha[x,v]|\le 2|P[x,v]|$ is symmetric.
		
		Suppose $|P[u,x]|<|P_\alpha[u,x]|/2$, we claim that the path $P[u,x]\circ P_\alpha[x,v]$ is a valid $\epsilon_4$-expath. Since $|P[u,x]|<|P_\alpha|/2\le |P|/2$, $x$ is closer to $u$ than to $v$ in $P$. Suppose $P_\alpha$ is composed of subpaths $P^\alpha_0,\dots,P^\alpha_{2B+1}$ interleaved with $\le 2B+3$ edges, and $P$ is composed of segments $P_1,\dots,P_\ell$. (Every $P^\alpha_j$ and $P_j$ is a shortest path in some $G_i$.) Recall from the proof of \cref{lemma:expath} that, if $x$ is in the first half of $P$, and $x\in P_k$, then $\lfloor\log_{1+\epsilon_4}|P[u,x]|\rfloor=k-1$.
		\begin{itemize}
			\item Let $x\in P^\alpha_j$, then $j\ge\lfloor\log_{1+\epsilon_4}|P_\alpha[u,x]|\rfloor$. This is because if $j<B+1$ (recall that $B=\lceil\log_{1+\epsilon_4}(nW)\rceil$ as in \cref{def:expath}), then $|P_\alpha[u,x]|\le (1+\epsilon_4)^j$.
			\item Let $x\in P_{j'}$, then $j'=\lfloor \log_{1+\epsilon_4}|P[u,x]|\rfloor+1$. Since $(1+\epsilon_4)^2<2\le |P_\alpha[u,x]|/|P[u,x]|$, we have $j'\leq \lfloor \log_{1+\epsilon_4}|P_\alpha[u,x]|\rfloor-1\leq j-1$.
		\end{itemize}
		
		\begin{figure}[H]
			\centering
			\includegraphics[width=0.8\linewidth]{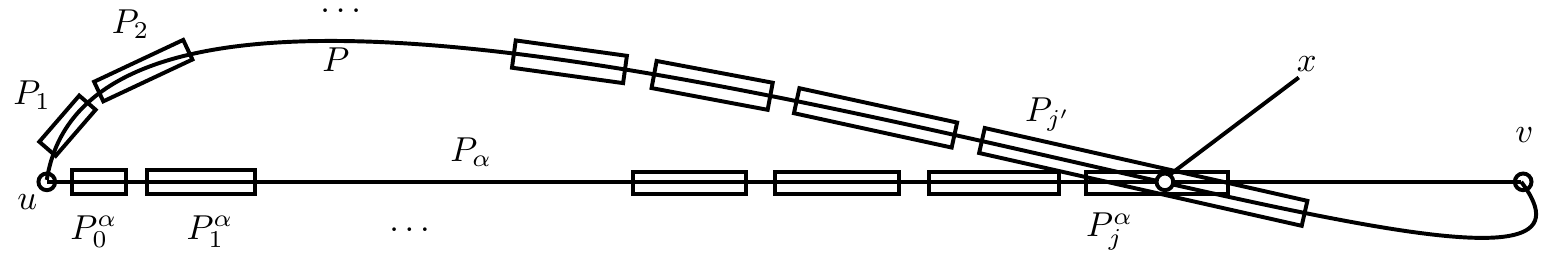}
		\end{figure}
		
		Let $P'=P[u,x]\circ P_{\alpha}[x,v]$, Consider the following representation of $P'$ as $P'_0,P'_1,\dots,P'_{2B+1}$:
		\begin{enumerate}[(i)]
			\item For $0\le i<j'$, $P'_i=P_i$.\label{item:prefixopt1}
			\item For $i=j'$, $P'_{j'}=P_{j'}[u_{j'},x)$, where $u_{j'}$ is the endpoint of $P_{j'}$ that lies on $P[u,x]$.\label{item:prefixopt2}
			\item For $j'< i< j$, $P'_i=\varnothing$.
			\item For $i=j$, $P'_j=P^{\alpha}_j[x,v^{\alpha}_j]$, where $v^{\alpha}_j$ is the endpoint of $P^{\alpha}_j$ that lies on $P_{\alpha}[x,v]$.\label{item:prefixopt3}
			\item For $j<i\le 2B+1$, $P'_i=P^{\alpha}_i$.\label{item:prefixopt4}
		\end{enumerate}
		
		\begin{figure}[H]
			\centering
			\includegraphics[width=0.8\linewidth]{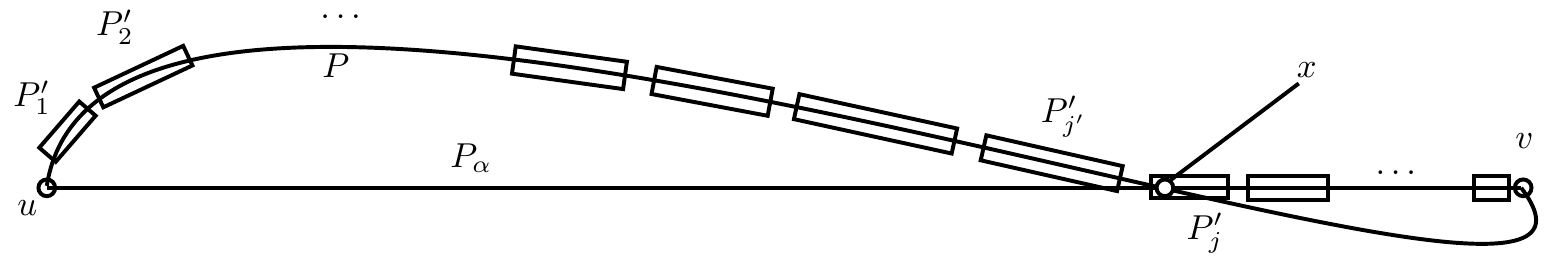}
		\end{figure}
		
		We need to verify that the representation $P'_0,P'_1,\dots,P'_{2B+1}$ satisfies the definition of $\epsilon_4$-expath. Let $u'_i,v'_i$ be the endpoints of $P'_i$, \ie $P'_i=P'[u'_i,v'_i]$, then:\begin{itemize}
			\item Case I: $i\le j'$ (\ie \cref{item:prefixopt1,item:prefixopt2}). In this case, $i<B+1$, as $P'_i$ lies in the first half of $P$. Since $|P'[u,v'_i]|=|P[u,v'_i]|\le (1+\epsilon_4)^i$, \cref{def:expath} is satisfied.
			\item Case II: $j\le i<B+1$. In this case, $|P'[u,v'_i]|=|P[u,x]|+|P_\alpha[x,v'_i]|<|P_\alpha[u,v'_i]|\le (1+\epsilon_4)^i$, thus \cref{def:expath} is satisfied.
			\item Case III: $j\ge B+1$. In this case, $|P'[u'_i,v]|=|P_\alpha[u'_i,v]|\le (1+\epsilon_4)^{2B+1-i}$, thus \cref{def:expath} is satisfied.
		\end{itemize}
		
		We conclude that $P'$ is a valid $\epsilon_4$-expath. Since $|P'|<|P_\alpha|$, this contradicts the choice of $P_{\alpha}$.
		
		Therefore $|P_\alpha[u,x]|\le 2|P[u,x]|$, and by symmetry, $|P_\alpha[x,v]|\le 2|P[x,v]|$. It follows that $P$ is not $\epsilon_3$-far away from $V(H)$.
	\end{proof}
	
	We prove the following theorem that immediately implies the approximation ratio of the algorithm.
	
	\begin{theorem}
		For every pair $u,v\in V(H)$, $|\pi_H(u,v)|\le (1+\epsilon)|\pi_{G-D}(u,v)|$.
	\end{theorem}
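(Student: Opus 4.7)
The plan is to mirror the proof of \cref{thm:main-approx}. The lower bound $|\pi_H(u,v)| \ge |\pi_{G-D}(u,v)|$ follows as before because every edge weight in $H$ is $\DecTree(\cdot,\cdot,D) \ge |\pi_{G-D}(\cdot,\cdot)|$ by \cref{lemma:longer}, which remains valid in this section since the path $P_\alpha$ returned by the new $\DecTree$ is still some path in $G - D$. For the upper bound, I would induct on pairs $u,v \in V(H)$ ordered by $|\pi_{G-D}(u,v)|$.

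The technical backbone is an analogue of \cref{lemma:pair}: if $\DecTree(u,v,D) > |\pi_{G-D}(u,v)|$, then there exist $x \in \pi_{G-D}(u,v) \setminus \{u,v\}$, $y \in \{u,v\}$, and $w \in V(H)$ satisfying (a) $|\pi_{G-D}(x,y)| \le |\pi_{G-D}(u,v)|/2$, (b) $|\pi_{G-D}(x,w)| \le \epsilon_3\,|\pi_{G-D}(x,y)|$, and (c) $\DecTree(y,w,D) \le |\pi_{G-D}(y,x)| + |\pi_{G-D}(x,w)| < |\pi_{G-D}(u,v)|$. Properties (a) and (b) are immediate from \cref{lemma:VH2}: since $\pi_{G-D}(u,v)$ fails to be $\epsilon_3$-far away from $V(H)$, pick any witnessing $(x,w)$ and let $y$ be whichever of $\{u,v\}$ is closer to $x$. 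For (c) I would select a triple minimizing $|\pi_{G-D}(x,y)|$ with ties broken by minimizing $|\pi_{G-D}(x,w)|$; if (c) were to fail then the triangle inequality forces $\DecTree(y,w,D) > |\pi_{G-D}(y,w)|$, so applying \cref{lemma:VH2} to the pair $(y,w)$ yields $x' \in \pi_{G-D}(y,w) \setminus \{y,w\}$ and $w' \in V(H)$, and a case analysis on whether $x'$ lies closer to $y$ or to $w$ (mirroring \cref{figure:pair}) produces an alternative triple violating minimality.

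Equipped with this lemma, the induction proceeds essentially verbatim as in \cref{thm:main-approx}. Base case: if $(u,v)$ were the minimizing pair and $\DecTree$ failed, the lemma would deliver a strictly smaller pair in $V(H)^2$, a contradiction. Inductive step: if $\DecTree(u,v,D) = |\pi_{G-D}(u,v)|$ then $|\pi_H(u,v)| \le \DecTree(u,v,D)$ suffices; otherwise take $y = u$ WLOG and chain
\begin{align*}
|\pi_H(u,v)| &\le \DecTree(u,w,D) + |\pi_H(w,v)| \\
&\le \bigl(|\pi_{G-D}(u,x)| + |\pi_{G-D}(x,w)|\bigr) + (1+\epsilon)\,|\pi_{G-D}(w,v)| \\
&\le [1 + \epsilon_3(2+\epsilon)]\,|\pi_{G-D}(u,x)| + (1+\epsilon)\,|\pi_{G-D}(x,v)| \\
&\le (1+\epsilon)\,|\pi_{G-D}(u,v)|,
\end{align*}
the last inequality requiring $\epsilon_3(2+\epsilon) \le \epsilon$, which holds with enormous slack because $\epsilon_3 = \epsilon/(2|V(H)|)$ and $\epsilon < 1$.

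The main obstacle is property (c) in the analogous lemma. The argument in \cref{sec:orac2} exploited that the old $\DecTree$ returns the length of the shortest general path in $G - \avoid(\alpha)$, so the concatenation $\pi_{G-D}(y,x) \circ \pi_{G-D}(x,w)$ could be compared directly against $P_{\alpha'}$ and the witness $x'$ was guaranteed to land on that very concatenation. Here $\DecTree$ returns the length of the shortest $\epsilon_4$-expath, and the concatenation is generally not an $\epsilon_4$-expath, so the direct length comparison breaks. The workaround is to invoke \cref{lemma:VH2} on the pair $(y,w)$ itself, but the resulting $x'$ lives on $\pi_{G-D}(y,w)$ rather than on the concatenation $\pi_{G-D}(y,x) \circ \pi_{G-D}(x,w)$, so transferring it back to a triple that contradicts the minimality of $(x,y,w)$ requires a careful case split using the triangle inequality; this transfer is the most delicate step of the entire proof.
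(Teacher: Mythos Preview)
Your proposal has a genuine gap in establishing property (c), and this gap is precisely why the paper's proof takes a different route.

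The minimality argument you sketch requires a competing triple $(x'',y'',w'')$ with $x''\in P=\pi_{G-D}(u,v)$. But when you invoke \cref{lemma:VH2} on the pair $(y,w)$, the witness you obtain is a vertex $x'\in\pi_{G-D}(y,w)\setminus\{y,w\}$, and in general $\pi_{G-D}(y,w)$ shares no vertices with $P$ other than $y$. In the original \cref{lemma:pair} the witness $x'$ lived on the concatenation $\pi_{G-D}(y,x)\circ\pi_{G-D}(x,w)$, whose first leg is literally a subpath of $P$; that is exactly what made the case ``$x'\in P'[y,x)$'' produce a valid competitor. Here that structure is lost. Concretely: in the ``$x'$ closer to $y$'' case, even if $|\pi_{G-D}(y,x')|<|\pi_{G-D}(y,x)|$, the triple $(x',y,w')$ is ineligible because $x'\notin P$; in the ``$x'$ closer to $w$'' case, the triangle inequality only gives $|\pi_{G-D}(x,w')|\le|\pi_{G-D}(x,w)|+(1+\epsilon_3)|\pi_{G-D}(x',w)|$, which is the wrong direction. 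No amount of triangle-inequality juggling recovers a competitor on $P$.

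The paper sidesteps this entirely: it never proves an analogue of property (c). Instead it introduces the subgraph $H'\subseteq H$ consisting of those edges $(a,b)$ for which $\pi_{G-D}(a,b)$ is $\epsilon_3$-far from $V(H)$ (so that, by \cref{lemma:VH2}, the edge weight equals $|\pi_{G-D}(a,b)|$ exactly), and then recursively splits \emph{both} sides: if $\pi_{G-D}(u,v)$ is not $\epsilon_3$-far, pick the witness $w$ and recurse on $(u,w)$ and on $(w,v)$, accumulating a multiplicative $(1+\epsilon_3)$ per edge of the resulting walk in $H'$. The total blowup is $(1+\epsilon_3)^{|V(H)|}\le e^{\epsilon/2}<1+\epsilon$. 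This two-sided recursion is exactly why $\epsilon_3=\epsilon/(2|V(H)|)$ carries the $|V(H)|$ factor; the ``enormous slack'' you noticed in your final inequality is the tell that the parameter was calibrated for a different argument. (The one-sided recursion that (c) would enable is what \cref{sec:reduce-space} later recovers in the improved Section~4.3, but there the key tool is \cref{lemma:VH3}, which is stated for the \emph{concatenation} $\pi_{G-D}(y,x)\circ\pi_{G-D}(x,w)$ rather than for $\pi_{G-D}(y,w)$, and requires the bipath machinery.)
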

	\begin{proof}
		For the purpose of the proof, we construct a subgraph $H'$ of $H$ on the same set of vertices (\ie $V(H)$), but only keep the edges $(u,v)$ where $\pi_{G-D}(u,v)$ is $\epsilon_3$-far away from $V(H)$. By \cref{lemma:VH2}, the weight of every single edge $(u,v)$ in $H'$ is exactly $|\pi_{G-D}(u,v)|$. 
		
		We sort all pairs of vertices $u,v\in V(H)$ by nondecreasing order of $|\pi_{G-D}(u,v)|$. For every $u,v\in V(H)$, we define a $u$-$v$ path in $H'$ inductively in this order, and denote it as $p(u,v)$. The path $p(u,v)$ is defined as follows.
		\begin{itemize}
			\item If $\pi_{G-D}(u,v)$ is $\epsilon_3$-far away from $V(H)$, $p(u,v)$ consists of a single edge $(u,v)$.
			\item If $\pi_{G-D}(u,v)$ is not $\epsilon_3$-far away from $V(H)$, there exist $x\in \pi_{G-D}(u,v)\setminus \{u,v\}, w\in V(H)$ such that $|\pi_{G-D}(x,w)|\leq \epsilon_3\min\{|\pi_{G-D}(u,x)|,|\pi_{G-D}(x,v)|\}$. Since $\epsilon_3<1$, $|\pi_{G-D}(u,w)|$ and  $|\pi_{G-D}(w,v)|$ are both smaller than $|\pi_{G-D}(u,v)|$, so $p(u,w)$ and $p(w,v)$ are both well-defined. We concatenate these paths to form $p(u,v)$, \ie we define $p(u,v)=p(u,w)\circ p(w,v)$.
		\end{itemize}
		
		Let $k(u,v)$ be the number of edges in $p(u,v)$. We prove that for every $u,v\in V(H)$, 
		\[|\pi_{H'}(u,v)|\leq (1+\epsilon_3)^{k(u,v)}|\pi_{G-D}(u,v)|.\]
		
		We proceed by induction on $k(u,v)$. When $k(u,v)=1$, $|\pi_{H'}(u,v)|=|\pi_{G-D}(u,v)|$. Assume this is true for all pairs $(u,v)$ such that $k(u,v)<j$, consider some $(u,v)$ such that $k(u,v)=j$. Let $x,w$ be the vertices selected in the construction of $p(u,v)$, then both $k(u,w)$ and $k(w,v)$ are less than $j$. As $|\pi_{G-D}(x,w)|\le (\epsilon_3/2)|\pi_{G-D}(u,v)|$, we have
		\begin{align*}
			|\pi_{H'}(u,v)| \leq &~ |\pi_{H'}(u,w)|+|\pi_{H'}(w,v)| \\
			\leq &~ (1+\epsilon_3)^{j-1}(|\pi_{G-D}(u,w)|+|\pi_{G-D}(w,v)|) \\
			\leq &~ (1+\epsilon_3)^{j-1}(|\pi_{G-D}(u,v)|+2|\pi_{G-D}(x,w)|) \\
			\leq &~ (1+\epsilon_3)^{j-1}(|\pi_{G-D}(u,v)|+2\frac{\epsilon_3}{2}|\pi_{G-D}(u,v)|) \\
			\leq &~ (1+\epsilon_3)^j|\pi_{G-D}(u,v)|.
		\end{align*}
		Thus, for every $u,v\in V(H)$,
		\begin{align*}
			|\pi_{H}(u,v)|\leq &~ |\pi_{H'}(u,v)|\\
			\leq &~\left(1+\frac{\epsilon}{2|V(H)|}\right)^{|V(H)|}|\pi_{G-D}(u,v)|\\
			\leq &~ e^{\frac{\epsilon}{2}}|\pi_{G-D}(u,v)|\\
			< &~(1+\epsilon)|\pi_{G-D}(u,v)|.&\text{(since $\epsilon<1$)}&\qedhere
		\end{align*}
	\end{proof}
	
	Each $\epsilon_4$-expath can be stored in $O(\epsilon_4^{-1}\log(nW))$ space. Each non-leaf node in the decision tree has $O(h\epsilon_4^{-1}\log (nW))$ children. Thus we have a VSDO of
	\[\SpaceQueryStretch{n^{2+1/c}\epsilon_4^{-1}\log(nW)\cdot O(h\epsilon_4^{-1}\log(nW))^d}{O(d^2|V(H)|^2\cdot\epsilon_4^{-1}\log(nW))}{1+\epsilon.}\]
	As $\epsilon_4^{-1}=O(|V(H)|\cdot\epsilon^{-1}\log n)=O(d^{c+2}\epsilon^{-1}h\log^5 n)$, the VSDO is of
	\[\SpaceQueryStretch{n^{2+1/c}\cdot (\epsilon^{-1}d^c\log(nW))^{O(d)}}{\tilde{O}(\epsilon^{-1}d^{3c+8}\log W)}{1+\epsilon.}\]
	We improve both the space complexity and query time in the next subsection.
	
	\subsection{An Improvement}
	In \cref{sec:new_query}, we use $\epsilon_4$-segments in the decision tree. Therefore, each decision tree node that is not a leaf has $O(\epsilon_4^{-1}\cdot h\log(nW))$ children, and each decision tree node occupies $O(\epsilon_4^{-1}\cdot \log(nW))$ space. As $\epsilon_4^{-1}=\Theta(|V(H)|\cdot \epsilon^{-1}\log n)$, this $\epsilon_4^{-1}$ factor may seem too large. In this section, we show that the $|V(H)|$ factor in $\epsilon_4^{-1}$ can be shaved.
	
	Let $\epsilon_1=\epsilon/(2+\epsilon)$ as in \cref{sec:orac2} and $\epsilon_5=\epsilon_1/(4k-2)$. We will use $O(\epsilon_5^{-1}\log(nW))$ space to represent a node in the decision tree $FT(u,v)$. A first attempt would be to store the shortest $\epsilon_5$-expath in each node $\alpha$, but we face a technical problem as follows. Suppose $\DecTree(u,v,D)$ does \emph{not} capture the shortest path $P=\pi_{G-D}(u,v)$, then by \cref{lemma:pair}, $P$ is \emph{not} far from $V(H)$. In other words, there are vertices $x\in P$ and $w\in V(H)$ such that $\pi_{G-D}(x,w)\le \epsilon_1|P[u,x]|$. (Here we assume w.l.o.g.~that $x$ is closer to $u$.) Let $P_1=\pi_{G-D}(u,x)\circ\pi_{G-D}(x,w)$, and $P_2=\pi_{G-D}(w,v)$, we ``recursively'' find $P_1$ and $P_2$ and concatenate them as an approximation of $P$. The proof of \cref{lemma:pair} shows that $P_1$ is far away from $V(H)$, so we may attempt to use \cref{lemma:VH2} to conclude that $|\DecTree(u,w,D)|\le |P_1|$, and we only need to ``recurse'' on $P_2$. However, \cref{lemma:VH2} relies on \cref{thm:decomposable}, which requires $P_1$ to be a \emph{shortest} path in $G-D$, while $P_1=\pi_{G-D}(u,x)\circ\pi_{G-D}(x,w)$ is not necessarily the shortest $u$-$w$ path.
	
	The solution is simple. If $P_1$ is $\epsilon_1$-far from $V(H)$, we can use the same proof method of \cref{thm:decomposable}, to prove that each segment of $P_1=\pi_{G-D}(u,x)\circ\pi_{G-D}(x,w)$ is \emph{the concatenation of at most two} shortest paths in some $G_i$ and $G_j$. (The original \cref{thm:decomposable} proved that each segment of $\pi_{G-D}(u,v)$ is a shortest path in some $G_i$.) Therefore, we define \emph{segment bipaths}, in which each segment is the concatenation of two shortest paths in $G_i$ and $G_j$, rather than one shortest path in $G_i$ as in segment expaths.
	\begin{definition}
		A path $P$ in $G$ is an \emph{$\epsilon_5$-segment bipath} if the following holds. If we partition $P$ into $\epsilon_5$-segments as in \cref{def:segments}, for every segment $P[u_k,v_k]$, there exist two levels $i,j$ and a vertex $z\in P[u_k,v_k]$ such that $P[u_k,v_k]=\pi_{G_i}(u_k,z)\circ\pi_{G_j}(z,v_k)$.
	\end{definition}
	
	The following theorem can be proved by similar arguments as \cref{thm:decomposable}.
	\begin{theorem}\label{thm:decomposable2}
		For $u,v,w\in V$, let $P=\pi_{G-D}(u,v)\circ\pi_{G-D}(v,w)$. If $P$ is $\epsilon_1$-far away from $V(H)$, then it is an $\epsilon_5$-segment bipath.
	\end{theorem}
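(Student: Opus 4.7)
The plan is to mirror the proof of \cref{thm:decomposable} almost verbatim, with two adaptations: (i) the target structure is a \emph{bipath} rather than a single shortest path in some $G_i$, which is exactly what is needed to absorb the "seam" vertex $v$ at which the two shortest paths $\pi_{G-D}(u,v)$ and $\pi_{G-D}(v,w)$ meet; and (ii) the calibration of $\epsilon_5=\epsilon_1/(4k-2)$ must line up with the stronger ``$\epsilon_1$-far'' hypothesis rather than the $\epsilon_3$-far hypothesis of \cref{thm:decomposable}.

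Concretely, fix an $\epsilon_5$-segment $P[u_k,v_k]$ of $P=\pi_{G-D}(u,v)\circ\pi_{G-D}(v,w)$. Split into two cases depending on whether the seam vertex $v$ is strictly interior to the segment. If $v\notin(u_k,v_k)$, then $P[u_k,v_k]$ is a contiguous sub-path of either $\pi_{G-D}(u,v)$ or $\pi_{G-D}(v,w)$; in particular, $P[u_k,v_k]$ is itself a shortest path in $G-D$ between its endpoints and contains no failures. If $v\in(u_k,v_k)$, split the segment at $v$ into $P[u_k,v]$ and $P[v,v_k]$, each of which is a shortest path in $G-D$ between its endpoints and is failure-free. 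In either case, it suffices to prove the following sub-claim: any sub-path $Q=P[a,b]$ of $P$ that lies inside the segment $P[u_k,v_k]$ and equals $\pi_{G-D}(a,b)$ is a shortest path in $G_i$, where $i$ is the highest level attained on $Q$. (Setting $z=u_k$ or $z=v_k$ in the no-seam case, or $z=v$ in the seam case, then exhibits $P[u_k,v_k]$ as the desired bipath.)

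For the sub-claim I will adopt the exact argument of \cref{thm:decomposable}. Pick the highest-level vertex $z\in Q$ and set $i=l(z)$, so $Q$ lies in $G_i$. Suppose for contradiction that $Q\ne \pi_{G_i}(a,b)$. By uniqueness and the fact that $Q$ is a valid $G_i$-path, we have $|\pi_{G_i}(a,b)|<|Q|$; since $Q=\pi_{G-D}(a,b)$, the path $\pi_{G_i}(a,b)$ must contain some failure. Let $f$ be the failure on $\pi_{G_i}(a,b)$ closest to $a$. Then $P''=Q[z,a]\circ\pi_{G_i}(a,f)$ goes from $z$ to $f$, has no other failures, and satisfies $l(z)\ge l(f)$ because $f\in V(G_i)$. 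Invoking \cref{lemma:ball_VH} yields $w'\in V(H)$ with $|\pi_{G-D}(z,w')|\le(2k-1)|P''|\le 2(2k-1)|P[u_k,v_k]|$, where the factor of $2$ absorbs $|Q[z,a]|+|\pi_{G_i}(a,f)|\le 2|Q|\le 2|P[u_k,v_k]|$.

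Finally, applying the $\epsilon_5$-segment property at the point $z\in P\setminus\{u,w\}$ gives $|P[u_k,v_k]|\le\epsilon_5\cdot\min\{|P[u,z]|,|P[z,w]|\}$, so
\[
|\pi_{G-D}(z,w')|\le 2(2k-1)\epsilon_5\cdot\min\{|P[u,z]|,|P[z,w]|\}=\epsilon_1\cdot\min\{|P[u,z]|,|P[z,w]|\},
\]
contradicting the hypothesis that $P$ is $\epsilon_1$-far from $V(H)$. The main obstacle, which the case split at $v$ is designed to handle, is that in the seam case neither endpoint of the full segment is the natural ``anchor'' of a single shortest path; the sub-claim localizes the argument to a failure-free sub-path on one side of $v$, so the \cref{lemma:ball_VH} invocation remains valid, and the extra factor of two from traversing $Q[z,a]$ before joining $\pi_{G_i}(a,f)$ is precisely what the $4k-2$ (rather than $2k-1$) denominator in $\epsilon_5$ was chosen to accommodate.
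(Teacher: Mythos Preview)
Your proposal is correct and follows essentially the same approach as the paper's proof sketch: split each $\epsilon_5$-segment at the seam vertex $v$ (if present), observe that each piece is a genuine shortest path in $G-D$, and then rerun the \cref{thm:decomposable} argument on each piece to conclude it is a shortest path in some $G_i$. Your write-up is in fact more detailed than the paper's sketch, and your symmetric use of \cref{lemma:epsilon2} at the high-level vertex $z$ (rather than at the segment endpoint under a first-half WLOG) is a minor stylistic improvement; the only cosmetic quibble is that you reuse the letter $z$ both for the bipath split point and for the highest-level vertex in the sub-claim, which could be disambiguated.
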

	
	\begin{figure}[H]
		\centering
		\includegraphics{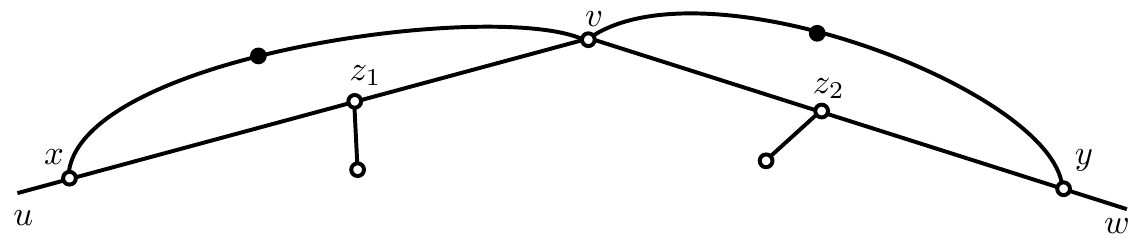}
	\end{figure}
	
	\begin{proof}[Proof Sketch]
		Let $P[x,y]$ be a segment of $P$. If $v\not\in P[x,y]$ then the argument of \cref{thm:decomposable} applies to $P[x,y]$, and there is some $1\le i\le p$ such that $P[x,y]=\pi_{G_i}(x,y)$. If $v\in P[x,y]$, then $P[x,v]$ and $P[v,y]$ are shortest paths in $G-D$ respectively. Let $z_1$ be the vertex with the highest level in $P[x,v]$, and $z_2$ be the vertex with the highest level in $P[v,y]$. We proceed with the same argument as in \cref{thm:decomposable}, and we can see that $P[x,v]$ is the shortest $x$-$v$ path in $G_{l(z_1)}$, and $P[v,y]$ is the shortest $v$-$y$ path in $G_{l(z_2)}$.
	\end{proof}
	
	Similarly we can define \emph{$\epsilon_5$-bipaths}:
	\begin{definition}\label{def:bipath}
		Let $B=\lceil\log_{1+\epsilon_5}(nW)\rceil$. An \emph{$\epsilon_5$-bipath} $P$ from $u$ to $v$ in $G$ is a path which is a concatenation of subpaths $P_0,\dots, P_{2B+1}$ interleaved with at most $2B+3$ edges, such that the following hold.\begin{itemize}
			\item For every $P_k=P[u_k,v_k]$ $(0\le k\le 2B+1)$, either $P_k$ is empty, or there exists a vertex $z\in P[u_k,v_k]$ and two levels $1\le i,j\le p$, such that $P[u_k,v_k]=\pi_{G_i}(u_k,z)\circ\pi_{G_j}(z,v_k)$.
			\item If $k<B+1$, then $|P[u,v_k]|\le(1+\epsilon_5)^k$; if $k\ge B+1$, then $|P[u_k,v]|\le(1+\epsilon_5)^{2B+1-k}$.
		\end{itemize}
	\end{definition}
	We also use $\epsilon_5$ in the definition of segments when constructing decision trees $FT(u,v)$. In each node $\alpha\in FT(u,v)$, we store the shortest $\epsilon_5$-bipath from $u$ to $v$ as the path $P_\alpha$. \cref{lemma:VH} still holds (for parameter $(2k-1)\epsilon_5=\epsilon_1/2$).
	
	\begin{reminder}{\cref{lemma:VH}}
		In the query algorithm $\DecTree(u,v,D)$, let $\alpha$ be a decision tree node it encounters, $f\in D$ be the failed vertex which is selected in \cref{line:choose-f} of \cref{alg:DecTree} and $i=l(f)$. (That is, $f$ is the vertex in $D\cap P_\alpha$ with the highest level.) For any non-failure vertex $x$ in $\seg(f,P_\alpha)\cap U_i$, there is a vertex $w\in V(H)$ such that $|\pi_{G-D}(x,w)|\leq (\epsilon_1/2) \min\{|P_\alpha[u,x]|,|P_\alpha[x,v]|\}$.
	\end{reminder}
	
	It is easy to verify that the counterparts of \cref{lemma:expath} and \cref{lemma:VH2} also hold for (segment) bipaths.
	\begin{lemma}\label{lemma:bipath}
		An $\epsilon_5$-segment bipath $P$ from $u$ to $v$ is an $\epsilon_5$-bipath.
	\end{lemma}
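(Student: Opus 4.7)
The proof plan is to mirror the proof of \cref{lemma:expath} almost verbatim, since the length-envelope argument there depends only on the segment decomposition (\cref{def:segments}) and not on the internal structure of each segment. The definition of $\epsilon_5$-bipath (\cref{def:bipath}) was crafted precisely so that each subpath $P_k$ is permitted to be either empty or the concatenation $\pi_{G_i}(u_k,z)\circ\pi_{G_j}(z,v_k)$, which is exactly what an $\epsilon_5$-segment bipath provides on each of its segments. So the combinatorial skeleton is unchanged; only the internal form of each piece is upgraded from ``single shortest path in some $G_i$'' to ``two-piece shortest path through some pivot $z$''.

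Concretely, I would first set $j=\lfloor\log_{1+\epsilon_5}(|P|/2)\rfloor+1$. Since $\epsilon_5<\epsilon_4<\sqrt{2}-1<1$ we have $1+\epsilon_5<2$, so $(1+\epsilon_5)^{B}\ge nW\ge |P|/2$ forces $j\le B$, in particular $j<B+1$. Next, in the first half of $P$ I enumerate the nonempty $\epsilon_5$-segments $P_1,\dots,P_j$ in left-to-right order, where $P_k=P[u_k,v_k]$ is the equivalence class consisting of those $x$ with $\lfloor\log_{1+\epsilon_5}|P[u,x]|\rfloor=k-1$; this follows directly from the first bullet of \cref{def:segments}. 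Between consecutive segments I place at most one single edge (the edge of $P$ that crosses from one equivalence class to the next, if any), and possibly the initial edge incident to $u$ and the middle edge where the ``first half'' meets the ``second half''. For each such $P_k$, the bound $|P[u,v_k]|<(1+\epsilon_5)^k$ is immediate from the definition of $P_k$. This matches the first branch of the size condition in \cref{def:bipath}.

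For the second half I apply the symmetric construction to obtain segments indexed $k=j+1,\dots,2B+1$ with $|P[u_k,v]|\le(1+\epsilon_5)^{2B+1-k}$, padding with empty $P_k$'s where needed so the total count is exactly $2B+2$. The interleaving edges number at most $2B+3$ because between any two adjacent $P_k$'s there is at most one edge of $P$ not covered by either segment, plus the (possibly empty) leading and trailing edges at $u$ and $v$. Finally, because $P$ is an $\epsilon_5$-segment bipath, each nonempty $P_k$ is of the required form $\pi_{G_i}(u_k,z)\circ\pi_{G_j}(z,v_k)$, satisfying the first bullet of \cref{def:bipath}.

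There is no real obstacle here; the only thing to be mildly careful about is the bookkeeping of interleaving edges and empty segments so that both conditions of \cref{def:bipath} hold simultaneously, but this is identical to \cref{lemma:expath} and transfers without change.
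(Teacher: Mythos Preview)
Your proposal is correct and matches the paper's own treatment: the paper's proof sketch simply says the argument of \cref{lemma:expath} carries over verbatim because the length bookkeeping depends only on the segment decomposition, not on whether each segment is a single shortest path in some $G_i$ or a two-piece concatenation. One tiny slip: you write $\epsilon_5<\epsilon_4$, but in fact $\epsilon_5=\epsilon_1/(4k-2)>\epsilon_3/(4k-2)=\epsilon_4$; this is harmless since all you need is $1+\epsilon_5<2$, which holds because $\epsilon_5=\epsilon_1/(4k-2)<1$.
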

	\begin{lemma}\label{lemma:VH3}
		(Assume $1+\epsilon_5<\sqrt{2}$.) Let $y,w\in V(H)$, $x\in V$, and $P=\pi_{G-D}(y,x)\circ\pi_{G-D}(x,w)$. If $\DecTree(y,w,D)>|P|$, then $P$ is not $\epsilon_1$-far away from $V(H)$.
	\end{lemma}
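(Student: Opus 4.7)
The plan is to mirror the argument of \cref{lemma:VH2}, replacing the use of \cref{thm:decomposable} and $\epsilon_4$-expaths by \cref{thm:decomposable2} and $\epsilon_5$-bipaths, and adapting to the fact that a single segment of $P$ may now be a concatenation of two shortest paths rather than one. First, assume for contradiction that $P$ is $\epsilon_1$-far away from $V(H)$. Then \cref{thm:decomposable2} yields that $P$ is an $\epsilon_5$-segment bipath, and by \cref{lemma:bipath} $P$ is an $\epsilon_5$-bipath; in particular, $P$ is a legitimate candidate among the $\epsilon_5$-bipaths stored at every node $\alpha\in FT(y,w)$ for which $\avoid(\alpha)\cap P=\varnothing$.

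Let $\alpha$ be the last decision-tree node visited by $\DecTree(y,w,D)$ with $\avoid(\alpha)\cap P=\varnothing$. Since $\DecTree(y,w,D)>|P|$ the search descends to a child, and $P$ must meet some $x'\in\seg(f,P_\alpha)\cap U_{l(f)}$, where $f$ is the failure chosen in \cref{line:choose-f}. By \cref{lemma:VH} (with the parameter $\epsilon_1/2$ as restated above) there exists $w^\star\in V(H)$ with
\[
|\pi_{G-D}(x',w^\star)|\le(\epsilon_1/2)\min\{|P_\alpha[y,x']|,|P_\alpha[x',w]|\}.
\]
The goal is to show $|P_\alpha[y,x']|\le 2|P[y,x']|$ and the symmetric bound $|P_\alpha[x',w]|\le 2|P[x',w]|$; plugging these into the inequality above yields $|\pi_{G-D}(x',w^\star)|\le \epsilon_1\min\{|P[y,x']|,|P[x',w]|\}$, contradicting the hypothesis that $P$ is $\epsilon_1$-far from $V(H)$.

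To establish $|P_\alpha[y,x']|\le 2|P[y,x']|$ I would suppose the contrary and construct a hybrid path $P^\star = P[y,x']\circ P_\alpha[x',w]$ that is a strictly shorter $\epsilon_5$-bipath than $P_\alpha$, contradicting the optimality of $P_\alpha$. Decomposing $P_\alpha$ and $P$ per \cref{def:bipath}, let $x'\in P^\alpha_j$ and $x'\in P_{j'}$. Exactly as in \cref{lemma:VH2}, the assumption $1+\epsilon_5<\sqrt{2}$ combined with $|P_\alpha[y,x']|>2|P[y,x']|$ forces $j'\le j-1$, so I can splice: the first $j'$ pieces of $P^\star$ are the first $j'$ pieces of $P$; the $j'$-th piece is $P_{j'}$ truncated at $x'$; the pieces of indices $j'+1,\dots,j-1$ are empty; the $j$-th piece is $P^\alpha_j$ starting at $x'$; and the remaining pieces are the tail pieces of $P_\alpha$. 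The length bounds for the three indexing regimes ($i\le j'$, $j\le i<B+1$, $i\ge B+1$) are verified identically to the expath case.

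The only new obstacle compared with \cref{lemma:VH2} is to verify that each piece of $P^\star$ is still a valid bipath piece under \cref{def:bipath}, i.e.\ either a shortest path in some $G_i$ or a concatenation of two such. Because we insert empty pieces on the indices strictly between $j'$ and $j$, the material drawn from $P$ and the material drawn from $P_\alpha$ never mix within a single piece of $P^\star$; thus every non-empty piece of $P^\star$ is a truncation of a single bipath piece of either $P$ or $P_\alpha$. By the uniqueness of shortest paths, a truncation of a shortest path in $G_i$ is still a shortest path in $G_i$, and a truncation of a concatenation $\pi_{G_i}(u_k,z)\circ\pi_{G_j}(z,v_k)$ is either a shortest path in $G_i$ or in $G_j$, or again a concatenation of two such shortest paths. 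Once this is in hand, $P^\star$ qualifies as an $\epsilon_5$-bipath strictly shorter than $P_\alpha$, and the entire argument of \cref{lemma:VH2} ports over verbatim.
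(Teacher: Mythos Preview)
Your proposal is correct and follows essentially the same approach as the paper, which merely states that the argument of \cref{lemma:VH2} carries over with bipath pieces replacing expath pieces. You have correctly identified and filled in the one genuinely new step: verifying that a truncation of a bipath piece $\pi_{G_i}(u_k,z)\circ\pi_{G_j}(z,v_k)$ is again a valid bipath piece, which the paper leaves implicit. One small imprecision: when you write ``decomposing $P_\alpha$ and $P$ per \cref{def:bipath}'', the decomposition of $P$ you actually use (and must use to get the index identity $j'=\lfloor\log_{1+\epsilon_5}|P[y,x']|\rfloor+1$) is its $\epsilon_5$-\emph{segment} decomposition coming from \cref{thm:decomposable2} and the proof of \cref{lemma:bipath}, not an arbitrary bipath decomposition per \cref{def:bipath}; your reference to ``exactly as in \cref{lemma:VH2}'' shows you have this in mind, but it is worth stating explicitly.
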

	\begin{proof}[Proof Sketch of \cref{lemma:bipath} and \cref{lemma:VH3}]
		The arguments are essentially the same as \cref{lemma:expath,lemma:VH2}, except that each subpath in $P$ and $P_{\alpha}$ is now a concatenation of two shortest paths in $G_i$ and $G_{i'}$. This does not affect the calculation of lengths of paths in the proofs. In particular, in \cref{lemma:VH2}, the representation of $P[u,x]\circ P_{\alpha}[x,v]$ as $\epsilon_5$-bipath remains exactly the same, and it is easy to verify the validity of $P[u,x]\circ P_{\alpha}[x,v]$ as an $\epsilon_5$-bipath.
	\end{proof}
	
	Recall that the query algorithm builds the graph $H$ on vertex set $V(H)$, adds an edge of weight $\DecTree(x,y,D)$ for each $x,y\in V(H)$, and outputs the value $|\pi_H(u,v)|$. We now prove that the query algorithm has stretch $1+\epsilon$.
	\begin{theorem}
		For every $u,v\in V(H)$, $|\pi_H(u,v)|\le(1+\epsilon)|\pi_{G-D}(u,v)|$.
	\end{theorem}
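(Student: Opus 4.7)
My plan is to mirror the proof of Theorem 3.9, with Lemma 3.8 replaced by an analogous pair lemma adapted to the bipath machinery. Concretely, I would first establish: for $u,v\in V(H)$ with $\DecTree(u,v,D) > |\pi_{G-D}(u,v)|$, there exist $x\in\pi_{G-D}(u,v)\setminus\{u,v\}$, $y\in\{u,v\}$ and $w\in V(H)$ satisfying (a) $|\pi_{G-D}(x,y)|\le\tfrac{1}{2}|\pi_{G-D}(u,v)|$, (b) $|\pi_{G-D}(x,w)|\le\epsilon_1|\pi_{G-D}(x,y)|$, and (c) $\DecTree(y,w,D)\le|\pi_{G-D}(y,x)|+|\pi_{G-D}(x,w)|<|\pi_{G-D}(u,v)|$. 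With this pair lemma in hand, the theorem falls out from the same induction as in Theorem 3.9: order the pairs of $V(H)$ by $|\pi_{G-D}(\cdot,\cdot)|$, handle the case $\DecTree(u,v,D)=|\pi_{G-D}(u,v)|$ trivially, and otherwise take $(x,y,w)$ from the pair lemma (WLOG $y=u$), use (c) with the triangle inequality to deduce $|\pi_{G-D}(w,v)|<|\pi_{G-D}(u,v)|$, apply the inductive hypothesis at $(w,v)$, and close via the exact chain of inequalities from the end of Theorem 3.9 using $\epsilon_1=\epsilon/(2+\epsilon)$.

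For the existence of a triple satisfying (a) and (b), I would let $\alpha$ be the last decision tree node visited by $\DecTree(u,v,D)$ whose $\avoid(\alpha)$ does not meet $\pi_{G-D}(u,v)$. Since $\DecTree$ did not halt at $\alpha$, the path $\pi_{G-D}(u,v)$ reaches some $x\in\seg(f,P_\alpha)\cap U_{l(f)}$, where $f$ is the failure chosen at $\alpha$. Applying the restated Lemma VH yields $w\in V(H)$ with $|\pi_{G-D}(x,w)|\le(\epsilon_1/2)\min\{|P_\alpha[u,x]|,|P_\alpha[x,v]|\}$, and the factor-of-two comparison from the proof of Lemma VH2, now applied to the shortest $\epsilon_5$-bipath $P_\alpha$ in $G-\avoid(\alpha)$, gives $|P_\alpha[u,x]|\le 2|\pi_{G-D}(u,v)[u,x]|$ and symmetrically for the suffix. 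Taking $y\in\{u,v\}$ closer to $x$ then delivers a triple satisfying (a) and (b).

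To force (c), I would, among all triples satisfying (b), pick one minimizing $|\pi_{G-D}(x,y)|$ and break ties by minimizing $|\pi_{G-D}(x,w)|$, and argue by contradiction. Suppose $\DecTree(y,w,D)>|\pi_{G-D}(y,x)|+|\pi_{G-D}(x,w)|=|P'|$ with $P'=\pi_{G-D}(y,x)\circ\pi_{G-D}(x,w)$. This is exactly the hypothesis of Lemma VH3, which asserts $P'$ is not $\epsilon_1$-far from $V(H)$ and produces $x'\in P'\setminus\{y,w\}$ and $w'\in V(H)$ with $|\pi_{G-D}(x',w')|\le\epsilon_1\min\{|P'[y,x']|,|P'[x',w]|\}$. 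Splitting on whether $x'\in P'[y,x)$ or $x'\in P'[x,w]$, exactly as in the case analysis in Lemma 3.8, yields respectively a triple $(x',y,w')$ with strictly smaller $|\pi_{G-D}(\cdot,y)|$ or a triple $(x,y,w')$ with strictly smaller $|\pi_{G-D}(x,\cdot)|$ (using $\epsilon_1<1$ in the second case), either of which contradicts the minimality of $(x,y,w)$.

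The crux, and the reason Section 4.3 needed Theorem 4.9 and Lemma VH3 rather than the simpler Theorem 4.3 and Lemma VH2, is precisely condition (c): the path we must replay through the decision tree is the concatenation $\pi_{G-D}(y,x)\circ\pi_{G-D}(x,w)$, which is not itself a single shortest path in $G-D$ and therefore is not directly covered by the expath-based structural result. Once Lemma VH3 supplies the structural guarantee for such concatenations, the contradiction for (c) goes through and the remaining argument is a syntactic copy of Section 3, with the small numerical checks $\epsilon_1<1$ and $(1+\epsilon_5)^2<2$ ensuring that the hypotheses of Lemma VH3 are met.
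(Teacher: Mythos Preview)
Your overall plan—package the analogue of Lemma~3.8 as a pair lemma and then run the induction of Theorem~3.9—is exactly the route the paper takes (the paper inlines the pair lemma into the induction rather than stating it separately), and your argument for condition~(c) via Lemma~4.12 and the two-case minimality contradiction is correct and matches the paper verbatim.

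There is, however, a real gap in your argument for the \emph{existence} of a triple satisfying (a) and (b). You propose to locate $x$ through the decision tree (the last node $\alpha$ whose $\avoid$ set misses $P=\pi_{G-D}(u,v)$, etc.) and then invoke the factor-of-two comparison from the proof of Lemma~4.6 to pass from $|P_\alpha[u,x]|$ to $|P[u,x]|$. But that comparison works by splicing $P[u,x]$ onto $P_\alpha[x,v]$ and arguing the result is a valid, shorter $\epsilon_5$-bipath; for the splice to be a valid bipath you need $P$ itself to already be an $\epsilon_5$-bipath, which by Theorem~4.9 requires $P$ to be $\epsilon_1$-far from $V(H)$. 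Under your hypothesis $\DecTree(u,v,D)>|P|$, Lemma~4.12 (applied with the trivial concatenation $P=\pi_{G-D}(u,v)\circ\pi_{G-D}(v,v)$) says precisely that $P$ is \emph{not} $\epsilon_1$-far, so the factor-of-two step is unjustified. Without it your specific $x$ need not satisfy (b): $\min\{|P_\alpha[u,x]|,|P_\alpha[x,v]|\}$ can be much larger than $\min\{|P[u,x]|,|P[x,v]|\}$, since $P_\alpha$ is only the shortest \emph{bipath}, not the shortest path, in $G-\avoid(\alpha)$.

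The paper sidesteps this circularity by not trying to extract the triple from the decision tree at all. It simply case-splits: if $P$ is $\epsilon_1$-far from $V(H)$, Lemma~4.12 gives $\DecTree(u,v,D)\le|P|$ and the induction step is trivial; if $P$ is not $\epsilon_1$-far, the definition of ``not $\epsilon_1$-far'' directly hands you $x\in P\setminus\{u,v\}$ and $w\in V(H)$ with $|\pi_{G-D}(x,w)|\le\epsilon_1\min\{|P[u,x]|,|P[x,v]|\}$, and (a), (b) follow by taking $y$ to be the nearer endpoint. After this one-line fix, the rest of your argument—the minimization, the contradiction establishing (c), and the inductive chain of inequalities using $\epsilon_1=\epsilon/(2+\epsilon)$—goes through unchanged and coincides with the paper's proof.
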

	\begin{proof}
		For all pairs $u,v\in V(H)$, we sort the lengths $|\pi_{G-D}(u,v)|$ in nondecreasing order, and use induction on this order. For each $u,v\in V(H)$, if $\pi_{G-D}(u,v)$ is $\epsilon_1$-far away from $V(H)$, by \cref{lemma:VH3}, $\DecTree(u,v,D)=|\pi_{G-D}(u,v)|$ and we are done. Otherwise let $P=\pi_{G-D}(u,v)$, then there are vertices $x\in P\setminus\{u,v\}$, $y\in\{u,v\}$ and $w\in V(H)$ such that $|P[x,y]|\leq \frac{1}{2}|P|$ and $|\pi_{G-D}(x,w)|\le\epsilon_1|P[x,y]|$.
		
		Among all such triples $(x,y,w)$, we choose the triple that minimizes $|P[x,y]|$, and in case of ties choose the triple that minimizes $|\pi_{G-D}(x,w)|$. W.l.o.g.~assume $y=u$. Let $P'=\pi_{G-D}(u,x)\circ\pi_{G-D}(x,w)$, if $P'$ is not $\epsilon_1$-far away from $V(H)$, then there are vertices $x'\in P'\setminus\{u,w\},y'\in\{u,w\}$ and $w'\in V(H)$ such that $|\pi_{G-D}(x',w')|\le\epsilon_1|P'[x',y']|$. The same argument as \cref{lemma:pair} shows that this is a contradiction to the choice of $(x,y,w)$:\begin{itemize}
			\item If $x'\in P'[u,x)$, then the triple $(x',y,w')$ also satisfies that $|\pi_{G-D}(x',w')|\le \epsilon_1|P[x',y]|$, and $|P[x',y]|<|P[x,y]|$. So we should have chosen the triple $(x',y,w')$ instead of $(x,y,w)$.
			\item If $x'\in P'[x,w]$, then $|\pi_{G-D}(x,w')|\le |P'[x,x']|+|\pi_{G-D}(x',w')|\le |P'[x,x']|+\epsilon_1|P'[x',w]|<|\pi_{G-D}(x,w)|$ as $\epsilon_1<1$. So we should have chosen the triple $(x,y,w')$ instead of $(x,y,w)$.
		\end{itemize}
		
		\begin{figure}[H]
			\centering
			\includegraphics[scale=0.7]{lem3-8b.pdf}
		\end{figure}	
		
		It follows that $P'$ is $\epsilon_1$-far away from $V(H)$. By \cref{lemma:VH3}, we have $\DecTree(u,w,D)\le |P'|=|\pi_{G-D}(u,x)|+|\pi_{G-D}(x,w)|$. It is easy to see that $|\pi_{G-D}(w,v)|<|\pi_{G-D}(u,v)|$, thus by induction hypothesis $|\pi_H(w,v)|\le(1+\epsilon)|\pi_{G-D}(w,v)|$. We have
		\begin{align*}
			|\pi_H(u,v)|\le&~|\pi_H(u,w)|+|\pi_H(w,v)|\\
			\le&~|\pi_{G-D}(u,x)|+|\pi_{G-D}(x,w)|+(1+\epsilon)(|\pi_{G-D}(w,x)|+|\pi_{G-D}(x,v)|)\\
			\le&~|\pi_{G-D}(u,x)|(1+\epsilon_1+(1+\epsilon)\epsilon_1)+(1+\epsilon)|\pi_{G-D}(x,v)|\\
			=&~(1+\epsilon)|\pi_{G-D}(u,v)|.\qedhere
		\end{align*}
	\end{proof}
	
	Since an $\epsilon_5$-bipath occupies $O(\epsilon_5^{-1}\log(nW))$ space, and each non-leaf node has $O(h\epsilon_5^{-1}\log(nW))$ children, we have a VSDO of
	\[\SpaceQueryStretch{n^{2+1/c}\epsilon_5^{-1}\log(nW)\cdot O(h\epsilon_5^{-1}\log(nW))^d}{O(d^2|V(H)|^2\cdot\epsilon_5^{-1}\log(nW))}{1+\epsilon.}\]
	As $\epsilon_5^{-1}=O(\epsilon^{-1}\log n)$, the VSDO is of
	\[\SpaceQueryStretch{n^{2+1/c}\cdot (\log d/\log n)\cdot O(\epsilon^{-1}\log^2 n\log(nW)/\log d)^{d+1}}{O(\epsilon^{-1}d^{2c+6}\log^{11}n\log(nW)/\log^2d)}{1+\epsilon.}\]
	
	\subsection{Implementation Details}\label{sec:orac23}
	\paragraph{Preprocessing.} Given a subgraph $G'$ of $G$, vertices $s,t\in V$ and $\epsilon'>0$, we show that the shortest $\epsilon'$-expath from $s$ to $t$ in $G'$ can be computed in polynomial time.
	
	Let $\pi'_{G'}(s,t)$ be the shortest path of the form $\pi_{G_i}(s,t)$, where $1\leq i\leq p$ and $\pi_{G_i}(s,t)\subseteq G'$. (Note that $\pi'_{G'}(s, t)$ may not exist). First we compute $\pi'_{G'}(s,t)$ for all pairs of $s,t\in V$. Then let $\pi(s,t,j)$ be the shortest $s$-$t$ path $P$ in $G'$ such that the following hold.
	\begin{itemize}
		\item $P$ is the concatenation of subpaths $P_0,\dots,P_j$ interleaved with $\le j+1$ edges. Moreover, denote $P_k=P[u_k,v_k]$, where $u_k,v_k$ are endpoints of $P_k$ and $u_k$ is the one closer to $u$, then $v_j=t$, but there might be an edge between $s$ and $u_0$. (That is, $P$ is the concatenation of $e_0,P_0,e_1,P_1,\dots,e_j,P_j$ where each $e_i$ is an edge and each $P_i$ is a subpath.)
		\item For every $0\le k\le j$, $P_k$ is either empty or a shortest path in $G_i$ for some level $1\le i\le p$.
		\item For every $0\le k\le j$, $|P[s,v_k]|\le (1+\epsilon')^k$.
	\end{itemize}
	
	We use a dynamic programming algorithm to compute $|\pi(s,t,k)|$ for all $k\le B$. To start with, we artificially define $|\pi(s,t,-1)|$ as:
	\begin{equation*}
		|\pi(s,t,-1)|=\begin{cases}0&\text{if $s=t$}\\+\infty&\text{if $s\ne t$}\end{cases}.
	\end{equation*}
	Given $\{|\pi(s,t,j-1)|\}$ for all $s,t\in V$, we compute $|\pi(s,t,j)|$ as follows:
	\begin{equation*}
		|\pi(s,t,j)|=\begin{cases}|\tilde{\pi}(s,t,j)|&\text{if $|\tilde{\pi}(s,t,j)|\le(1+\epsilon')^j$}\\+\infty&\text{otherwise}\end{cases},
	\end{equation*}
	where
	\begin{equation}
		|\tilde{\pi}(s,t,j)|=\min_{(u,v)}\{|\pi(s,u,j-1)|+w(u,v)+|\pi'_{G'}(v,t)|\}.\label{eq:tildepi}
	\end{equation}
	Then the length of shortest $\epsilon'$-expath is
	\begin{equation*}
		\min_{(u,v)}\{|\pi(s,u,B)|+w(u,v)+|\pi(v,t,B)|\}.
	\end{equation*}
	Here $w(u,v)$ is the weight of the edge between $u$ and $v$. If $u=v$ then we assume $w(u,v)=0$.
	
	We can easily adapt the algorithm to obtain the actual shortest $\epsilon'$-expath.
	
	If we replace the term $\pi_{G'}'(s,t)$ in \eqref{eq:tildepi} by $\pi_{G'}''(s,t)$, which is defined as the shortest concatenated path of the form $\pi'_{G'}(s,u)\circ\pi'_{G'}(u,t)$, then we can also compute shortest $\epsilon'$-bipaths in polynomial time. Once we have a polynomial-time algorithm for computing the shortest $\epsilon'$-expath or $\epsilon'$-bipath in a subgraph $G'$, it is easy to see that the whole preprocessing time is polynomial in the space complexity.
	
	\paragraph{Query.} An $\epsilon'$-expath from $u$ to $v$ is stored as $O(\epsilon'^{-1}\log(nW))$ triples $(x,y,l)$, where each triple denotes a subpath $\pi_{G_l}(x,y)$. To check whether a failed vertex $f$ is in an $\epsilon'$-expath $P_\alpha$, we check every subpath $\pi_{G_l}(x,y)$ whether it contains $f$ by checking whether $\pi_{G_l}(x,f)+\pi_{G_l}(f,y)=\pi_{G_l}(x,y)$. The correctness of this method relies on the uniqueness assumption of shortest paths. If $f$ is in $P_{\alpha}$, we can also find the segment it is in, by computing $\lfloor\log_{1+\epsilon'}|P_\alpha[u,f]|\rfloor$ or $\lfloor\log_{1+\epsilon'}|P_\alpha[f,v]|\rfloor$.
	
	If we store the distance matrices of each $G_i$ during preprocessing, then every operation (\ie checking if $f\in P_\alpha$ and locating $\seg(f,P_\alpha)$) can be done in $O(\epsilon'^{-1}\log(nW))$ time. Therefore the time complexity of \cref{alg:DecTree} becomes $O(\epsilon'^{-1}\log(nW)\cdot d^2)$. Similar arguments also apply to $\epsilon'$-bipaths.
	
	\subparagraph{Retrieving the actual path.} The actual $(1+\epsilon)$-approximate shortest path can be efficiently retrieved as follows. (By \emph{retrieving a path efficiently}, we mean finding it in $O(\ell)$ additional time, where $\ell$ is the number of vertices in the path.)\begin{itemize}
		\item For every $1\le i\le p$, we also preprocess the shortest paths of $G_i$. That is, for every $v\in V(G_i)$, we precompute the incoming shortest path tree rooted at $v$. Consequently, given any $1\le i\le p$ and $u,v\in V(G_i)$, we can retrieve the path $\pi_{G_i}(u,v)$ efficiently.
		\item Let $\alpha\in FT(u,v)$ be a decision tree node. Recall that $P_\alpha$ is an $\epsilon_4$-expath or an $\epsilon_5$-bipath, therefore a concatenation of $O(\epsilon_4^{-1}\log(nW))$ or $O(\epsilon_5^{-1}\log(nW))$ paths of the form $\pi_{G_i}(x,y)$. Hence, $P_\alpha$ can be retrieved efficiently.
		\item Let $u,v\in V$ and $D$ be a set of failed vertices. We build the graph $H$ according to \cref{def:VH}, and find the shortest $u$-$v$ path in $H$. Each edge $(x,y)$ in this path corresponds to a path returned by $\DecTree(x,y,D)$, which by \cref{alg:DecTree} is $P_\alpha$ for some decision tree node $\alpha$. The concatenation of these paths $P_\alpha$ for each edge on $\pi_H(u,v)$ forms an $(1+\epsilon)$-approximate shortest $u$-$v$ path in $G-D$. As each $P_\alpha$ can be retrieved efficiently, this path can also be retrieved efficiently.
	\end{itemize}

	\subsection{A Reduction from Arbitrary Weights to Bounded Weights}\label{sec:arbitrary-to-bounded}
	If $W=n^{\omega(1)}$, then we may be unsatisfied with the $\log^d(nW)$ factor in the space complexity of our oracle. We can replace the $\log^d(nW)$ factor by $\log W\log^{d-1}n$ in the space complexity of our data structure, via a reduction from arbitrary weights to bounded weights. This reduction appears in \cite[Lemma 4.1]{CCFK17} and we notice that it also holds for vertex failures.
	\begin{lemma}[Lemma 4.1 of \cite{CCFK17}, rephrased]
		Suppose we have a VSDO for undirected graphs with edge weights in $[1,n^3]$, which occupies $S$ space, needs $Q$ query time and has stretch $A$. Then we can build a VSDO for undirected graphs with edge weights in $[1,W]$, which occupies $O(S\log W/\log n)$ space, needs $O(Q\log\log W)$ query time and has stretch $(1+1/n)A$. \label{lemma:bounded-weights}
	\end{lemma}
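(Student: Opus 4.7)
My plan is the classical geometric-scaling reduction. For each scale $i \in \{0, 1, \ldots, I\}$ with $I = \lceil \log_n W \rceil + 3 = O(\log W / \log n)$, build a scaled graph $G_i$ on the vertex set $V$: each edge $e$ of original weight $w_e$ is re-weighted to $w_e^{(i)} := \max\{1, \lceil w_e / n^i \rceil\}$ if $w_e \le n^{i+3}$, and removed from $G_i$ otherwise. All surviving edge weights lie in $[1, n^3]$, so a copy of the given VSDO can be built on each $G_i$, yielding total space $(I+1)\cdot S = O(S \log W / \log n)$ as claimed. Each inner oracle is queried on the same failure set $D$ (which makes sense since $V(G_i) = V$).

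For a query $(u, v, D)$, write $\delta := \delta_{G - D}(u, v) \in [1, nW]$ and let $i^\star$ be the unique integer with $n^{i^\star + 2} \le \delta < n^{i^\star + 3}$. Two observations drive the correctness analysis. First, $w_e^{(i)} \cdot n^i \ge w_e$ for every surviving edge at every scale, so any $u$-$v$ path in $G_i - D$ (which is also a $u$-$v$ path in $G - D$) satisfies $n^i \cdot (\text{scaled length}) \ge (\text{original length}) \ge \delta$; in particular the rescaled oracle output $F(i) := n^i \tilde{\delta}^{(i)}$ is always a valid upper bound on $\delta$. Second, at $i = i^\star$ every edge of the true shortest path $P^*$ has weight $w_e \le \delta < n^{i^\star + 3}$ and hence survives in $G_{i^\star}$; its scaled length in $G_{i^\star}$ is at most $\delta / n^{i^\star} + |P^*| \le \delta/n^{i^\star} + n \le (1 + 1/n)\cdot \delta / n^{i^\star}$, using $\delta / n^{i^\star} \ge n^2$. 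Applying the inner oracle's stretch guarantee then gives $F(i^\star) \le (1 + 1/n)\, A\, \delta$, hitting the target stretch.

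To locate $i^\star$ using only $O(\log\log W)$ inner queries, I would binary-search across the $O(\log W / \log n)$ scales and return the minimum $F(i)$ over all scales probed. Since every probed $F(i)$ is already a valid upper bound on $\delta$, correctness only requires at least one probed scale to land within a constant offset of $i^\star$. This is plausible because $F$ behaves qualitatively differently in the two regimes: for $i < i^\star$, edges of $P^*$ of weight $> n^{i+3}$ are truncated and $F(i)$ is either $+\infty$ or is forced large by the alternative path; for $i > i^\star + O(1)$, scaled weights saturate at $1$ and $F(i)$ grows at least like $n^{i+1}$. The main obstacle is making the binary-search decision rule precise enough to certify $(1 + 1/n)\, A$ rather than a slightly weaker constant. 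The cleanest resolution I envisage is a two-stage search: first probe a logarithmic-count of coarse scales spanning $\{0,\ldots,I\}$ to bracket $i^\star$ in an $O(\log n)$-width window (exploiting that $F$ has a qualitatively identifiable minimum there), then bisect inside that window, for $O(\log \log W)$ inner queries in total and query time $O(Q \log\log W)$, as claimed in \cite{CCFK17}.
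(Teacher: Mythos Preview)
Your scaling framework and per-scale analysis are essentially the same as the paper's, and the bound $F(i^\star)\le(1+1/n)A\delta$ is correct. The genuine gap is the binary search itself. You admit that ``the main obstacle is making the binary-search decision rule precise,'' and your proposed two-stage search does not close it: you never state a monotone predicate that bisection can act on, and ``$F$ has a qualitatively identifiable minimum'' is not a usable criterion. Probing $O(\log I)$ evenly spaced scales brackets $i^\star$ only in a window of width $I/\log I$, not $O(\log n)$, and nothing you have said rules out $F$ having several near-minima far from $i^\star$ that confuse the decision.

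The paper's fix, which you are missing, is simple and works with your construction verbatim: binary-search on \emph{connectivity}, not on $F$. Because your edge sets are nested (an edge is in $G_i$ iff $w_e\le n^{i+3}$), the predicate ``$u$ and $v$ are connected in $G_i-D$'' is monotone in $i$, and it can be tested with a single inner-oracle query (finite output iff connected). Binary search finds the smallest connected scale $i_0$ in $O(\log I)=O(\log\log W)$ queries. One then shows $i_0\in\{i^\star,i^\star\pm 1\}$ by an argument like yours (at $i^\star$ the true path survives, while at scale $i^\star-2$ every surviving path has original length $<n^{i^\star+2}\le\delta$), and returns $\min\{F(i_0),F(i_0+1)\}$. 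This gives exactly the claimed $(1+1/n)A$ stretch with no hand-waving. You should also handle the boundary case $\delta<n^2$, where your $i^\star$ would be negative; in that regime $i=0$ already works with no rounding loss.
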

	\begin{proof}
		For every $0\le i\le\frac{\log W}{\log n}$, we build a VSDO $\caO^i$ on the graph $\tilde{G}^i$, which is defined as follows: $V(\tilde{G}^i)=V(G)$ and for each edge $(u,v)$ of weight $w$ in $G$, if $w\le n^{i+1}$, then we have an edge $(u,v)$ of weight $\lceil w\cdot n^{-(i-2)}\rceil$ in $\tilde{G}^i$. Note that the graphs $\tilde{G}^i$ are \emph{monotone} in the sense that, if an edge appears in $\tilde{G}^i$, then it also appears (albeit with a different weight) in $\tilde{G}^{i+1}$. Also note that the edge weights in every $\tilde{G}^{i+1}$ is at most $n^3$.
		
		Given a query $(u,v,D)$, we can use binary search to find the smallest integer $i$ such that $s$ and $t$ are connected in $\tilde{G}^i-D$. Then we use the oracle $\caO^i$ and $\caO^{i+1}$ to compute an $A$-approximation of the value
		\[ans=\min\mleft\{\delta_{\tilde{G}^i-D}(u,v)\cdot n^{i-2},\delta_{\tilde{G}^{i+1}-D}(u,v)\cdot n^{i-1}\mright\}.\]
		
		It remains to prove that $\delta_{G-D}(u,v)\le ans\le (1+1/n)\delta_{G-D}(u,v)$. That $\delta_{G-D}(u,v)\le ans$ is trivial. Let $\tilde{W}$ be the largest edge weight in $\pi_{G-D}(u,v)$, and $\istar=\lfloor\log_n\tilde{W}\rfloor$. Since $\tilde{W}\le n^{\istar+1}$, $u,v$ are connected in $\tilde{G}^\istar$. On the other hand, every edge in $G$ that appears in $\tilde{G}^{\istar-2}$ has weight at most $n^{\istar-1}$, thus if $u,v$ are connected in $\tilde{G}^{\istar-2}$, then $\delta_{G-D}(u,v)\le (n-1)n^{\istar-1}<n^\istar$, contradicting the definition of $\istar$. Therefore $\istar-1\le i\le \istar$ and $\istar\in\{i,i+1\}$.
		
		We have $ans\le \delta_{\tilde{G}^\istar-D}(u,v)\cdot n^{\istar-2}$. For every edge $e\in E(G)$ with weight $w$, if $e$ appears in the graph $\tilde{G}^\istar$, then $(\lceil w\cdot n^{-(\istar-2)}\rceil \cdot n^{\istar-2})\le w+n^{\istar-2}$, \ie every such edge is ``overestimated'' by an additive error of at most $n^{\istar-2}$. It follows that $ans\le\delta_{G-D}(u,v)+(n-1)n^{\istar-2}$. Since $\delta_{G-D}(u,v)\ge n^\istar$, we have $ans\le (1+1/n)\delta_{G-D}(u,v)$.
		
		As our new oracle computes an $A$-approximation of $ans$, its stretch is $(1+1/n)A$.
	\end{proof}
	
	Assuming $\epsilon>2/n$, \cref{lemma:bounded-weights} transforms the VSDO in \cref{sec:orac2} into a VSDO of
	\[\SpaceQueryStretch{n^{3+1/c}(\log W/\log n)\cdot O(\epsilon^{-1}\log^3 n/\log d)^d}{O(d^{2c+6}\log^{10}n\log\log W/\log^2 d)}{1+\epsilon,}\]
	and the VSDO in \cref{sec:reduce-space} into a VSDO of 
	\[\SpaceQueryStretch{n^{2+1/c}(\log W\log d/\log^2 n)\cdot O(\epsilon^{-1}\log^3 n/\log d)^{d+1}}{O(\epsilon^{-1}d^{2c+6}\log^{12}n\log\log W/\log^2 d)}{1+\epsilon.}\]
	
	\section{A $\poly(\log n,d)$-Stretch Oracle}\label{sec:orac1}
	We present an oracle of space complexity $n^{2+1/c}\poly(\log(nW),d)$ that achieves $\poly(\log n,d)$ stretch and $\poly(\log (nW),d)$ query time. We actually consider a decision version of our problem, namely:\begin{enumerate}[a)]
		\item It is given a parameter $\rho$.
		\item If $\delta_{G-D}(u,v)\le\rho$, the data structure outputs \textsc{Yes}.\label{item:poly-b}
		\item For some $A=\poly(\log n,d)$, if $\delta_{G-D}(u,v)>\rho\cdot A$, then the data structure outputs \textsc{No}.\label{item:poly-c}
	\end{enumerate}
	
	A standard binary search argument shows that if the above decision version can be solved in space $S$, query time $Q$ and stretch $A$, then there is a VSDO of size $O(S\log(nW))$, query time $O(Q\log\log(nW))$ and stretch $2A$. (See also \cite{CLPR12}.) Let $\caO_{\rho}$ be the oracle solving the decision version, and we build a VSDO $\caO^\star$ as follows. The new oracle $\caO^\star$ consists of $O(\log(nW))$ old oracles $\{\caO_{2^i}:0\le i\le\lceil\log_2(nW)\rceil\}$. For convenience we assume $\caO_{2^{-1}}$ always outputs \textsc{No} and $\caO_{2^{\lceil\log_2(nW)\rceil+1}}$ always outputs \textsc{Yes}.
	
	On a query $u,v,D$, the oracle $\caO^\star$ finds some $i$ ($0\le i\le\lceil\log_2(nW)\rceil+1$) such that $\caO_{2^{i-1}}$ outputs \textsc{No} and $\caO_{2^i}$ outputs \textsc{Yes}, and outputs $ans=A\cdot 2^i$. Such $i$ always exists and can be found in $O(\log\log(nW))$ oracle calls by binary search.\footnote{We maintain an interval $[l,r]$ such that on this query, $\caO_{2^l}$ outputs \textsc{No} and $\caO_{2^r}$ outputs \text{Yes}. Initially $l=-1$ and $r=\lceil\log_2(nW)\rceil+1$. In each iteration, let $m=\lfloor(l+r)/2\rfloor$, and we query $\caO_{2^m}$. If $\caO_{2^m}$ returns \textsc{No}, we set $l\gets m$, otherwise we set $r\gets m$. After $O(\log\log(nW))$ oracle calls, we have $r-l=1$ and we are done.} Since $\caO_{2^{i-1}}$ outputs \textsc{No}, we have $\delta_{G-D}(u,v)\ge 2^{i-1}$ by \ref{item:poly-b}), so $ans\le 2A\cdot\delta_{G-D}(u,v)$. Since $\caO_{2^i}$ outputs \textsc{Yes}, we have $\delta_{G-D}(u,v)\le A\cdot 2^i$ by \ref{item:poly-c}), so $ans\ge\delta_{G-D}(u,v)$. Thus $\caO^\star$ is indeed a VSDO with stretch $2A$.
	
	\subsection{Preliminaries}
	\subsubsection{$k$-Covering Sets}
	Denote $[l,r]=\{l,l+1,\dots,r\}$ and $[n]=[1,n]$. An \emph{interval} is a set of the form $[l,r]$. Given a universe $[n]$, a set of intervals $\caI$ is a \emph{$k$-covering set} of $[n]$ if for every $1\le l\le r\le n$, there are at most $k$ intervals $I_1,I_2,\dots,I_k\in\caI$ such that $\bigcup_{i=1}^kI_i=[l,r]$.
	
	The notion of $k$-covering sets arise from the study of the \emph{semigroup range query} problem \cite{Yao82, AS87}, which is a generalization of the \emph{range minimum query} problem \cite{AHU73,HT84,BF00}. For example, by constructing an interval tree over $[n]$, it is easy to see that there is an $O(\log n)$-covering set of $[n]$ whose size is $O(n)$. We use the following (stronger) results of \cite{Yao82, AS87}:
	\begin{theorem}
		There exists a polynomial-time computable $O(\alpha(n))$-covering set $\caI$ of $[n]$ with $|\caI|=O(n)$, where $\alpha(n)$ is the inverse-Ackermann function. Moreover, for any interval $[l,r]$, we can find $O(\alpha(n))$ intervals whose union is $[l,r]$ in $O(\alpha(n))$ time.\label{thm:covering-set}
	\end{theorem}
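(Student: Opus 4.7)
\medskip
\noindent\textbf{Proof proposal.} My plan is to construct $\caI$ by a recursive hierarchical decomposition whose depth is governed by the inverse-Ackermann function, following the classical approach of Yao and Alon--Schieber. Let $C(n,k)$ denote the minimum number of intervals in a $k$-covering set of $[n]$. The goal is to show $C(n,\alpha(n))=O(n)$.

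First I would handle the base cases. For $k=1$, the only $1$-covering set of $[n]$ is the set of all $\binom{n}{2}+n$ intervals, giving $C(n,1)=\Theta(n^2)$. For $k=2$, an interval tree of height $\log n$ gives $C(n,O(\log n))=O(n)$, and a more careful analysis (the ``ladder'' construction) yields $C(n,O(\log n/\log\log n))=O(n)$ and more generally a trade-off between the covering constant $k$ and the set size. The key recursion is: partition $[n]$ into consecutive blocks $B_1,\dots,B_t$ where $|B_i|=b$, include a $(k-1)$-covering set inside each block (total size $t\cdot C(b,k-1)$), and include a $(k-1)$-covering set for the ``skeleton'' $[t]$ of block indices, but this time ``lifted'' to intervals of the form $B_i\cup B_{i+1}\cup\cdots\cup B_j$ (total size $C(t,k-1)$). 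Any query $[l,r]$ can be decomposed into an interior block-aligned part (covered by $k-1$ lifted intervals) plus at most two ``ragged'' pieces inside the endpoint blocks (each covered by $k-1$ intervals from the recursive structure), giving $3(k-1)\le k$ intervals if we choose the bookkeeping carefully. Hence
\[
C(n,k)\;\le\;\frac{n}{b}\cdot C(b,k-1)\;+\;C\!\left(\frac{n}{b},k-1\right).
\]

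Next I would choose the block size $b=b(n,k)$ so that this recursion produces an $O(n)$ bound when $k=\alpha(n)$. Defining row-inverse functions $\alpha_k$ so that $\alpha_1(n)=\lceil n/2\rceil$ and $\alpha_{k}(n)=\min\{j:\alpha_{k-1}^{(j)}(n)\le 1\}$ (the usual Ackermann-hierarchy setup), and setting $b$ so that $C(b,k-1)\le c\cdot b$, the recursion telescopes into $C(n,k)\le cn\cdot\alpha_k(n)$ for a constant $c$. Taking $k=\alpha(n)$ we get $\alpha_k(n)=O(1)$ and therefore $|\caI|=O(n)$, while every query is covered by $O(\alpha(n))$ intervals from $\caI$.

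For the algorithmic part, I would augment the hierarchy with pointers so that, given a query $[l,r]$, the query procedure descends level by level: at each level it locates the deepest block containing both $l$ and $r$, extracts the ``middle'' block-aligned part via a precomputed lifted interval, and recurses into the two ragged endpoint blocks. Since each level contributes $O(1)$ work and $O(1)$ intervals, the total query time is $O(\alpha(n))$. Preprocessing is straightforward polynomial time: at each level we enumerate blocks and recurse.

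The main obstacle I anticipate is bookkeeping the constants in the recursion so that the number of intervals per query really stays $O(\alpha(n))$ rather than blowing up by a constant factor per level (a constant-factor-per-level blow-up would yield $2^{\alpha(n)}$ intervals, which is still $O(\alpha(n))$ asymptotically but requires care to check). A related subtle point is handling the ragged endpoint blocks uniformly with the lifted skeleton so that the same $k{-}1$ budget suffices for both, which is what justifies the ``$3(k-1)\le k$'' step above; getting this step to work cleanly requires replacing the straightforward partition by a slightly redundant overlapping block decomposition (two shifted copies of the block partition), a standard trick in this literature.
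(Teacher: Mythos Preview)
The paper does not prove this theorem at all; it is quoted as a black box from Yao~\cite{Yao82} and Alon--Schieber~\cite{AS87} (see the sentence immediately preceding the theorem: ``We use the following (stronger) results of~\cite{Yao82, AS87}''). So there is no ``paper's own proof'' to compare against. Your sketch is in the spirit of those original constructions---a recursive block decomposition whose depth is controlled by the inverse-Ackermann hierarchy---and is a reasonable outline of how one would reprove the cited result.

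That said, two slips in your write-up are worth flagging. First, the inequality ``$3(k-1)\le k$'' is false for $k\ge 2$; what you actually need is that a query at level $k$ decomposes into $O(1)$ pieces each handled at level $k-1$, so the total number of output intervals is $O(k)=O(\alpha(n))$, not that it literally stays bounded by $k$. Second, your parenthetical that ``$2^{\alpha(n)}$\dots is still $O(\alpha(n))$ asymptotically'' is incorrect: $2^{\alpha(n)}$ grows strictly faster than $\alpha(n)$ (albeit still extremely slowly), so you do need the per-level blow-up to be additive, not multiplicative. Neither of these affects the overall plan, which is sound, but they would need to be fixed in a full proof.
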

	\subsubsection{Euler Tours}
	For a tree $T$ rooted at $r\in V$, we perform a depth-first search on $T$ starting at $r$, and record every vertex at the first time it is encountered. The sequence of encountered vertices is called the \emph{Euler tour} of $T$, denoted as $\ET(T)$. The Euler tour has a nice property, namely that every subtree of $T$ rooted at $x\in V(T)$ corresponds to an interval of the sequence $\ET(T)$. As a corollary, if we remove $d$ vertices $D$ from $T$, every connected component in $T-D$ (which is a smaller tree) corresponds to the union of $O(d)$ such intervals. (See \cref{fig:ET} for an illustration.)
	
	\begin{lemma}
		Let $D$ be a subset of $V(T)$ such that $|D|\le d$, $S$ be any connected component of $T-D$, then $S$ is the union of $O(d)$ intervals of $\ET(T)$. Moreover these intervals can be found in $O(d\log d)$ time.\label{lemma:ET}
	\end{lemma}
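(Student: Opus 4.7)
The plan is to pin down the structural claim first, then translate it into intervals of the Euler tour, and finally address the algorithmic step.

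\textbf{Structural characterization.} First I would argue that every connected component $S$ of $T-D$ has a unique vertex $r_S$ of minimum depth in $T$ (its \emph{root}): if two vertices of $S$ had the same minimum depth, the unique $T$-path between them would have to leave the depth of $r_S$ and come back, which is impossible in a tree. Using $r_S$, I claim
\[
S \;=\; \mathrm{subtree}_T(r_S) \;\setminus\; \bigcup_{f \in D_S} \mathrm{subtree}_T(f),
\]
where $D_S \subseteq D$ is the set of $f \in D$ that are descendants of $r_S$ whose $T$-path from $r_S$ (exclusive of $r_S$, exclusive of $f$) contains no other vertex of $D$. The $\subseteq$ direction follows because any $v \in S$ is a descendant of $r_S$ (else the $T$-path from $r_S$ to $v$ would exit and re-enter $S$), and no $f \in D_S$ can be on that path. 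The $\supseteq$ direction follows because a descendant of $r_S$ that is not in any $\mathrm{subtree}_T(f)$ with $f \in D_S$ is joined to $r_S$ by a $D$-free path in $T$. Also $|D_S| \le |D| \le d$, since $D_S \subseteq D$.

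\textbf{Reduction to intervals.} Next I would invoke the standard property of the Euler tour: for every $x \in V(T)$, the vertices of $\mathrm{subtree}_T(x)$ occupy a contiguous interval $I(x) = [\mathrm{in}(x), \mathrm{out}(x)]$ of $\ET(T)$, and for $x \ne y$ either $I(x) \cap I(y) = \varnothing$ or one contains the other. Hence the sets $\{I(f) : f \in D_S\}$ are pairwise disjoint intervals all contained in $I(r_S)$. Removing $|D_S|$ disjoint sub-intervals from $I(r_S)$ splits it into at most $|D_S|+1 \le d+1 = O(d)$ intervals, whose union is precisely $S$. This proves the combinatorial bound.

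\textbf{Algorithm.} For the running time, I assume the preprocessing has stored $\mathrm{in}(x), \mathrm{out}(x)$ for every $x \in V(T)$, so each $I(x)$ is available in $O(1)$. Given $D$ (and $r_S$, e.g.\ from the caller or computed once per component): (i) filter $D$ to those $f$ with $I(f) \subseteq I(r_S)$ in $O(d)$ time; (ii) sort the surviving vertices by $\mathrm{in}(f)$ in $O(d \log d)$ time; (iii) sweep left-to-right, maintaining the rightmost $\mathrm{out}$ value seen so far and discarding any $f$ nested inside an earlier interval, obtaining $D_S$ in $O(d)$; (iv) output the $\le |D_S|+1$ gap intervals inside $I(r_S)$ in $O(d)$. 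The bottleneck is the sort in step (ii), giving the claimed $O(d \log d)$.

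\textbf{Expected obstacle.} The combinatorial steps are essentially immediate once the correct notion of $r_S$ and $D_S$ is isolated; the only mild subtlety is verifying that $S$ really equals $I(r_S) \setminus \bigcup_{f\in D_S} I(f)$, which requires checking that the $\mathrm{subtree}_T(f)$'s for $f \in D_S$ are exactly the ``forbidden'' regions. Algorithmically, the main point to get right is how $r_S$ is obtained; if it is not passed in, one must find it from, say, a query vertex $v\in S$, which can be done within the same budget by locating the deepest ancestor of $v$ whose interval does not strictly contain any $I(f)$ with $f \in D$ lying on the ancestor chain of $v$ — again an $O(d\log d)$ step after sorting $D$ by $\mathrm{in}$.
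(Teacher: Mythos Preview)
Your proposal is correct and follows essentially the same approach as the paper: identify the highest vertex $r_S$ of $S$, express $S$ as the subtree of $r_S$ minus the subtrees rooted at failures below it, translate to Euler-tour intervals, and sort the endpoints to read off the $O(d)$ gap intervals. Your version is in fact more careful than the paper's---the paper writes $S = I(h)\setminus\bigcup_{f\in D}I(f)$ over \emph{all} of $D$, which is literally empty whenever some $f\in D$ is an ancestor of $h$; your explicit restriction to the ``topmost'' failures $D_S$ below $r_S$ avoids this imprecision and makes the disjointness of the removed intervals transparent.
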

	\begin{proof}
		Let $I(x)$ denote the interval of $\ET(T)$ corresponding to the subtree rooted at $x$. Let $h$ be the highest vertex in $S$, then $S=I(h)\setminus\bigcup_{f\in D}I(f)$, which is a big interval subtracting $d$ smaller intervals. By sorting the endpoints of $\{I(f):f\in D\cup\{h\}\}$, we can express $S$ as the union of $O(d)$ intervals.
	\end{proof}
	
	\begin{figure}
		\centering
		\includegraphics[scale=1]{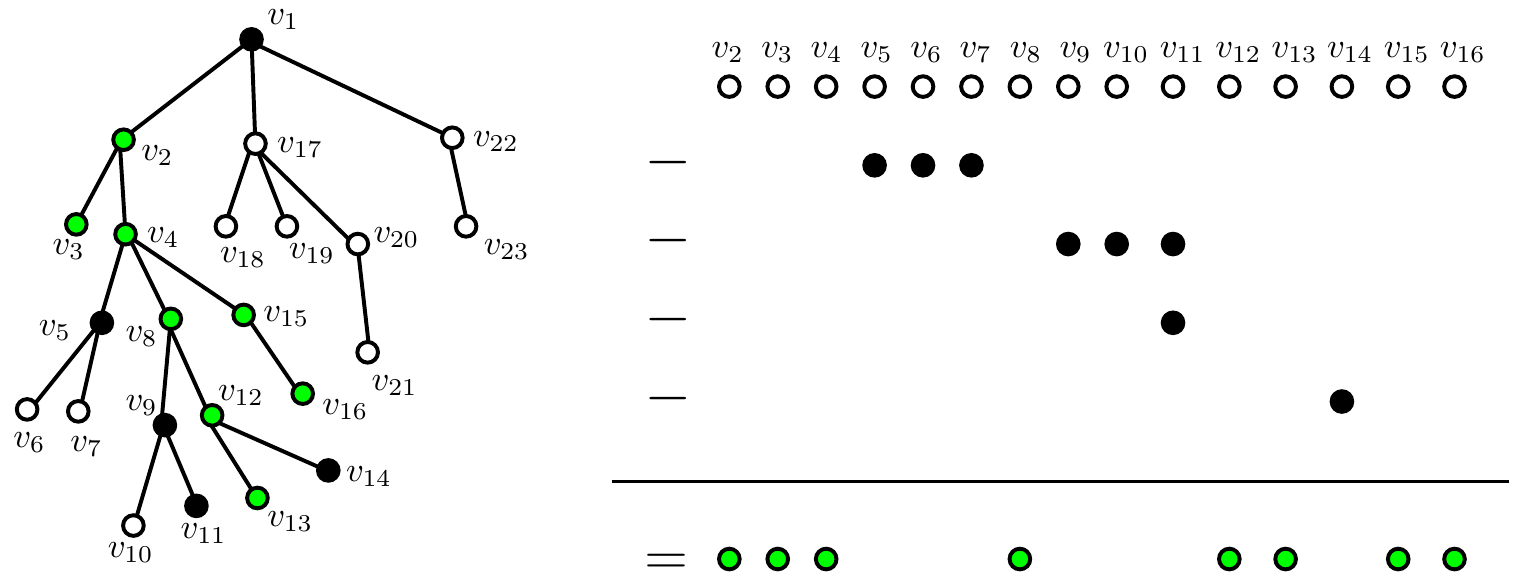}
		\caption{Left: a sample tree $T$. A possible Euler tour of $T$ is $v_1,v_2,\dots,v_{23}$. We delete the vertices $\{v_1,v_5,v_9,v_{11},v_{14}\}$ from $T$, and the connected component containing $v_2$ is marked as green. Right: the green component is a big interval (corresponding to the subtree of $v_2$) subtracting $4$ smaller intervals (subtrees of $v_5,v_9,v_{11}$ and $v_{14}$), thus the union of $\le 5$ intervals.}\label{fig:ET}
	\end{figure}
	
	\subsection{Preprocessing Algorithm}\label{subsec:pre-orac1}
	In the preprocessing algorithm, (for each path $V=U_1,U_2,\dots,U_p$ of the hierarchy tree,) we prune the trees and construct auxiliary data structures $H,E'$ as follows.
	
	\paragraph{Pruning the trees.} Recall that we consider the trees in $\caT=\bigcup_{i=1}^p\caT_{U_{i+1}}(U_i)$, and we have a distance parameter $\rho$. We prune off vertices of large depth in every tree in $\caT$. For every $T\in\caT$ and $v\in V(T)$, if $\dep_T(v)>(2k-1)\rho$, we delete $v$ from $T$. There are two reasons to perform this step:\begin{itemize}
		\item If for some $u\in V$, $\delta(u,v)\le\rho$ and $u,v$ are covered by $T$ (\ie $\dep_T(u)+\dep_T(v)\le (2k-1)\cdot \delta(u,v)$), then $\dep_T(v)\le(2k-1)\rho$, hence the pruning would not affect any distance of $\le\rho$;
		\item After the pruning, every tree in $\caT$ has diameter at most $(4k-2)\rho$.
	\end{itemize}
	
	In the rest of this section, we assume that all trees in $\caT$ are pruned.
	
	\paragraph{The auxiliary DAG $H$.} We list the trees as $\caT=\{T_1,T_2,\dots,T_{|\caT|}\}$, and concatenate their Euler tours as a list $\Lambda=\ET(T_1)\circ\ET(T_2)\circ\dots\circ\ET(T_{|\caT|})$. Recall that $|\caT|=O(n)$ and every vertex appears in $O(h\log^2 n)$ trees, where $h=O(\log n/\log d)$. Therefore $|\Lambda|=O(nh\log^2 n)$ and $\alpha(|\Lambda|)=O(\alpha(n))$. Let $\caI$ be an $O(\alpha(n))$-covering set of $\Lambda$, so every interval of $\Lambda$ can be expressed as the union of $O(\alpha(n))$ intervals in $\caI$. We make two copies $\caI_1,\caI_2$ of $\caI$, two copies $\caT_1,\caT_2$ of $\caT$, and one copy $V_1$ of $V$. For $I\in\caI$, let $I_1,I_2$ be its copies in $\caI_1,\caI_2$ respectively; $T_1,T_2,v_1$ are similarly defined.
	
	We define a DAG $H$ with $V(H)=\caI_1\cup \caT_1\cup V_1\cup \caT_2\cup \caI_2$, and $E(H)$ defined as follows (where $(a\to b)$ denotes a directed edge from $a$ to $b$):
	\begin{enumerate}
		\item Let $I\in\caI$, $T\in \caT$, if there is an edge from some vertex in $I$ to some vertex in $T$ with weight $\le\rho$, then we have edges $(I_1\to T_1)$ and $(T_2\to I_2)$ in $E(H)$;
		\item Let $T\in\caT, u\in V(T)$ (after the pruning), then we have edges $(T_1\to u_1)$ and $(u_1\to T_2)$ in $E(H)$.
	\end{enumerate}
	
	In the query algorithm, we use the graph $H$ to capture the paths only ``involved'' with \emph{unaffected trees}, which are trees that do not intersect $D$ (see \cref{def:affected-tree}). Therefore, we need to remove $\caT^*_1\cup\caT^*_2$ from $H$, where $\caT^*$ is the set of affected trees, and $\caT^*_1,\caT^*_2$ are copies of $\caT^*$ in $\caT_1,\caT_2$ respectively. Suppose we can upper bound the number of affected trees as $|\caT^*|\le K$. We are interested in the following kind of queries on $H$: ``Given $I_1\in\caI_1,I_2\in\caI_2$, can $I_1$ reach $I_2$ in $H-\caT^*_1-\caT^*_2$?'' We claim that, since the depth of $H$ is a constant, such queries can be answered efficiently.
	\begin{lemma}\label{lemma:bounded-depth}
		Let $H=(V,E)$ be a DAG, $V'\subseteq V$, $n=|V|$, $x,y$ be integers and $q=\frac{(x+y)^{x+y+1}}{x^xy^y}\log n$. Suppose every path in $H$ contains at most $x$ vertices in $V'$. We can build a data structure of size $O(qn^2)$ which, given a subset $D\in\binom{V'}{\le y}$ and two vertices $u,v\in V$, in $O(q)$ query time, outputs $1$ if $u$ can reach $v$ in $H-D$ and $0$ otherwise. With high probability over the randomized preprocessing algorithm, the data structure is correct on every query.
	\end{lemma}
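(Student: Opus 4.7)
}
The plan is to use a random-subset (color-coding-like) construction. Let $p=x/(x+y)$ and let $R=\Theta(q)$ be chosen so that the sampling analysis below gives failure probability $n^{-\Omega(y)}$. I would sample $R$ independent random subsets $S_1,S_2,\dots,S_R\subseteq V'$, where each vertex of $V'$ is included in $S_r$ independently with probability $p$. For every $r\in[R]$, precompute the transitive closure of the induced subgraph $H_r:=H[(V\setminus V')\cup S_r]$, stored as an $n\times n$ bit matrix. This gives total space $O(Rn^2)=O(qn^2)$. The preprocessing is randomized, but producing the matrices is deterministic once the random sets are fixed.

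The query algorithm on input $(u,v,D)$ iterates over $r=1,\dots,R$, checks whether $D\cap S_r=\varnothing$, and if so, looks up the precomputed bit at position $(u,v)$ in the transitive closure of $H_r$; it outputs $1$ as soon as some $r$ returns $1$, and outputs $0$ if all $r$ fail. The disjointness test $D\cap S_r=\varnothing$ can be implemented in $O(|D|)=O(y)$ time per $r$ by hashing, so the total query time is $O(Ry)$; to hit the bound $O(q)$ exactly one represents each $S_r$ by the bit ``$v\in S_r$'' indexed by $r$ for each $v\in V'$ and uses word-parallel ORs over the $y$ indicator vectors (or absorbs the $y$ factor into $q$ by noting $y\le x+y$).

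For correctness, I would argue as follows. If $u$ cannot reach $v$ in $H-D$, then certainly $u$ cannot reach $v$ in any $H_r-D$, and the oracle always outputs $0$. Conversely, suppose $u$ reaches $v$ in $H-D$ via some path $P$, and let $S^{\star}=V(P)\cap V'$. By hypothesis $|S^{\star}|\le x$, and $S^{\star}\cap D=\varnothing$. The oracle succeeds on this query as soon as some $r$ satisfies $S^{\star}\subseteq S_r$ and $D\cap S_r=\varnothing$, because then $P$ survives in $H_r$ and the transitive closure bit is $1$. For each fixed $r$, this event has probability
\[
p^{|S^{\star}|}(1-p)^{|D|}\ \ge\ p^x(1-p)^y\ =\ \frac{x^x y^y}{(x+y)^{x+y}}.
\]
Hence the probability that \emph{no} $r$ succeeds is at most
\[
\left(1-\tfrac{x^x y^y}{(x+y)^{x+y}}\right)^R\ \le\ \exp\!\left(-R\cdot\tfrac{x^x y^y}{(x+y)^{x+y}}\right).
\]
Picking $R=c\cdot\tfrac{(x+y)^{x+y+1}}{x^x y^y}\log n$ for a sufficiently large constant $c$, this is at most $n^{-c(x+y)}$.

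Finally, I would take a union bound over all possible queries. There are at most $n^2\binom{n}{\le y}\le n^{y+2}$ triples $(u,v,D)$, so choosing $c$ large enough (depending on any constant we want) makes the overall failure probability at most $n^{-\Omega(1)}$. This means that with high probability over the preprocessing, the data structure is correct on \emph{every} query simultaneously, as required. The main conceptual obstacle is precisely arranging the union bound to beat the $n^{y+2}$ queries while keeping $R$ at the stated value of $q$ — this is what forces the extra factor of $(x+y)$ (which absorbs the union-bound factor $y+2$) in the numerator of $q$; everything else is routine sampling analysis and transitive-closure lookup.
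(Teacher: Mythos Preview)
Your proposal is correct and follows essentially the same approach as the paper: both sample $\Theta(q)$ random subsets of $V'$ with inclusion probability $x/(x+y)$, store a reachability matrix for each induced subgraph $H[(V\setminus V')\cup S_r]$, and answer a query by scanning for some $S_r$ disjoint from $D$ in which $u$ reaches $v$. The only cosmetic difference is that the paper packages the sampling guarantee as the existence of an ``$(x,y)$-family'' (a family hitting every disjoint pair $X\in\binom{V'}{x},\,Y\in\binom{V'}{y}$) and takes the union bound over the $n^{x+y}$ such pairs rather than over the $n^{y+2}$ queries, but the arithmetic and the resulting bounds are the same.
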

	
	The proof uses a clever trick of \cite{DK11}, which was inspired by color-coding \cite{AYZ95}.
	
	Let $\caU=\{1,2,\dots,n\}$, $\caS$ be a family of subsets of $\caU$. We say $\caS$ is an \emph{$(x,y)$-family} if for every $X\in\binom{\caU}{x}$, $Y\in\binom{\caU}{y}$ such that $X\cap Y=\varnothing$, there is a set $S\in\caS$ such that $X\subseteq S$ and $Y\cap S=\varnothing$.
	
	Fix $X,Y$, we randomly sample a subset $S$ of $\caU$ by picking every element w.p.~$\frac{x}{x+y}$. Then the set $S$ satisfies the condition that $X\subset S$ and $Y\cap S=\varnothing$ w.p.~$p=\frac{x^xy^y}{(x+y)^{x+y}}$. By a union bound over all $X,Y$'s, if we sample $O(p^{-1}\log(n^{x+y}))=O\mleft(\frac{(x+y)^{x+y+1}}{x^xy^y}\log n\mright)$ such sets $S$, we obtain an $(x,y)$-family with high probability.
	
	\begin{remark}
		The above construction of $(x, y)$-family can be derandomized by \cite[Theorem 14]{KP20}. In our regime where $x = 2$ and $y = {\rm polylog}(n)$, the randomized construction contains $O(y^3\log n)$ sets, while the deterministic construction contains $O(y^3\log^3 n)$ sets, slightly worse than the randomized construction. Below we will still use the randomized construction.
	\end{remark}
	
	\begin{proof}[Proof of \cref{lemma:bounded-depth}]
		Let $\caS$ be an $(x,y)$-family of $V'$. For every $S\in\caS$, we store a reachability matrix of the induced subgraph $H[S\cup (V\setminus V')]$. On a query $(u,v,D)$, the algorithm outputs $1$ if and only if there is some $S\in\caS$ such that $S\cap D=\varnothing$ and $u$ can reach $v$ in $H[S\cup(V\setminus V')]$.
		
		The correctness of the algorithm follows directly from the definition of $(x,y)$-family. If $u$ does not reach $v$ in $H-D$, then for any $S$ such that $S\cap D=\varnothing$, $u$ does not reach $v$ in $H[S\cup (V\setminus V')]$. If $u$ can reach $v$ in $H-D$, let $P$ be the vertices in some specific path from $u$ to $v$ in $H-D$, then $|P\cap V'|\le x$ by hypothesis. By the definition of $(x,y)$-family, there is a set $S_0\in\caS$ such that $P\cap V'\subseteq S_0$ and $D\cap S_0=\varnothing$, and the algorithm detects that $u$ can reach $v$ in $G[S_0\cup(V\setminus V')]$.
	\end{proof}
	
	Let $x=2,y=2K,V'=\caT_1\cup\caT_2$ and $q=\frac{(x+y)^{x+y+1}}{x^xy^y}\log |\Lambda|=O(K^3\log n)$, \cref{lemma:bounded-depth} implies that we can maintain $H$ in $O(q\cdot |\Lambda|^2)$ space and answer the above queries in $O(q)$ time.
	
	\paragraph{The auxiliary table $E'$.} Besides the main structure $H$, we also need to store a table $E'$, specified as a subset of $\Lambda\times\Lambda$. For every $u,v\in V$ such that $u=v$ or $(u,v)$ is an edge with weight $\le\rho$ in $E$, for every occurrences $u',v'$ of $u,v$ in $\Lambda$ respectively, there is an item $(u',v')\in E'$. There are no other items in $E'$.
	
	Since every vertex occurs $O(h\log^2 n)$ times in $\Lambda$, we have $|E'|=O(mh^2\log^4 n)$. We store $E'$ by a 2D range search structure \cite{ABR00} of size $O(mh^2\log^5 n)$ such that given intervals $I_1,I_2$ of $\Lambda$, it can be queried if $E'\cap (I_1\times I_2)=\varnothing$ in $O(\log \log n)$ time.
	
	\subsection{Query Algorithm}
	Suppose we are given $u,v\in V$, $D\in\binom{V}{\le d}$ and $\rho$. As described earlier, we have already found a path $V=U_1,U_2,\dots,U_p$ in the hierarchy tree where every vertex in $D$ has low pseudo-degree in every tree.
	
	\paragraph{Identifying affected trees.} We first identify the \emph{affected trees} in $\caT$. After the removal of $D$, these trees split into several subtrees, some of which are called \emph{affected subtrees}, and the others are \emph{ignored subtrees}. A precise definition is as follows:
	\begin{definition}\label{def:affected-tree}
		A tree $T\in\caT$ is an \emph{affected tree} if $V(T)\cap D\ne\varnothing$. For each affected tree $T\in\caT$, the removal of $D$ splits $T$ into several subtrees $T^{(1)},T^{(2)},\dots,T^{(q)}$. A subtree $T^{(i)}$ is an \emph{affected subtree} if it contains some trunk vertex of $T$; otherwise it is an \emph{ignored subtree}. A vertex $v\in V$ is an \emph{affected vertex} if it is in some affected subtree; otherwise it is an \emph{unaffected vertex}.
	\end{definition}
	\begin{remark}
		It is possible that an unaffected vertex belongs to some ignored subtree.
	\end{remark}
	
	\begin{lemma}[The Number of Affected (Sub)Trees]
		There are at most $O(dh\log^2 n)$ affected trees and $O(d^{c+2}h\log^4 n)$ affected subtrees.
	\end{lemma}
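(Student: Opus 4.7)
The plan is to prove the two bounds separately, with the affected-tree bound coming from the tree-cover overlap property and the affected-subtree bound reducing to a per-tree pseudo-degree count.

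\emph{Bounding the affected trees.} By \cref{def:sr-tree-cover} \ref{item:treecover-c}), every vertex of $V$ lies in at most $2e\ln^2 n$ trees of one source-restricted tree cover. Since $\caT = \bigcup_{i=1}^{p} \caT_{U_{i+1}}(U_i)$ is the union of $p \le h$ such covers, every vertex lies in at most $2eh\ln^2 n = O(h\log^2 n)$ trees of $\caT$. A tree is affected iff it contains at least one vertex of $D$, so the number of affected trees is at most $|D| \cdot O(h\log^2 n) = O(dh\log^2 n)$.

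\emph{Bounding the affected subtrees.} The main step is the following structural observation: if $T \in \caT$ is an affected tree and $T^{(i)}$ is a subtree of $T-D$ whose new root is $c$ (either the original root of $T$ if it survives, or a child in $T$ of some $f \in D \cap V(T)$), then $T^{(i)}$ is an affected subtree iff $c$ itself is a trunk vertex of $T$. The ``if'' direction is immediate. For the ``only if'' direction, suppose $T^{(i)}$ contains a trunk vertex $t$. Either $t \in S$ (where $S$ is the source set of the cover), in which case the $T$-path from $t$ to the root of $T$ (also in $S$) passes through $c$, making $c$ a trunk vertex; or $t \notin S$ lies on a $u$-$w$ path in $T$ with $u, w \in S$, and since $\Trunk(T)$ is a connected subtree containing the root, and $t \in \Trunk(T)$ is a descendant of $c$, the ancestor $c$ of $t$ also lies in $\Trunk(T)$.

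\emph{Combining.} Given this observation, the number of affected subtrees in $T$ is at most $1$ (for the root's component) plus the number of trunk children in $T$ of vertices in $D \cap V(T)$, which is at most $\sum_{f \in D \cap V(T)} \pdeg_T(f)$. By \cref{lemma:good-path}, each $\pdeg_T(f) \le s = O(d^{c+1}\log^2 n)$ for every $f \in D$ and every $T \in \caT$. Hence the total affected subtree count is bounded by
\[
\sum_{T \text{ affected}} 1 + \sum_{f \in D} s \cdot \bigl|\{T \in \caT : f \in V(T)\}\bigr|
\le O(dh\log^2 n) + d \cdot s \cdot O(h \log^2 n) = O(d^{c+2} h \log^4 n),
\]
as claimed. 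The only nontrivial step is the structural observation that a component of $T-D$ is affected iff its new root is a trunk vertex; once that is in place the rest is a direct double counting, using property (c) of the tree cover and the pseudo-degree bound from \cref{lemma:good-path}. I expect the trunk-connectivity argument (the ``only if'' direction above) to be the main subtlety to write out carefully.
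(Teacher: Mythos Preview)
Your proof is correct and follows essentially the same approach as the paper. The paper's proof is extremely terse---it simply asserts that since every failure has pseudo-degree $\le s$ in every tree of $\caT$, the affected trees split into at most $O(d^{c+2}h\log^4 n)$ affected subtrees---whereas you have carefully justified the underlying structural observation (a component of $T-D$ is affected iff its new root is a trunk vertex, using that $\Trunk(T)$ is ancestor-closed) that the paper leaves implicit.
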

	\begin{proof}
		Recall that $p=O(h)$ is the depth of the high-degree hierarchy. Every vertex in $D$ is in at most $p\cdot 2e\ln^2 n$ trees, so at most $O(dh\log^2 n)$ trees can be affected. Every vertex in $D$ has pseudo-degree at most $s=O(d^{c+1}\log^2 n)$ in every tree, so these affected trees split into at most $O(d^{c+2}h\log^4 n)$ affected subtrees (and possibly many ignored subtrees).
	\end{proof}
	
	\paragraph{The graph $R$.} During the query algorithm, we construct an unweighted graph $R$ whose vertex set is $V(R)=\{\{u\},\{v\}\}\cup\caT'$, where $\caT'$ is the set of affected subtrees. We output \textsc{Yes} if and only if $\{u\}$ and $\{v\}$ are connected in $R$.
	
	For every $X,Y\in V(R)$, we use the following procedure to determine if there is an edge between $X$ and $Y$ in $E(R)$. We consider $X,Y$ as subsets of $V$. If $X$ is an affected subtree which belongs to the affected tree $T$, then by \cref{lemma:ET}, we can write $X$ as the union of $O(d)$ intervals of $\ET(T)$. By \cref{thm:covering-set}, $X$ is the union of $O(d\cdot\alpha(n))$ intervals in $\caI$. If $X=\{u\}$ or $X=\{v\}$ then $X$ is trivially an interval in $\caI$. Similarly, we can also represent $Y$ as the union of $O(d\cdot\alpha(n))$ intervals in $\caI$. If there are two intervals $I^1,I^2\in\caI$, where $I^1$ is in the representation of $X$ and $I^2$ is in the representation of $Y$, such that either $I_1^1$ can reach $I_2^2$ in $H-\caT^*_1-\caT^*_2$ or $E'\cap(I^1\times I^2)\ne\varnothing$, then we insert an edge in $E(R)$ between $X$ and $Y$. Here $\caT^*_1,\caT^*_2$ denote the copies of affected trees in $\caT_1,\caT_2$ of $V(H)$ respectively.
	
	The time complexity for the query algorithm is dominated by constructing $E(R)$. Since $|V(R)|=O(d^{c+2}h\log^4n)$, and there are at most $K=O(dh\log^2 n)$ affected trees, the algorithm takes $O((d^{c+2}h\log^4n)^2(d\cdot\alpha(n))^2(K^3\log n+\log\log n))=O(d^{2c+9}\alpha^2(n)h^5\log^{15}n)$ time.
	
	\paragraph{Justification.} We justify the construction of the graph $R$. For $X,Y\in V(R)$, there should be an edge between $X$ and $Y$ if there is an \emph{unaffected path} of length at most $\rho$ connecting them, defined as follows.
	\begin{definition}\label{def:unaffected-path}
		For $X,Y\subseteq V$, an \emph{unaffected path} in $G-D$ connecting $X$ and $Y$ is a path $(v_0,\dots,v_{\ell})$ $(\ell\ge 0)$ in $G-D$ such that $v_0\in X,v_{\ell}\in Y$, and $v_1,v_2,\dots,v_{\ell-1}$ are unaffected vertices.
	\end{definition}
	
	The following theorem justifies the definition of $E(R)$.
	
	\begin{theorem}\label{thm:E(R)}
		For $X,Y\in V(R)$:\begin{enumerate}[a)]
			\item If there is an unaffected path of length $\le\rho$ connecting $X$ and $Y$, then $(X,Y)\in E(R)$.\label{item:E(R)-a}
			\item If $(X,Y)\in E(R)$, then there is a path in $G-D$, which starts at some vertex in $X$, ends at some vertex in $Y$, and has length at most $(8k-2)\rho$.\label{item:E(R)-b}
		\end{enumerate}
	\end{theorem}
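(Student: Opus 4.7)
The plan is to prove the two directions separately; part (b) is essentially bookkeeping on the construction of $H$ and $E'$, while part (a) requires a careful use of the ``trunk vertex'' machinery to guarantee that the trees we traverse are unaffected.

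For part (b), I unfold the definition of $E(R)$. If the edge comes from $E'\cap(I^1\times I^2)\ne\varnothing$, the witness vertices $u,v$ satisfy $u\in X$ and $v\in Y$, and since every element of $V(R)$ is either $\{u\},\{v\}$ (a non-failed source/target) or an affected subtree (a component of $T-D$ and hence $D$-free), we have $u,v\notin D$, yielding a $G-D$ path of length $\le\rho$. If the edge comes instead from reachability $I^1_1\leadsto I^2_2$ in $H-\caT^*_1-\caT^*_2$, then the five-layer structure $\caI_1\to\caT_1\to V_1\to\caT_2\to\caI_2$ forces the path to be of the form $I^1_1\to T_1\to w_1\to T'_2\to I^2_2$ with $T,T'$ unaffected. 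Unfolding the two kinds of $H$-edges yields a path $x_1\to y_1\stackrel{T}{\leadsto}w\stackrel{T'}{\leadsto}y_2\to x_2$ in $G$, where the two endpoint edges have weight $\le\rho$ and each tree-path has length $\le(4k-2)\rho$ (since pruning caps depths at $(2k-1)\rho$). Because $T,T'$ contain no failed vertices and $x_1,x_2$ lie in $X,Y$ (and hence not in $D$), the whole path lies in $G-D$ with total length $\le(8k-2)\rho$.

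For part (a), let $P=(v_0,v_1,\dots,v_\ell)$ be the unaffected path. When $\ell\le 1$ the pair $(v_0,v_\ell)$ either satisfies $v_0=v_\ell$ or is an edge of weight $\le\rho$; choosing intervals $I^1,I^2\in\caI$ in the representations of $X,Y$ containing occurrences of $v_0,v_\ell$ yields $E'\cap(I^1\times I^2)\ne\varnothing$, so $(X,Y)\in E(R)$. For $\ell\ge 2$, I will pick $v_j$ to be the inner vertex of highest level, set $j^*=l(v_j)$, and apply \cref{cor:T-ell(xy)} to the tree cover $\caT_{U_{j^*+1}}(U_{j^*})$: since all inner vertices have level $\le j^*$, the subpaths $P[v_1,v_j]$ and $P[v_j,v_{\ell-1}]$ live inside $G-U_{j^*+1}$ and have length $\le\rho$, so the corollary yields trees $T,T'$ covering the two pairs with depth sums $\le(2k-1)\rho$, placing $v_j,v_1\in V(T)$ and $v_j,v_{\ell-1}\in V(T')$ (all surviving pruning). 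The required path in $H$ is then $I^1_1\to T_1\to(v_j)_1\to T'_2\to I^2_2$, using the short edges $(v_0,v_1),(v_{\ell-1},v_\ell)$ to witness the outer $H$-edges and $v_j\in V(T)\cap V(T')$ to witness the inner ones.

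The main obstacle is showing that the trees $T$ and $T'$ produced by the tree cover are unaffected, so that the path above really lies in $H-\caT^*_1-\caT^*_2$. This is exactly why $v_j$ must be chosen at the \emph{highest} level among the inner vertices: because $v_j\in U_{j^*}\setminus U_{j^*+1}$ lies in the source set of the cover $\caT_{U_{j^*+1}}(U_{j^*})$, it is automatically a trunk vertex of every tree containing it. If $T$ (or $T'$) contained a failed vertex, the tree would be affected, and the component of $v_j$ in the decomposition of $T-D$ would contain the trunk vertex $v_j$, hence be an affected subtree, making $v_j$ itself an affected vertex -- contradicting that $v_j$ is an unaffected inner vertex of $P$. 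Once this unaffectedness is established, the two parts conclude as described.
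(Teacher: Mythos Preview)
Your proposal is correct and follows essentially the same approach as the paper's proof: for part (a) you pick the highest-level inner vertex $v_j$, use \cref{cor:T-ell(xy)} to find trees $T,T'$ in $\caT_{U_{j^*+1}}(U_{j^*})$ covering $(v_j,v_1)$ and $(v_j,v_{\ell-1})$, and argue that $v_j$ being in the source set forces it to be a trunk vertex, which would make $v_j$ affected if either tree were affected; for part (b) you unfold the two cases in the definition of $E(R)$ and use the pruning bound to get the $(8k-2)\rho$ length. This is exactly the paper's argument, with your contradiction for unaffectedness stated as the contrapositive of the paper's phrasing.
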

	\begin{proof}
		\textbf{Proof of \ref{item:E(R)-a}).} Let the path be $(v_0,v_1,\dots,v_{\ell})$ where $v_0\in I^1, v_{\ell}\in I^2$, and $I^1,I^2$ are intervals in the representation of $X,Y$ respectively. If $\ell\le 1$, then we have $(I^1\times I^2)\cap E'\ne\varnothing$. If $\ell>1$, let $v_j$ be the vertex in $v_1,v_2,\dots,v_{\ell-1}$ with the highest level, and $i=l(v_j)$ be its level. Then the path does not intersect $U_{i+1}$. Let $T^1=T_i(v_j,v_1)$ be the tree in $\caT_{U_{i+1}}(U_i)$ which approximates the distance $\delta_{G-U_{i+1}}(v_j,v_1)$, then $\dep_{T^1}(v_1)+\dep_{T^1}(v_j)\le(2k-1)\rho$. Therefore $v_1$ and $v_j$ are not pruned in $T^1$. If $T^1$ is an affected tree, then since $v_j$ is an unaffected vertex, it must lie in some ignored subtree of $T^1$, but this contradicts the fact that $v_j\in\Trunk(T^1)$. Therefore $T^1$ is not an affected tree. Similarly let $T^2=T_i(v_j,v_{\ell-1})$, then $\dep_{T^2}(v_{\ell-1})+\dep_{T^2}(v_j)\le(2k-1)\rho$, and $T^2$ is not an affected tree. We conclude that there is a path $I^1_1\to T^1_1\to (v_j)_1\to T^2_2\to I^2_2$ in $H-\caT^*_1-\caT^*_2$.
		
		\begin{figure}
			\centering
			\begin{minipage}[b]{0.49\textwidth}
				\centering
				\includegraphics[scale=1]{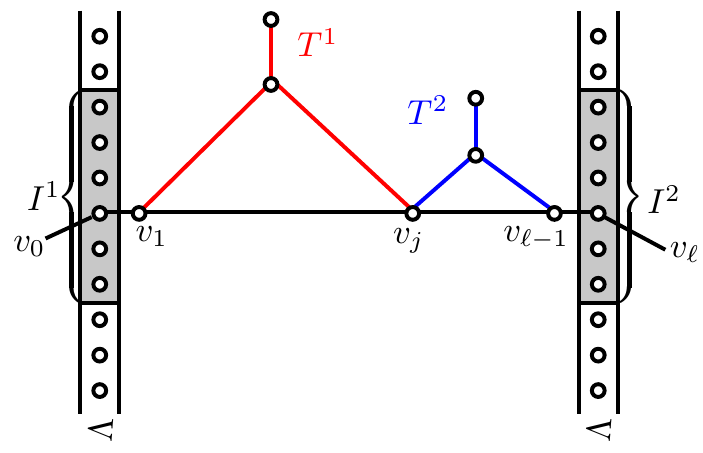}
				\caption{Proof of \ref{item:E(R)-a}).}
			\end{minipage}
			\begin{minipage}[b]{0.49\textwidth}
				\centering
				\includegraphics[scale=1]{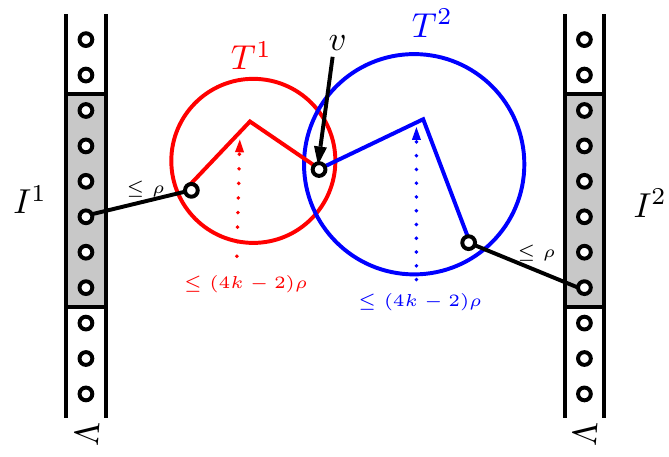}
				\caption{Proof of \ref{item:E(R)-b}).}
			\end{minipage}
		\end{figure}
		
		\textbf{Proof of \ref{item:E(R)-b}).} Suppose that the interval $I^1$ is in the representation of $X$, the interval $I^2$ is in the representation of $Y$, and $I^1,I^2$ contributes to the edge $(X,Y)$. If $(I^1\times I^2)\cap E'\ne\varnothing$, then either $I^1\cap I^2\ne\varnothing$ or there is an edge $(u,v)\in E$ of length $\le\rho$ such that $u\in I^1$ and $v\in I^2$. In either case the lemma follows. On the other hand, if $I^1_1$ can reach $I^2_2$ in $H-\caT^*_1-\caT^*_2$, and the corresponding path in $H-\caT^*_1-\caT^*_2$ is $I^1_1\to T^1_1\to v_1\to T^2_2\to I^2_2$, then $T^1$ and $T^2$ are unaffected trees. Consider the following path $p$, which starts from $I^1$, goes to an adjacent vertex in $T^1$ by an edge of weight $\le\rho$, walks along $T^1$ to reach $v$, walks along $T^2$ to reach a vertex adjacent to $I^2$ by an edge of weight $\le\rho$, then goes to $I^2$. Since every tree has diameter at most $(4k-2)\rho$, the length of $p$ is at most $(8k-2)\rho$. Since $T^1$ and $T^2$ are unaffected trees, $p$ avoids $D$, and the lemma follows.
	\end{proof}
	
	Given \cref{thm:E(R)}, it is easy to prove that our algorithm achieves a stretch of $O(k\cdot|V(R)|)=O(d^{c+2}h\log^5 n)$.
	
	\begin{theorem}[Correctness]\label{thm:correct1}
		There is some $A=O(d^{c+2}h\log^5n)$ such that for $u,v\in V$, $D\in\binom{V}{\le d}$:\begin{enumerate}[a)]
			\item If $\delta_{G-D}(u,v)\le\rho$, then the algorithm outputs \textsc{Yes}.\label{item:c1-a}
			\item If the algorithm outputs \textsc{Yes}, then $\delta_{G-D}(u,v)\le\rho\cdot A$.\label{item:c1-b}
		\end{enumerate}
	\end{theorem}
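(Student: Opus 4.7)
The plan is to deduce both directions of Theorem 5.10 directly from Theorem 5.9, viewing $R$ as a coarse abstraction in which each node is either an endpoint or an affected subtree, and edges encode short connections that avoid $D$. The two directions split neatly: part (a) uses Theorem 5.9(a) on a decomposition of the true shortest path according to affected vertices, while part (b) uses Theorem 5.9(b) on an arbitrary path in $R$ and stitches the pieces together inside affected subtrees.

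For part (a), let $P = \pi_{G-D}(u,v)$ with $|P| \le \rho$. Scanning $P$ from $u$ to $v$, list the affected vertices it meets as $a_1, a_2, \ldots, a_k$ (possibly $k=0$), and let $X_j \in V(R)$ be the (unique) affected subtree containing $a_j$. The consecutive subpaths $P[u,a_1], P[a_1,a_2], \ldots, P[a_k,v]$ have all strictly interior vertices lying strictly between consecutive affected vertices and therefore unaffected, so by Definition 5.7 they are unaffected paths of length $\le \rho$. Theorem 5.9(a) then supplies edges $(\{u\},X_1),(X_1,X_2),\ldots,(X_k,\{v\})$ in $R$, so $\{u\}$ reaches $\{v\}$ and the algorithm outputs YES. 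The corner case where $u$ or $v$ is itself affected (so it lies in some $X_0 \in V(R)$) is covered by the length-$0$ unaffected path allowed in Definition 5.7.

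For part (b), suppose $\{u\}=Z_0,Z_1,\ldots,Z_m=\{v\}$ is a path in $R$. Theorem 5.9(b) yields, for each edge $(Z_i,Z_{i+1})$, a path $Q_i$ in $G-D$ of length at most $(8k-2)\rho$ from some $x_i \in Z_i$ to some $y_i \in Z_{i+1}$. To form a single $u$-$v$ walk in $G-D$, I bridge the gap between $y_i$ and $x_{i+1}$ \emph{inside} the common node $Z_{i+1}$: every tree in $\caT$ was pruned to depth $\le (2k-1)\rho$, hence has diameter $\le (4k-2)\rho$, so the affected subtree $Z_{i+1}$ (a subtree of the corresponding $T - D$) contains a path from $y_i$ to $x_{i+1}$ of length $\le (4k-2)\rho$ that avoids $D$. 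Concatenating all $Q_i$'s with these in-subtree bridges produces a $u$-$v$ walk in $G-D$ of length at most $m \cdot ((8k-2) + (4k-2))\rho = O(mk\rho)$.

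Since $m \le |V(R)| = O(d^{c+2}h\log^4 n)$ and $k = \ln n$, the resulting bound is $O(d^{c+2}h\log^5 n)\cdot \rho$, giving the desired $A = O(d^{c+2}h\log^5 n)$. The heavy lifting sits in Theorem 5.9, so there is no real obstacle here; the only mildly delicate point is the in-subtree bridging in (b), which is where the pruning of trees to depth $(2k-1)\rho$ earns its keep by bounding the extra cost incurred at each vertex of the path in $R$.
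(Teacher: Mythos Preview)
Your proof is correct and follows essentially the same approach as the paper: part (a) decomposes the shortest path at its affected vertices and invokes Theorem~5.9(a) on each piece, and part (b) stitches together the paths from Theorem~5.9(b) using in-subtree bridges of length at most $(4k-2)\rho$, then bounds the number of edges by $|V(R)|$. One small inaccuracy: the affected subtree containing a given affected vertex $a_j$ need not be unique (a vertex can lie in several trees of $\caT$, more than one of which may be affected), but the argument goes through unchanged if you simply pick \emph{any} such subtree, exactly as the paper does.
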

	\begin{proof}
		\textbf{Proof of \ref{item:c1-a}).} Suppose $p:(u=w_0,w_1,\dots,w_{\ell-1},w_{\ell}=v)$ is a path from $u$ to $v$ in $G-D$ with length at most $\rho$. For $1\le i<\ell$, let $t_i$ be any affected subtree that $w_i$ lies in; if $w_i$ is an unaffected vertex then set $t_i=\varnothing$. Let $a(i)$ $(i\ge 1)$ be the $i$-th index such that $t_{a(i)}\ne\varnothing$, and $q$ be the maximum index such that $a(q)$ is defined. (For example, $t_j=\varnothing$ for every $1\le j<a(1)$ or $a(q)<j<\ell$, but $t_{a(1)}\ne\varnothing,t_{a(q)}\ne\varnothing$.) Artificially we define $t_0=\{u\},a(0)=0,t_{\ell}=\{v\}$ and $a(q+1)=\ell$. Then for every $0\le i\le q$, we have an unaffected path $(w_{a(i)},w_{a(i)+1},\dots,w_{a(i+1)})$ of length $\le\rho$ connecting $t_{a(i)}$ and $t_{a(i+1)}$, thus $(t_{a(i)},t_{a(i+1)})\in E(R)$ by \cref{thm:E(R)} \ref{item:E(R)-a}). We conclude that $\{u\}$ and $\{v\}$ are connected in $R$. Therefore the algorithm returns \textsc{Yes}.
		
		\textbf{Proof of \ref{item:c1-b}).} Suppose the algorithm returns \textsc{Yes}. Then there is a simple path $\{u\}=t_0\to t_1\to\dots\to t_{\ell}=\{v\}$ in $R$. By \cref{thm:E(R)} \ref{item:E(R)-b}), there are vertices $u=u_0,v_1,u_1,\dots,v_{\ell-1},u_{\ell-1},v_{\ell}=v$ such that:\begin{itemize}
			\item For every $1\le i<\ell$, $v_i,u_i\in t_i$.
			\item For every $0\le i<\ell$, $\delta_{G-D}(u_i,v_{i+1})\le(8k-2)\rho$.
		\end{itemize}
		Since each tree has diameter at most $(4k-2)\rho$, we can add that:\begin{itemize}
			\item For every $1\le i<\ell$, $\delta_{G-D}(v_i,u_i)\le (4k-2)\rho$.
		\end{itemize}
		Therefore $\delta_{G-D}(u,v)\le \ell\cdot(8k-2)\rho+(\ell-1)\cdot(4k-2)\rho=\left((12k-4)\ell-4k+2\right)\rho$. Since $\ell\le|V(R)|=O(d^{c+2}h\log^4n)$, we have $\delta_{G-D}(u,v)\le O(d^{c+2}h\log^5n)\cdot\rho$.
	\end{proof}

	\begin{remark}
		By investigating the proofs of \cref{thm:E(R)} \ref{item:E(R)-b}) and \cref{thm:correct1} \ref{item:c1-b}), we can retrieve a path from $u$ to $v$ in $G-D$ of length $O(d^{c+2}h\log^5 n)\cdot \rho$ in $O(\ell)$ additional time, where $\ell$ is the number of nodes in the retrieved path.
	\end{remark}

	
	
	The space complexity of our oracle is dominated by the $O(n^{1/c}q|\Lambda|^2)$ term, therefore our VSDO has
	\[\SpaceQueryStretch{O(n^{2+1/c}d^3\log^{16}n\log(nW)/\log^5d)}{O(d^{2c+9}\alpha^2(n)\log^{20}n\log\log(nW)/\log^5d)}{O(d^{c+2}\log^6n/\log d).}\]

	\section*{Acknowledgments}
	We are grateful to anonymous reviewers for helpful comments, bringing \cite{RTZ05} to our attention, and pointing out the recent work \cite{KP20} that allows us to derandomize the oracle in \cref{sec:orac1}. We would like to thank Thatchaphol Saranurak for providing an early manuscript of \cite{BS19}, and Zhijun Zhang for helpful comments on a draft version of this paper.

	\bibliography{article}

\appendix
	
	\section{Proof of \cref{thm:sr-tree-cover}}
	\label{sec:proof-tree-cover}
	We start with a randomized construction. We construct a sequence of nested subsets $S=A_0\supseteq A_1\supseteq\dots\supseteq A_k=\emptyset$, where each $A_i$ ($1\le i<k$) is constructed by independently sampling each vertex in $A_{i-1}$ w.p.~$n^{-1/k}$. Let $w\in S$, then there is some $0\le i<k$ such that $w\in A_i\setminus A_{i+1}$. We define a \emph{cluster} $C(w)$ around $w$ as follows:
	\[C(w)=\mleft\{v\in V:\delta(v,w)<\delta(v,A_{i+1})\mright\}.\]
	That is, if $v$ is closer to $w$ than to every vertex in $A_{i+1}$, then $v\in C(w)$. Let $T(w)$ be the shortest path tree rooted at $w$ spanning $C(w)$. The tree cover is $\caT(S)=\{T(w):w\in S\}$.
	
	It is easy to see that the tree $T(w)$ only contains vertices in $C(w)$. Actually, let $v\in C(w)$, $v'$ be a vertex on the shortest path from $w$ to $v$, then $v'\in C(w)$, since
	\begin{align*}
	\delta(v',w)=&~\delta(v,w)-\delta(v,v')\\
	<&~\delta(v,A_{i+1})-\delta(v,v')\\
	\le&~\delta(v',A_{i+1}).
	\end{align*}
	
	For every vertex $v\in V$, we also define a \emph{bunch} $B(v)$ as follows. For $w\in A_i\setminus A_{i+1}$, if $\delta(v,w)<\delta(v,A_{i+1})$, then $w$ is in the bunch $B(v)$. (For any $v$ and $i$, let the vertex in $A_i$ closest to $v$ be $u$, then $u$ must be in $A_j\setminus A_{j+1}$ for some $j\geq i$, so $u\in B(v)$.) It is easy to check that
	\[B(v)=\{w\in S:v\in C(w)\}.\]
	
	Now we derandomize the construction of $A_i$ and justify \cref{def:sr-tree-cover} \ref{item:treecover-c}). (That is, every vertex is in $\le kn^{1/k}(\ln n+1)$ trees.) For every vertex $v\in V$, since $v$ is only in the trees rooted in $B(v)$, it suffices to prove that $|B(v)|\le kn^{1/k}(\ln n+1)$. Suppose we have constructed $A_i$ and want to construct $A_{i+1}$ now. For $v\in V$, let $N_{i+1}(v)$ be the set of the $n^{1/k}(\ln n+1)$ closest vertices to $v$ in $A_i$. By \cite[Lemma 3.6]{TZ05}, a hitting set $A_{i+1}$ of the family $\{N_{i+1}(v):v\in V\}$ can be found in polynomial time with $|A_{i+1}|\le n^{-1/k}|A_i|$, and this finishes the construction of $A_{i+1}$. For each vertex $v\in V$ and level $i$, since $A_{i+1}\cap N_{i+1}(v)\ne\varnothing$, we have that $|B(v)\cap (A_i\setminus A_{i+1})|\le n^{1/k}(\ln n+1)$. It follows that $|B(v)|\le kn^{1/k}(\ln n+1)$.
	
	It remains to justify \cref{def:sr-tree-cover} \ref{item:treecover-b}). That is, for every $u\in S$ and $v\in V$, there is some $w\in S$ such that $u,v \in V(T(w))$, and $\dep_{T(w)}(u) + \dep_{T(w)}(v) \le (2k-1)\delta(u, v)$.
	\begin{itemize}
		\item If $u\in B(v)$, then we can pick $w=u$, and $\dep_{T(w)}(u)+\dep_{T(w)}(v)=\delta(u,v)$.
		\item Otherwise, assume $u\in A_{i_0}\setminus A_{i_0+1}$, and let $w_1$ be the vertex in $A_{i_0+1}$ closest to $v$. Then $\delta(w_1,v)\le\delta(u,v)$, thus $\delta(w_1,u)\le 2\delta(u,v)$. We also have that $w_1\in B(v)$, \ie $v\in C(w_1)$. If $w_1\in B(u)$, then we can pick $w=w_1$, and $\dep_{T(w)}(u)+\dep_{T(w)}(v)\le 3\delta(u,v)$.
		\item Otherwise ($w_1\not\in B(u)$), assume $w_1\in A_{i_1}\setminus A_{i_1+1}$, and let $w_2$ be the vertex in $A_{i_1+1}$ closest to $u$. Then $\delta(w_2,u)\le\delta(w_1,u)\le 2\delta(u,v)$, thus $\delta(w_2,v)\le 3\delta(u,v)$. We also have that $w_2\in B(u)$, \ie $u\in C(w_2)$. If $w_2\in B(v)$, then we can pick $w=w_2$, and $\dep_{T(w)}(u)+\dep_{T(w)}(v)\le 5\delta(u,v)$.
		\item Otherwise ($w_2\not\in B(v)$) ...
		\item Repeat this procedure until we find a tree $T(w)$ containing both $u$ and $v$.
	\end{itemize}
	The levels $i_0,i_1,\dots$ are strictly increasing, so we reach level $k-1$ in $O(k)$ time (if we did not terminate before). For every $v\in V$, we have $A_{k-1}\subseteq B(v)$, so the procedure indeed terminates in $O(k)$ time. It is easy to see that the stretch is at most $2k-1$.
	
	\section{Additional Figures and Tables}\label{apd:tables}
	\begin{table}[H]
	\centering
	\begin{tabular}[b]{|l|p{6.5cm}|p{6cm}|}
	\hline
	Notation & Meaning & Remarks\\
	\hline
	$\circ$ & path/sequence concatenation operator & \\
	\hline
	$w_H(u,v)$ & the weight of edge $(u,v)$ in $H$ & \multirow{4}{*}{\begin{minipage}{6cm}We omit $H$ when $H=G$ is the input graph.\end{minipage}}\\
	\cline{1-2}
	$\delta_H(u,v)$ & the distance between $u$ and $v$ in $H$ & \\
	\cline{1-2}
	$\pi_H(u,v)$ & the shortest $u$-$v$ path in $H$ & \\
	\cline{1-2}
	$\delta_H(u,S)$ & $\min\{\delta_H(u,v):v\in S\}$ & \\
	\hline
	$G[S]$ & the subgraph of $G$ induced by $S$ & $S\subseteq V$.\\
	\hline
	$P[u,v]$ & the portion between $u$ and $v$ of path $P$ & \multirow{4}{*}{\begin{minipage}{6cm}Assume $P=(x_0,x_1,\dots,x_{\ell-1},x_{\ell})$ where $x_0=u,x_{\ell}=v$; These notations sometimes emphasize the \emph{direction} of the path.\end{minipage}}\\
	$P[u,v)$ & $P[u,x_{\ell-1}]$ &\\
	$P(u,v]$ & $P[x_1,v]$ &\\
	$P(u,v)$ & $P[x_1,x_{\ell-1}]$ &\\
	\hline
	$l(v)$ & the level of $v$, or the largest $i$ such that $v\in U_i$ & \\
	\hline
	$G_{\ell}$ & the subgraph of $G$ induced by vertices with level $\le \ell$ & \\
	\hline
	$\caP_{\ell}(x,y)$ & the $x$-$y$ path in $\caT_{U_{\ell+1}}(U_{\ell})$ guaranteed by \cref{cor:T-ell(xy)} & \multirow{2}{*}{\begin{minipage}{6cm}$x\in U_{\ell}\setminus U_{\ell+1},y\in V\setminus U_{\ell+1}$.\end{minipage}}\\
	\cline{1-2}
	$T_{\ell}(x,y)$ & the tree in $\caT_{U_{\ell+1}}(U_{\ell})$ that contains $\caP_{\ell}(x,y)$ & \\
	\hline
	\end{tabular}
	\caption{Notation in this paper}\label{fig:def}
	\end{table}

	\begin{sidewaystable}
		\centering
		\begin{tabular}[b]{|c|c|c|c|c|c|c|c|c|c|c|c|c|c|c|}
			\hline
			failure & \# fault & size & query time & stretch & ref & remarks\\
			\hline
			edge & $1$ & $O(n^2\log n)$ & $O(\log n)$ & $1$ & \cite{DT02} & directed\\
			\hline
			edge & $1$ & $O(n^2\log n)$ & $O(1)$ & $1$ & \cite{CR02} & directed\\
			\hline
			vertex & $1$ & $O(n^2\log n)$ & $O(1)$ & $1$ & \cite{DTCR08, BK08, BK09} & directed\\
			\hline
			vertex & $1$ & $O(n^2)$ & $O(1)$ & $1$ & \cite{DZ17} & directed\\
			\hline
			vertex & $1$ & $O(k^5\epsilon^{-4}n^{1+1/k}\log^3 n)$ & $O(k)$ & $(2k-1)(1+\epsilon)$ & \cite{BK13} & unweighted\\
			\hline
			vertex & $2$ & $O(n^2\log^3 n)$ & $O(\log n)$ & $1$ & \cite{DP09} & directed\\
			\hline
			vertex & $2$ & $O(n^2)$ & $O(1)$ & reachability & \cite{Cho16} & directed\\
			\hline
			edge & $d$ & $O(m)$ & $O(d\log^{2.5} n\log\log n)$ & connectivity & \cite{PT07} & \\
			\hline
			edge & $d$ & $O(m\log\log n)$ & $O(d^2\log\log n)$ & connectivity & \cite{DP10} & \\
			\hline
			edge & $d$ & $O(m)$ & $O(d^2\log^{\epsilon}n)$ & connectivity & \cite{DP10} & \\
			\hline
			edge & $d$ & $O(n\log^2 n)$ & $O(d\log d\log\log n)$ & connectivity & \cite{DP17} & \\
			\hline
			edge & $d$ & $O\mleft(dkn^{1+1/k}\log(nW)\mright)$ & $O\mleft(d\log^2 n\log\log n\log\log (nW)\mright)$ & $(8k-2)(d+1)$ & \cite{CLPR12} &\\
			\hline
			edge & $d$ & $O(dn^2\log^2 n)$ & $O(d^2\log^2 n)$ & $2d+1$ & \cite{BGLP16} & \\
			\hline
			edge & $d$ & $O\mleft(n^3(\log n/\epsilon)^d(\log W/\log n)\mright)$ & $O(d^4\log\log W)$ & $1+\epsilon$ & \cite{CCFK17} & \\
			\hline
			edge & $d$ & $O\mleft(n^2(\log n/\epsilon)^d\cdot d\log W\mright)$ & $O(d^5\log n\log\log W)$ & $1+\epsilon$ & \cite{CCFK17} & \\
			\hline
			vertex & $d$ & $O\mleft(d^{1-2/c}mn^{1/c-1/(c\log(2d))}\log^2 n\mright)$ & $O\mleft(d^{2c+4}\log^2 n\log\log n\mright)$ & connectivity & \cite{DP10} & $c\ge 1$\\
			\hline
			vertex & $d$ & $O(m\log^6 n)$ & $O\mleft(d^2\log d\log^2 n\log\log n\mright)$ & connectivity & \cite{DP17} & \\
			\hline
			vertex & $d$ & $O(n/r)^{d+1}\frac{\sqrt{ndr}}{d!}+O(n\log^2 n)$ & $O(d\sqrt{r}\log^2 n)$ & $1$ & \cite{CMT19} & planar; $r\le n/d$\\
			\hline
			edge & $d$ & $O(Wn^{2+\mu}\log n)$ & $\tilde{O}(Wn^{2-\mu}d^2+Wnd^{\omega})$ & $1$ & \cite{BS19} & directed; $\mu\in[0,1]$\\
			\hline
			edge & $d$ & $O(n^2\log n)$ & $O(d^{\omega})$ & reachability & \cite{BS19} & directed\\
			\hline
		\end{tabular}
		\caption[]{previous results\tablefootnote{Unless stated in ``remark'' column, all data structures work on weighted undirected graphs.}}\label{fig:other_results}
		
		\ 
		
		\begin{tabular}[b]{|c|c|c|c|c|c|c|c|c|c|c|c|c|c|c|}
			\hline
			failure & \# fault & size & query time & stretch & ref & remarks\\
			\hline
			vertex & $d$ & $n^{3+1/c}\cdot O\mleft(\epsilon^{-1}\frac{\log^2 n\log(nW)}{\log d}\mright)^d$ & $O\mleft(\frac{d^{2c+6}\log^{10}n}{\log^2d}\mright)$ & $1+\epsilon$ & this paper & $c\ge 1$\\
			\hline
			vertex & $d$ & $n^{3+1/c}\frac{\log W}{\log n}\cdot O\mleft(\epsilon^{-1}\frac{\log^3 n}{\log d}\mright)^d$ & $O\mleft(\frac{d^{2c+6}\log^{10}n\log\log W}{\log^2d}\mright)$ & $1+\epsilon$ & this paper & $c\ge 1$\\
			\hline
			vertex & $d$ & $n^{2+1/c}\frac{\log d}{\log n}\cdot O\mleft(\epsilon^{-1}\frac{\log^2 n\log(nW)}{\log d}\mright)^{d+1}$ & $O\mleft(\epsilon^{-1}\frac{d^{2c+6}\log^{11}n\log(nW)}{\log^2d}\mright)$ & $1+\epsilon$ & this paper & $c\ge 1$\\
			\hline
			vertex & $d$ & $n^{2+1/c}\frac{\log W\log d}{\log^2 n}\cdot O\mleft(\epsilon^{-1}\frac{\log^3n}{\log d}\mright)^{d+1}$ & $O\mleft(\epsilon^{-1}\frac{d^{2c+6}\log^{12}n\log\log W}{\log^2d}\mright)$ & $1+\epsilon$ & this paper & $c\ge 1$\\
			\hline
            vertex & $d$ & $O\mleft(\frac{n^{2+1/c}d^3\log^{16}n\log(nW)}{\log^5d}\mright)$ & $O\mleft(\frac{d^{2c+9}\alpha^2(n)\log^{20}n\log\log(nW)}{\log^5d}\mright)$ & $O\mleft(\frac{d^{c+2}\log^6n}{\log d}\mright)$ & this paper & $c\ge 1$\\
			\hline
		\end{tabular}
		\caption{our results}\label{fig:our-results}
	\end{sidewaystable}
\end{document}